\definecolor{red1}{rgb}{1,0.9,0.9}
\definecolor{blue1}{rgb}{0.9,0.9,1}
\definecolor{green1}{rgb}{0.9,1,0.9}
\definecolor{yellow1}{rgb}{1,1,0.9}
\definecolor{yellow2}{rgb}{1,1,0.8}
\newtheorem{thm}{Theorem}
\newtheorem{coro}[thm]{Corollary}
\newtheorem{conj}[thm]{Conjecture}
\newtheorem{lemma}[thm]{Lemma}
\newtheorem{propo}[thm]{Proposition}
\theoremstyle{definition}
\let\paragraph\subsection
\title{Eulerian edge refinements, geodesics, billiards and sphere coloring}
\author{Oliver Knill}
\date{August 21, 2018}
\address{
        Department of Mathematics \\
        Harvard University \\
        Cambridge, MA, 02138
        }
\subjclass{Primary: 05C45}
\keywords{Graph theory, Geodesic flow, Billiard, Eulerian, Coloring}
\begin{document}
\maketitle

\begin{abstract}
A finite simple graph is called a 2-graph if all of its unit spheres $S(x)$ are cyclic 
graphs of length $4$ or larger. A 2-graph $G$ is Eulerian if all vertex degrees of $G$ are even. 
An edge refinement of a graph adds a new vertex $c$, replaces an edge $(a,b)$ by two edges $(a,c),(c,b)$ 
and connects the newly added vertex $c$ with the vertices $u,v$ in $S(a) \cap S(b)$. We 
prove here two theorems. Theorem I assures
that every $2$-graph can be rendered Eulerian by successive edge refinements. 
The construction is explicit using geodesic cutting. After the refinement, we have
an Eulerian 2-graph that carries a natural geodesic flow. We construct some ergodic ones. 
A 2-graph with boundary is finite simple graph for which every unit sphere is either 
a path graph $P_n$ with $n \geq 3$ vertices or a cyclic graph $C_n$ with $n \geq 4$ vertices.
2-balls are special 2-graphs are simply connected with a circular boundary.
Theorem II tells that every 2-ball can be edge refined using interior edges
to become Eulerian if and only if its boundary has length 
divisible by 3. Also this is constructive. 
A billiard map is defined already if all interior vertices
have even degree. We will construct some ergodic billiards in 2-balls,
where the geodesics bouncing off at the boundary symmetrically and 
which visit every interior edge exactly once. A consequence of Theorem II is that 
an Eulerian billiard which is ergodic must have a boundary length that is divisible by $3$. 
We also construct other 2-graphs like tori with ergodic geodesic flows.
This clashes with experience in the continuum, where tori have periodic points
minimizing the length in homology classes of paths. 
Ergodic Eulerian 2-graphs or billiards are exciting because they satisfy a 
Hopf-Rynov result: there exists a geodesic connection between any two vertices. We get so
unique canonical metric associated to any ergodic Eulerian graph. It is
non-local in the sense that two adjacent vertices can have large distance.
\end{abstract}

\section{The results}

\paragraph{}
A finite simple graph for which every unit sphere is a cyclic graph with four
or more vertices is a discrete {\bf two-dimensional manifold}. We call such a discrete
surface a {\bf 2-graph}. Examples are 4-connected triangulations of a 2-manifold.
The Euclidean realization of the simplicial Whitney complex of such a graph
is a piecewise linear 2-manifold homeomorphic to a smooth compact 
2-dimensional manifold so that the topology is well understood. 
If we allow unit spheres also to include path graphs, then the set of vertices $v$
for which $S(x)$ is a path graph form the {\bf boundary} and is a finite union of cycles.
A {\bf $2$-graph $G$ with boundary} is then a {\bf discrete 2-manifold with boundary} and
the geometric realization of the Whitney complex of such a graph $G$ is homeomorphic 
to a smooth $2$-manifold with boundary. 

\paragraph{}
Odd degree vertices in a surface are interesting because these vertices are obstructions for 
$G$ being Eulerian and in the simply connected case for $G$ having {\bf chromatic number} $3$. 
By the Euler-Hierholzer theorem, graphs for
which all vertex degrees are even are the same than graphs which feature an Eulerian circuit, 
a closed path visiting every edge in $G$ exactly once. An other motivation is
the possibility to define a {\bf geodesic flow} or {\bf billiard dynamics}: if a vertex 
is even, then there is a natural way to continue an ``incoming ray" to propagate to an 
``outgoing ray". For discrete manifolds with boundary, we need both for coloring reasons 
as well as for billiard dynamics purposes only the interior 
vertices to have even degree as if a path hits an odd degree boundary vertex straight on, 
we just revert the path. We can now ask under which conditions it is possible that a 
$2$-graph with boundary can be edge refined using interior edges to become Eulerian. 

\paragraph{}
Our first result applies to all discrete surfaces without boundary:

\begin{thm}[Theorem I]
\label{1}
For every 2-graph there is a sequence of edge refinements which renders the graph Eulerian.
\end{thm}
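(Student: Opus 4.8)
The plan is to convert the statement into a parity bookkeeping problem and then solve it by annihilating odd-degree vertices in pairs along geodesics. First I would record the precise local effect of one edge refinement. Refining $(a,b)$ inserts a vertex $c$ of degree $4$, leaves $\deg(a)$ and $\deg(b)$ unchanged (each merely trades the neighbor $b$, resp.\ $a$, for $c$), and raises by one the degrees of the two vertices $u,v \in S(a)\cap S(b)$, the apexes of the two triangles along the edge. Hence a refinement flips the parity of exactly the pair $\{u,v\}$, creates one even-degree vertex, and preserves the parity of every other degree. In particular the set $O$ of odd-degree vertices always has even cardinality by the handshake lemma, so it suffices to remove odd vertices two at a time and iterate; finiteness of $O$ then guarantees termination in an Eulerian $2$-graph.

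Next I would treat the vertices of $O$ as movable defects and try to push two of them together along a geodesic and cancel them. The elementary move is a single refinement, which flips an apex pair, i.e.\ two vertices lying at distance two in some unit sphere. A useful composite gadget refines two consecutive edges $(a_1,a_2),(a_2,a_3)$ of a single sphere $S(w)$: the shared apex $w$ is then flipped twice and the two inserted vertices have degree $4$, so the net effect is to flip exactly the two outer apexes while all auxiliary changes cancel. Chaining such gadgets along a geodesic from one defect to another, arranged so that every intermediate vertex is flipped an even number of times, should telescope to a net flip of precisely the two endpoints, turning both even, preserving all other parities, and leaving every inserted vertex of even degree. A final refinement annihilates the pair once the two defects have been driven into a common diamond.

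The hard part will be the parity control in this last step. I must guarantee that each gadget flips exactly the intended vertices and that no inserted vertex is left odd, which forces a careful accounting of how the earlier refinements alter the apexes of the later ones, since removing an edge changes which vertices are common neighbors downstream. A genuine subtlety is that the flip relation of a single refinement, namely ``$u$ and $v$ are apexes of a common edge,'' need not connect the graph: for the octahedron it splits into three classes, so static refinements alone cannot reach an arbitrary pair of defects. The resolution is exactly the geodesic cut. Refining an edge $(a,b)$ produces the vertex $c$ together with edges $(c,u),(c,v)$, and then the common neighbors of $c$ and $u$ are precisely $a$ and $b$, so $(c,u)$ is a new edge whose apexes are $a,b$. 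Thus the vertices created by earlier refinements dynamically bridge the previously separated classes, and I would lean on connectivity of the surface to route a geodesic between any two defects and on this bridging mechanism to move a defect across the boundaries of those classes.

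In summary, I would build the argument from the local degree computation, the even-parity count of $O$, a family of cancelling refinement gadgets that displace a single defect along a geodesic, and the bridging mechanism supplied by newly created vertices, with the delicate telescoping and the ``no new odd vertex'' bookkeeping being the principal obstacle to verify in full.
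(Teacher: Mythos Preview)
Your plan is essentially the paper's: pair-annihilate odd-degree vertices by running a geodesic between them and refining along the way. Your local parity calculation, the handshake count, and the octahedron obstruction are all correct, and your bridging observation---that the new edge $(c,u)$ has apexes $\{a,b\}$---is exactly the mechanism the paper exploits.

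Where the paper is sharper is in the bookkeeping you flag as the ``principal obstacle.'' Instead of pre-designing composite gadgets that flip only outer apexes, the paper simply \emph{lets the geodesic cut as it goes}. One walks the geodesic; upon reaching an odd vertex $v$ from $u$, the two would-be antipodes $a,b$ of $u$ in the odd cycle $S(v)$ are adjacent, so one refines the edge $(a,b)$: this makes $v$ even and inserts a degree-$4$ vertex $w$. The key point is that the geodesic then continues $v\to w\to v'$, where $v'$ is precisely the \emph{other} apex of $(a,b)$, i.e.\ the one vertex whose parity was collaterally flipped. Any damage therefore lands on the very next vertex of the path and is repaired (if needed) by cutting again at the next step; the telescoping is automatic and no inserted vertex is ever left odd. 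The process stops when the path closes or hits a second pre-existing odd vertex, and a short pairing argument on $S(x)$ rules out the degenerate case of an odd start vertex whose every outgoing geodesic returns without ever cutting. Your gadget route would also work, but it transports the defect along a more indirect path (from one outer apex $w_1$ to another $w_2$) and then still needs the connectivity/bridging argument you sketch; the paper's self-healing geodesic sidesteps all of that.
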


\paragraph{}
The proof is constructive and the edge refinement can be done in polynomial time. 
We call the proof the {\bf ``geodesic self healing algorithm"} as one can just let a geodesics
run and if reaching an odd degree vertex point, let ``the geodesic do the edge cutting".

For discrete surfaces with boundary, we have the next theorem. We say that an edge $e=(a,b)$ is 
an {\bf interior edge}, if not both vertices $a,b$ are in the boundary of $G$. An interior edge
can however hit the boundary in one vertex. 

\begin{thm}[Theorem II]
\label{2}
Let $G$ be any 2-ball. There exists a sequence of edge refinement steps using interior edges of $G$
which renders the graph Eulerian if and only if the boundary circle of $G$ has a length 
which is divisible by $3$. 
\end{thm}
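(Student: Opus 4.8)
The plan is to prove the two implications separately, with the forward (necessity) direction resting on a three-coloring argument and the backward (sufficiency, constructive) direction on a geodesic edge-cutting scheme in the spirit of the self-healing algorithm of Theorem I.

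For necessity I first note that interior edge refinements preserve the boundary cycle: the replaced edge and the added vertex $c$ are interior, and the two boundary neighbors of any boundary vertex are left intact, so the length $L$ of the boundary circle is an invariant of the entire refinement process. Hence it suffices to prove that any Eulerian 2-ball $G'$ — all of whose vertices, interior and boundary, have even degree — satisfies $3 \mid L$. Since every interior vertex of $G'$ has even degree, $G'$ admits a proper vertex coloring by three colors $\mathbb{Z}_3$ (the classical criterion for three-colorability of a triangulated disk). Fix such a coloring $c$. At a boundary vertex $v$ the unit sphere is a path $P_{d(v)}$ whose successive vertices are adjacent and all differ from $c(v)$, so their colors strictly alternate between the two colors different from $c(v)$; because $d(v)$ is even, the two endpoints of this path — the two boundary neighbors of $v$ — receive different colors. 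Thus at every boundary vertex the two incident boundary edges carry color steps of the same sign in $\mathbb{Z}_3$, so traversing the boundary cycle the color advances monotonically by $\pm 1$ at each step. Returning to the start after $L$ steps forces $L \equiv 0 \pmod 3$, which is the claim.

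For sufficiency I assume $3 \mid L$ and build the refinements explicitly. The first observation is that refining an interior edge $(a,b)$ with $S(a)\cap S(b)=\{u,v\}$ adds a degree-four (hence even) interior vertex $c$ and changes degrees only by flipping the parities of the two apex vertices $u$ and $v$, while leaving $a$, $b$, and $L$ unchanged. So on the parity vector of $G$ a refinement is exactly the move ``flip the two apexes of an interior edge''. Refining the successive edges crossed by a geodesic then telescopes: the apex flips cancel in pairs along the interior of the geodesic and survive only at its two endpoints, so any two odd-degree vertices joined by a suitable geodesic can be cured simultaneously. The construction is therefore to let geodesics run from odd-degree vertex to odd-degree vertex and let each geodesic perform its edge cutting, pairing off and annihilating the defects two at a time.

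The main obstacle is this backward direction, and specifically the interaction between the pairing and the boundary. Because a single refinement may flip one interior and one boundary apex, the interior and boundary defects cannot be treated independently, and one must show that all remaining odd-degree boundary vertices can be removed using geodesics that cut only interior edges. This is exactly where $3 \mid L$ enters: the monotone boundary coloring forced in the necessity argument shows that the only global obstruction to annihilating the boundary defects is the residue of $L$ modulo $3$, so when $3 \mid L$ the defects can always be paired consistently and cancelled. The remaining technical points — that each geodesic cut keeps $G$ a valid 2-ball, that the edges used are genuinely interior, and that the self-healing process terminates rather than regenerating defects indefinitely — are what require care, and I expect the termination argument together with the global mod-$3$ bookkeeping on the boundary to be the heart of the proof.
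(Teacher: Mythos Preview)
Your necessity argument is correct and essentially the same as the paper's: an Eulerian 2-ball is 3-colorable, the even boundary degrees force the boundary coloring to be monotone in $\mathbb{Z}_3$, hence $3\mid L$. Good.

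The sufficiency direction, however, has a genuine gap. You propose to run the geodesic self-healing algorithm of Theorem~I inside the ball, pairing odd vertices with geodesic cuts. The paper explicitly remarks that this does \emph{not} work directly: a geodesic that reaches the boundary may be forced to cut a boundary edge, which is forbidden. You flag this as a ``technical point'' but offer no mechanism to prevent it, and your appeal to the necessity argument (``the only global obstruction \dots\ is the residue of $L$ modulo $3$'') is circular --- necessity shows $3\mid L$ is required, not that it suffices to make the geodesic scheme terminate without touching the boundary.

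The paper avoids the boundary problem altogether by a different route. Instead of letting geodesics run, it uses two local moves: (A) a sequence of cuts inside a wheel that swaps the parities of an adjacent odd/even pair, allowing one to \emph{transport} an odd vertex step by step; and (B) a local reduction that, given three odd vertices in a common unit disk, kills two of them. Repeating (A) and (B) leaves either zero odd vertices or exactly two, which are then pushed to the boundary into a common half-wheel. A parity/necessity argument shows these last two cannot be adjacent, and then an explicit finite case analysis on half-wheels (possibly with one or two extra triangles glued on to adjust the local boundary length modulo $3$) finishes the job. The hypothesis $3\mid L$ enters only at this final local step, not as ``global bookkeeping'' along a geodesic. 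If you want to salvage your approach you would need a concrete rule for rerouting or reflecting a healing geodesic at the boundary without cutting boundary edges, together with a termination argument; absent that, the sufficiency half is not yet a proof.
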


\paragraph{}
Also this proof is constructive. The necessity of the length being a multiple of $3$
is easy to see but for the sufficiency, we have to provide a cutting algorithm which 
renders the graph 3-colorable with a periodic coloring of the boundary. Let us in the
next paragraph give an argument which convinces why Theorem II is true. It is a 
discrete mean curvature flow procedure which takes a ball and deforms it to a point. 
It is not the proof we give here however. 

\paragraph{}
First cut up edges reaching the boundary if necessary in order that all edge degrees of
the boundary points are even. The  mod 3 condition assures that we can color the boundary periodically
with three colors. The evenness of the degree at the boundary points assures that the 
induced coloring of the triangles attached to the boundary lead to no coloring conflict on the
other side. This gives the inner boundary of these triangles a 3-periodic coloring so that
also the inner boundary has a length which is a multiple of 3. We can  continue like that
taking care of the next layer. This reduction works until different parts of the boundary 
start to collide. There are "singularities" of this ``wave front dynamics" similarly as with 
caustics in differential geometry as different wave points come together. But we can using
double edge refinements refine a graph locally at places where we start to get close to slow
down the wave in these areas. This``curvature flow" can be continued with enough padding 
near parts of the boundary with negative or zero curvature. If guided well, the wave hits 
the final stage of a wheel graph with boundary length divisible by $3$. 
Since this argument needs to be fleshed out and since we did not 
implement this ``discrete mean curvature flow" it certainly tells why the result is true. 

\paragraph{}
For the proof given here, we therefore use an other idea and use gather the odd 
degree vertices together, annihilate until we have only 2 left close to each other in a common wheel. Then we can use a local
procedure to fix and annihilate the last two. This strategy has been implemented in a computer
and works pretty fast. 

\paragraph{}
So far, we only have covered the simply connected case of a 2-ball. 
There are examples of annuli with two boundaries, where both boundaries have length not divisible by 
3 but where nevertheless, we can edge refine. An example is the discrete annulus
$(K_2 \times C_4)_1$. The periodic 3 condition therefore does not generalize to more general topologies 
and there might be no simple counting conditions for the boundary in general which are equivalent 
to be able to edge refine the graph. 

\section{The Euler puzzle}

\paragraph{}
The problem to find a sequence of edge refinement steps which renders a graph Eulerian 
is a {\bf puzzle}. We call it the {\bf Euler puzzle}. It can be played with any finite simple graph. 
A {\bf move} in this game is an edge refinement step: take an edge $e=(a,b)$, place
a new vertex $c$ in the middle and connect it to all points in the intersection 
$S(a) \cap S(b)$ of the unit spheres of $a$ and $b$. 
The goal is to apply a sequence of edge refinement steps to end up with
an Eulerian graph. For which graphs can we win the game? 
For a one-dimensional graph, the game is only winnable if it is already
in a winning state from the beginning: the reason is that any given vertex keeps the vertex 
cardinality fixed. In this paper we show that for two-dimensional graphs, the game can always be won
if there is no boundary and that in the case of the boundary, the length of the boundary matters. 

\paragraph{}
The game can become challenging for larger graphs, even with the strategy given.
When playing the game on paper, this solitary game is ``like trying to solve a Rubik's 
cube that is fighting back"\footnote{quoting ``Q", played by Ben Whishaw in Bond movie ``Skyfall"},
the reason being that every refinement move increases the number of vertices and
edges in the graph and so potentially makes it harder for the player to win. We certainly felt
like this when playing the game, even if assisted by the computer. 
The interest origins of course in the 4-color theorem
as in 3 dimensions ``winning of the game" means ``coloring a planar graph". Playing the
game in a 3-graph locally is like fixing faces of the Rubik cube and requires fixing spheres and
2-disks. But while playing the repeated cutting complicates the game more and more for
future cuttings. The theorems covered here cover both cases. 

\paragraph{}
Theorem I be reformulated by saying:

\begin{coro}
The Euler puzzle is winnable for any 2-graph without boundary.
\end{coro}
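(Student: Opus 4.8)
The plan is to observe that this corollary is simply Theorem~\ref{1} restated in the vocabulary of the Euler puzzle, so the only real work is to check that the two formulations coincide word for word and that the guaranteed sequence of refinements constitutes a \emph{legal} play of the game. First I would unpack the phrase ``the Euler puzzle is winnable'': by the definition given above, a \textbf{move} is an edge refinement step, and to \emph{win} means to reach, after finitely many moves, a state in which the graph is Eulerian. Hence ``the puzzle is winnable for a 2-graph $G$'' means precisely that there is a finite sequence of edge refinements transforming $G$ into an Eulerian graph, which is exactly the conclusion of Theorem~\ref{1}.

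The one point deserving a remark is that the sequence produced by Theorem~\ref{1} is a sequence of \emph{legal positions}, i.e. every intermediate object is again a valid game state and every prescribed move remains applicable. For this I would verify that an edge refinement of a 2-graph is again a 2-graph: in a 2-graph the edge $(a,b)$ lies in exactly two triangles, so $S(a)\cap S(b)=\{u,v\}$ consists of two vertices; after replacing $(a,b)$ by the two edges $(a,c),(c,b)$ and joining $c$ to $u$ and $v$, the sphere $S(c)$ is the $4$-cycle through $a,u,b,v$, while the spheres $S(a),S(b),S(u),S(v)$ are each modified only by inserting or rerouting through $c$ and so remain cyclic graphs of length at least $4$. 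Thus the refinement operation maps the class of 2-graphs to itself, and since this operation \emph{is} the puzzle move, the moves in the Theorem~I sequence can be applied one after another without ever leaving the set of admissible positions.

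I do not anticipate any genuine obstacle, since Theorem~\ref{1} supplies all the mathematical content and the corollary is a translation between two descriptions of the same process. The only care required is the bookkeeping above: confirming that ``move'' and ``edge refinement'' name the same operation and that the 2-graph property is preserved along the play, so that ``winning the Euler puzzle'' and ``rendering the graph Eulerian by edge refinements'' are literally the same statement for a 2-graph without boundary.
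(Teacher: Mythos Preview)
Your proposal is correct and matches the paper's approach exactly: the paper introduces this corollary with the sentence ``Theorem I can be reformulated by saying'' and gives no separate argument, so the corollary is intended as nothing more than a restatement of Theorem~\ref{1} in puzzle language. Your extra check that an edge refinement of a 2-graph is again a 2-graph is a welcome bit of care that the paper leaves implicit.
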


\paragraph{}
The game can also always be won for surfaces $M$ with boundary 
if one allows cutting the boundary edges. But as we are interested in coloring
the boundary $\delta M$ (especially in higher dimensions) \cite{knillgraphcoloring},
we {\bf prefer not to cut the boundary edges} and keep a {\bf boundary condition}. 
In that case, the puzzle has a {\bf boundary constraint} which is formulated
in the following corollary: 

\begin{coro}
The Euler puzzle on a $2$-ball is winnable with fixed boundary
if and only the length of the boundary is divisible by $3$. 
\end{coro}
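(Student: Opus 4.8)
The plan is to obtain this corollary as an immediate consequence of Theorem \ref{2}, of which it is merely a restatement in the vocabulary of the Euler puzzle. The only thing to do is to set up the dictionary between the game terminology and the hypotheses of Theorem \ref{2}; once the words are matched, both implications fall out at once, and no new mathematical content is required.

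First I would translate each piece of terminology. A \emph{move} in the puzzle is by definition a single edge refinement step, so a finite sequence of moves is exactly a finite sequence of edge refinements. The puzzle is \emph{winnable} precisely when some such sequence ends in an Eulerian graph, which is the conclusion of Theorem \ref{2}. Finally, playing \emph{with fixed boundary} means declining to cut the boundary edges; following the convention fixed just before Theorem \ref{2}, an edge is a boundary edge exactly when both of its endpoints lie on the boundary, equivalently when it is not interior. Thus ``winnable with fixed boundary'' is literally the statement that some sequence of \emph{interior} edge refinements renders $G$ Eulerian, which is the existence clause of Theorem \ref{2}.

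I would then confirm the one point that makes ``fixed boundary'' a coherent notion, namely that the admissible moves leave the boundary circle, and in particular its length, unchanged. If $e=(a,b)$ is an interior edge of a 2-ball then at least one endpoint is interior, so $e$ lies on two triangles and $S(a)\cap S(b)=\{u,v\}$ with $u\neq v$. The new vertex $c$ is joined only to $a,b,u,v$, and the two triangles on $e$ supply the four edges making $S(c)$ the cycle $C_4$; hence $c$ is an interior vertex and $e$ is not an edge of the boundary cycle, so no boundary edge is created or destroyed. Consequently the boundary circle is preserved by every admissible move, and ``the length of the boundary'' is a well-defined invariant throughout the whole play.

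With the dictionary in place the corollary follows immediately in both directions: if the boundary length is divisible by $3$, Theorem \ref{2} yields a sequence of interior edge refinements making $G$ Eulerian, so the puzzle is winnable with fixed boundary; conversely any winning play with fixed boundary is precisely such a sequence, so Theorem \ref{2} forces the boundary length to be a multiple of $3$. The substantive obstacle---the actual construction of the refinements and the necessity of the mod-$3$ condition---has already been overcome in Theorem \ref{2}; here the only work lies in verifying the identification of ``cutting a boundary edge'' with ``refining a non-interior edge'' and in checking that admissible moves fix the boundary.
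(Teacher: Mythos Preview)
Your proposal is correct and matches the paper's treatment: the paper gives no separate proof of this corollary, presenting it simply as a restatement of Theorem~\ref{2} in the language of the Euler puzzle. Your extra verification that interior edge refinements preserve the boundary cycle (so that ``fixed boundary'' is a well-defined constraint and the boundary length is invariant under play) is a welcome bit of care that the paper leaves implicit.
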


\paragraph{}
We can move the odd degree vertices to the boundary by playing the game but
can not necessarily remove all the odd degree vertices on the boundary. 
If the length of the boundary is divisible by $3$ this is possible. As the
vertex degree $d(x)$ of boundary points $x$ is now in the set $\{ 4,6,8, \dots \}$, the
boundary curvature $1-d/3$ is negative at every point. This means that
for a refined graph, there are lots of interior points, where the curvature $1-d/6$ is
positive.  The procedure ``Edge sub-division moves negative curvature to the 
boundary and positive curvature into the interior." 

\paragraph{}
If we want to bring in an allegory from relativity, where curvature 
is associated with mass, the edge refinement ``generates mass away from the boundary"
as the boundary radiates mass away.  (This would be nice to be understood also in higher dimensions.
For three dimensional graphs, the edge refinement seems to transport  discrete ``Ricci type curvature"
which is a quantity for edges $(a,b)$ and given by $R(e) = 1-d/6$, where $d$ is the cardinality
of $S(a) \cap S(b)$. The analogue Eulerian condition in three dimensions is that $d$ is even
at every point. )

\paragraph{}
Edge refinement is differential geometrically different from Barycentric subdivision, as the
later which keeps the curvature balance essentially constant in neighborhoods. 
Whether one should to see the mass transport from the boundary to the interior this as a 
manifestation of some cosmological principles remains to be seen, as it is pure speculation. 
It is certainly mathematically interesting. 
But it is important that this phenomenon in two dimensions only occurs if there is some boundary. 
Without boundary, the total curvature of the surface, the sum $\sum_x 1-d(x)/6$ is 
constant by discrete Gauss-Bonnet. 

\section{Particles}

\paragraph{}
A more topological aspect comes in if we think of the graph as physical space and a 
vertex with {\bf odd degree} as a {\bf particle}, which manifests in some sort of 
{\bf defect} or {\bf anomaly}. It is an interesting phenomenon that
there are particle pairs in a disc which can not be annihilated within the disc. 
In some sense, there are {\bf two type of particles} and only particles of the same type can 
be removed simultaneously. This is related to the fact that we can not
realize a 5-7 degree configuration on an otherwise flat torus (\cite{IKRSS,Izmestiev2013}),
the reason being that the two particles are of different type and can not annihilate
each other. Let us reformulate the result in the following way:

\begin{coro}
Given a simple closed path on a 2-sphere of length which is not a multiple of 3. Assume that
for all vertices on that curve, the vertex degree within the full graph is even. 
If there is a particle on one side of the curve (and so a particle pair) 
then there must be a particle (and so a particle pair) the other side of the curve.
\end{coro}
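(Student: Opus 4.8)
The plan is to read this statement as the particle-language reformulation of Theorem II and to deduce it from the puzzle obstruction. First I would invoke the discrete Jordan curve theorem for a $2$-sphere (a $2$-graph homeomorphic to the sphere): the simple closed curve $\gamma$ separates $G$ into two $2$-balls $D_1$ and $D_2$, each having $\gamma$ as its boundary circle of length $L$ with $3 \nmid L$. Since every vertex of $\gamma$ has even degree in $G$, no particle sits on $\gamma$, so every odd-degree vertex lies in the open interior of $D_1$ or of $D_2$. It then suffices to prove the contrapositive: if one side, say $D_2$, is free of particles, then so is $D_1$. Equivalently, particles can never occupy exactly one side.

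The engine of the argument is the link between particles and $3$-colorability. By the classical fact that a triangulated disk is vertex $3$-colorable exactly when all of its interior vertices have even degree, a particle-free side is $3$-colorable and induces a proper $3$-coloring of the cycle $\gamma$. The even-degree hypothesis on $\gamma$ is what lets the two sides communicate: around each boundary vertex $x$ the link $S_G(x)$ is an even cycle, so the forced two-colour alternation of the neighbours of $x$ closes up, and the coloring received from $D_2$ propagates uniquely across $\gamma$ into the first interior layer of $D_1$. I would then attempt to package this into a mod-$3$ quantity carried by $\gamma$, a ``color type'' measuring the obstruction to continuing the coloring inward, which is additive over the particles enclosed and which vanishes precisely when the enclosed side is particle-free. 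A particle carries nonzero type, and the refinement moves of the Euler puzzle should conserve total type; this is the intended precise content of the remark that there are two kinds of particles and only like ones cancel.

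To close the argument I would feed this into Theorem II. The intended bridge is that a side carrying an unpaired particle behaves like a $2$-ball one is trying to heal rel its fixed boundary: annihilating the particle by interior refinements of $D_1$ alone, while keeping $\gamma$ untouched, is exactly solving the Euler puzzle on the $2$-ball $D_1$ with fixed boundary, which by Theorem II is possible only when $3 \mid L$. Since $3 \nmid L$, the particle's annihilating partner cannot be produced inside $D_1$ and must lie across $\gamma$ in $D_2$, giving a particle, and hence a pair, on the other side as claimed.

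The main obstacle, and the step needing the most care, is making this bridge and the ``color type'' rigorous while matching two different bookkeeping systems: Theorem II speaks about the intrinsic vertex degrees of a standalone ball, whereas particles are defined by degrees in the ambient $G$, and at a boundary vertex these differ by the split $d_G(x) = d_{D_1}(x) + d_{D_2}(x) - 2$. I must check that the mod-$3$ invariant is genuinely pinned by $L \bmod 3$ and is insensitive to this boundary bookkeeping, and I must confirm that the argument correctly permits the benign case where both sides are particle-free even though $3 \nmid L$, as for the equator of the octahedron, so that the contradiction is triggered only by the actual presence of an unpaired particle and not by $L \bmod 3$ alone.
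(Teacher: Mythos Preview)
The paper does not give a separate proof of this corollary; it is stated as a ``reformulation'' of the surrounding discussion and immediately followed by references to Jendrol--Jucovi\v{c}, Fisk, and Izmestiev. So there is no detailed argument in the paper to compare yours against, and your outline is broadly in the spirit of what the paper gestures at.

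That said, the bridge to Theorem~II in your third paragraph does not hold as written, and this is a genuine gap rather than a bookkeeping detail. You assert that annihilating the interior particles of $D_1$ by interior refinements ``is exactly solving the Euler puzzle on the $2$-ball $D_1$ with fixed boundary.'' It is not. Theorem~II's puzzle asks that \emph{every} vertex of $D_1$, boundary included, acquire even $D_1$-degree; removing the $G$-particles asks only that interior $D_1$-degrees become even while the $G$-parities on $\gamma$ are preserved. From $d_G(v)=d_{D_1}(v)+d_{D_2}(v)-2$ one gets $d_{D_1}(v)\equiv d_{D_2}(v)\pmod 2$ on $\gamma$. Since $3\nmid L$, part~(i) of Theorem~II forces the unchanged $D_2$ to be non-Eulerian, so some $v\in\gamma$ has odd $D_2$-degree, hence odd $D_1$-degree, and must \emph{remain} of odd $D_1$-degree if $d_G(v)$ is to stay even. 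Thus even complete success at your task leaves $D_1$ non-Eulerian, and the $3\mid L$ obstruction of Theorem~II simply does not bite. Your own octahedron example illustrates the same decoupling from the other side: both halves are non-Eulerian $2$-balls with $L=4$, yet $G$ has no particles at all.

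What would close the argument is precisely the ``color type'' you allude to but never define: Fisk's winding invariant. From the $3$-colouring of the particle-free side record $\epsilon_i=\pm1$ along $\gamma$ according as the colour steps by $+1$ or $-1\pmod 3$; the sign-change pattern of $(\epsilon_i)$ is governed by the parities $d_{D_2}(v_i)\bmod 2$, hence equally by $d_{D_1}(v_i)\bmod 2$ via the congruence above, and $\sum_i\epsilon_i$ computes three times the signed triangle count of the coloured side. Comparing this invariant across $\gamma$ is what actually forces interior particles to occur on both sides or on neither. Until that invariant is written down and its behaviour checked, your third paragraph is a hope rather than a reduction; you diagnose this correctly in your final paragraph, but the diagnosis is not a cure.
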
 

\paragraph{}
Such phenomena were observed by Jendrol and Jucovic in 1972 \cite{JendrolJucovic72} 
(see also \cite{Jendrol1975}, where modulo 3 conditions appear but it seems unrelated of what
we do here) and relate to a theorem
of Eberhard from 1891 \cite{Eberhard1891} which covers the realization of curvature configurations of convex
polyhedra. Nearly regular polyhedra were looked at also in \cite{Crowe1969} who reports that the first
investigations of this type appeared in the 1967 edition of \cite{gruenbaum}.  
A result close to what we do here has been formulated by Fisk \cite{Fisk1978} in a pointed way: 
{\it if there are only two particles on a sphere, they can not be adjacent.}
For more recent extensions of these principles, see \cite{Izmestiev2015}. 
We can show something similar:

\begin{coro} 
Given a $2$-ball $G$ with boundary of length $3k$, it is not possible 
to have two adjacent odd degree vertices on the boundary 
if no other odd degree vertices are present in $G$.
\end{coro}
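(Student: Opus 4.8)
The plan is to reduce the statement to an impossibility for a proper $3$-coloring of the boundary cycle $C_{3k}$, exactly in the spirit of Fisk's theorem quoted above. Assume for contradiction that $G$ is a $2$-ball whose boundary is a cycle of length $3k$, that every interior vertex of $G$ has even degree, and that the only odd-degree vertices are two \emph{adjacent} boundary vertices. Because every interior vertex is even and a $2$-ball is simply connected, $G$ admits a proper vertex $3$-coloring with colors in $\Z/3$: starting from one triangle the color of each newly crossed vertex is forced (it is the unique color absent from the shared edge), and the only candidate monodromy obstruction sits around the link of an interior vertex, where evenness of the degree makes the induced coloring close up; a boundary vertex has a path-link and imposes no closing condition. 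I take this coloring as given and read off what it does along the boundary.

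Write the boundary cyclically as $v_0, v_1, \dots, v_{3k-1}$ with induced colors $c_0, \dots, c_{3k-1}$, and set $s_i = c_{i+1} - c_i \in \{+1,-1\} \subset \Z/3$ with indices mod $3k$; this is well defined since adjacent boundary vertices receive distinct colors. The key local computation is at a boundary vertex $v_i$: its link is a path $P_{d_i}$ running from $v_{i-1}$ to $v_{i+1}$ whose vertices, being successively adjacent and all distinct from $c_i$, must alternate between the two colors different from $c_i$. Hence $c_{i+1} = c_{i-1}$ exactly when $d_i$ is odd, while $\{c_{i-1}, c_i, c_{i+1}\}$ are all distinct exactly when $d_i$ is even. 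Translated into the increments this reads $s_i = -s_{i-1}$ at an odd boundary vertex and $s_i = s_{i-1}$ at an even one: \emph{the color direction flips precisely at the odd-degree boundary vertices.}

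Now I run this around the whole boundary. Telescoping gives $\sum_{i=0}^{3k-1} s_i = c_{3k} - c_0 = 0$ in $\Z/3$. With the two odd vertices present, the flip rule makes $s_i$ constant on each of the two boundary arcs joining them, taking a value $\sigma = \pm1$ on one arc and $-\sigma$ on the other; if the arcs carry $j$ and $3k-j$ edges then $\sum_i s_i = \sigma(3k - 2j)$. The closure condition therefore forces $3k - 2j \equiv 0$, i.e. $j \equiv 0 \pmod 3$. Two adjacent odd vertices have $j = 1$, so $j \not\equiv 0 \pmod 3$, a contradiction; this proves the corollary, and in fact shows the stronger fact that two isolated odd boundary vertices can only occur at a boundary distance divisible by $3$.

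The routine pieces here are the local alternation computation and the telescoping count. The one step deserving genuine care, and which I expect to be the main obstacle to make fully rigorous, is the existence of the global $3$-coloring from interior-evenness alone: one must argue that on the simply connected disk the forced propagation of colors has no global inconsistency, the only possible inconsistencies being the vertex-link monodromies that vanish for even degree. This is the disk analogue of the Heawood and Fisk even-implies-$3$-colorable theorems, and it is exactly where simple connectivity of the $2$-ball enters; once it is in hand, the parity bookkeeping above closes the argument.
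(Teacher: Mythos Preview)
Your argument is correct, but it takes a genuinely different route from the paper's. The paper dispatches the corollary in one line by reducing to Theorem~II: if two adjacent boundary vertices $P,Q$ were the only odd ones, glue a single triangle along the edge $PQ$; this raises each of their degrees by one (making them even), produces a $2$-ball whose every vertex has even degree, and increases the boundary length to $3k+1$, contradicting the necessity direction of Theorem~II.

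Your approach instead bypasses Theorem~II entirely and argues directly with a $3$-coloring. You use the weaker hypothesis ``interior vertices even'' (rather than ``all vertices even'') to obtain a global $3$-coloring of the disk, then run a signed-increment count around the boundary cycle and read off the obstruction $j\equiv 0\pmod 3$ on the boundary distance between the two odd vertices. This buys you more: you recover not just the adjacency case $j=1$ but the stronger statement that two isolated odd boundary vertices must sit at boundary distance divisible by $3$. The paper's method, by contrast, only handles $j=1$ directly (though $j=2$ would follow symmetrically by \emph{removing} the triangle on the length-$2$ arc). The step you flag as needing care---that interior-evenness alone forces $3$-colorability on a simply connected disk---is exactly the right place to focus; it is the same Kempe--Heawood monodromy argument the paper invokes in part~(i) of the proof of Theorem~II, with the observation that boundary links are paths and impose no closing condition, so only interior stars contribute obstructions.
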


\paragraph{}
The reason is that if there was, we could add an other triangle, removing
the oddness of these two vertices and have a boundary length $3k+1$ which is
not divisible by $3$. This contradicts Theorem II. From this follows the 
just mentioned Fisk observation because we could cut from such a sphere
a disk with boundary length $3k$ which has the two vertices as boundary points. 

\paragraph{}
We see that particles come in two flavors and only particles of 
the same type can be combined and destroyed by edge refinement. 
A configuration without particles is then the {\bf vacuum}. Because of the Euler handshake 
formula, the number of particles is even at any time also when restricted to a subgraph. 
By edge refinement we can create or destroy pairs of particles. The question is whether we 
can get rid of all the particles by pair destroying them, possibly after creating other particles 
first. The answer is yes, on any closed 2-graph, but it is no in general, if we have a boundary
as we need the mod 3 condition. 

\paragraph{}
In \cite{KnillEulerian}, on page 21, we state in a lemma that any 2-sphere
can be edge refined to become Eulerian. Theorem I pushes this a bit further as it 
generalizes it from 2-spheres to arbitrary 2-graphs. This had not been clear to us
before as we would have expected that for a different topology, some obstruction might 
develop.  We actually see in Theorem II that there can indeed be obstructions if 
the graph has a boundary. The reason for the interest in this problem 
was that if we ask the same question for a 2-disk with boundary, 
then we can only render the graph Eulerian through interior edge refinements,
if the boundary length is divisible by $3$. This key insight can also be used
to show that every 2-graph with boundary can be refined to be particle-free. 

\begin{coro}
If a disc has a boundary length not divisible by $3$ and no particle are on the boundary,
then there must exist particles in the interior of the disk. 
\end{coro}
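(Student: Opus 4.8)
The plan is to derive this corollary directly from Theorem II by contradiction, using the same packaging trick that the paper uses to derive the adjacent-vertices corollary.

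Here is the final statement and how I would prove it:

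\begin{proof}
The plan is to argue by contradiction using Theorem II. Suppose $G$ is a disc (a $2$-ball) whose boundary circle has length $\ell$ with $\ell \not\equiv 0 \pmod 3$, suppose no particle (odd degree vertex) lies on the boundary, and suppose for contradiction that there are also no particles in the interior. Then $G$ has no odd degree vertices at all, which is to say $G$ is already Eulerian without any refinement being needed.

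The key observation is that once $G$ is Eulerian, \emph{no interior edge refinement is required}: the empty sequence of refinement steps already renders $G$ Eulerian. But Theorem II asserts that an interior edge refinement sequence rendering a $2$-ball Eulerian exists \emph{if and only if} the boundary length is divisible by $3$. Since $G$ is itself already Eulerian, such a sequence trivially exists (namely the empty one), so the forward direction of Theorem II forces $\ell \equiv 0 \pmod 3$. This contradicts the assumption $\ell \not\equiv 0 \pmod 3$.

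Therefore the hypothesis that the interior contains no particles is untenable, and there must exist at least one particle in the interior of the disc. By the Euler handshake argument already noted in the excerpt, the number of odd degree vertices is even, so in fact these interior particles come in at least one pair.

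The main obstacle to watch for is purely one of bookkeeping rather than of substance: I must make sure the logical direction of Theorem II that I invoke is the correct one. Theorem II gives an equivalence, and the direction I need is ``Eulerianizable by interior refinements $\Rightarrow$ boundary length divisible by $3$,'' which is the contrapositive of the necessity half that the excerpt describes as ``easy to see.'' One should also confirm that ``Eulerian'' in Theorem II means all vertex degrees even (including the boundary) and that the empty refinement sequence is admissible; both are immediate from the definitions, so no genuine difficulty arises.
\end{proof}
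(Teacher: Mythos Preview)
Your proof is correct and matches the paper's intent: the corollary is stated there without a separate proof, being understood as an immediate consequence of the necessity direction of Theorem~II, which is exactly what you invoke. Your bookkeeping remark about which implication is used and about the empty refinement sequence being admissible is accurate and leaves no gap.
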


\paragraph{}
It is interesting that a metric property of the boundary relates with topological 
and (as we will just see also differential geometric)
consequences in the interior of the region. It does not surprise as one can see the
length condition as a cohomology condition for a $Z_3$-valued cohomology which needs to be
trivial. The number $3$ is since this is the minimal coloring number of a 2-dimensional sphere.
The existence of odd degree vertices is also of 
topological nature as we will see next that it has relations with geodesic flow
or more generally with billiards, if boundaries are present. But also the odd degree
property relates to a $Z_2$-valued cohomology, and again this relates to the fact that 2
is the minimal coloring number of a 1-dimensional sphere. 
We don't understand the general conditions yet which have to be met to render a graph fully
Eulerian. Neither if multiple boundaries are present nor do we know yet what 
happens in higher dimensions. 

\section{Geodesic flow} 

\paragraph{}
The property of being Eulerian on a 2-graph $G$ is equivalent of having a 
natural {\bf geodesic flow} on the graph $G$. Let us explain. Obviously, if we draw a connection from 
one vertex $x$ to a neighboring vertex $y$ and want to continue the path naturally, we need
the vertex degree of $y$ to be even. There is no such natural geodesic flow on the icosahedron graph
for example because we would have to specify a rule at every vertex how to continue: do we
take the left or right choice? We could flip a coin and so get some sort of random walk.

\paragraph{}
As a side remark, as pointed out in earlier papers, we
can also use the unitary Schr\"odinger flow $e^{i D t} u$ solving $u_t=i Du$ with $D=d+d^*$ on the exterior
bundle leading to the wave equation $u_{tt}=-D^2 u=Lu$. These are ordinary differential 
equations as we have a finite simple graph. The paths are geodesics in a finite dimensional
unitary group. They remain geodesics (but with a different metric) also when doing non-linear 
deformation of $D$ using a Lax pair. 

\paragraph{}
The just mentioned quantum approach solves the
Hopf-Rynov problem as it replaces the discrete geodesic ``hopping" with a nice path in a unitary
group. Indeed, one can find a complex initial condition such that the wave is at time $t=0$
at some vertex and at some later time $t=T$ at an other vertex. The problem is that for most other
times, the wave then will  be non-localized and have support on different vertices of the finite graph.
If we want a more ``classical geodesic flow", we need the graph to be Eulerian. 
We will see that edge refinement essentially can be done by ``self healing" along geodesic paths. 

\paragraph{}
Having an Eulerian graph gives us a nice geodesic {\bf dynamical system} but no Hopf-Rynov:
given two vertices, there is in general no geodesic
connecting them. We can only restore Hopf-Rynov if we have an {\bf ergodic geodesic
flow} meaning that there is one geodesic trajectory (which always is periodic due to the finiteness of the
graph) covers all edges and so is an {\bf Eulerian path}. It is a rather strange Hopf-Rynov because
we might have to travel for a long time to connect two points which are close in this {\bf ergodic geodesic metric}
on the graph. We see this also if the geodesic path is used to ``pair annihilate" two particles. The two
particles need to be in the same connected component of the ``geodesic metric". 

\paragraph{}
Let us reformulate the proof strategy of Theorem I. We will explain it in 
more detail in the next section.

\begin{coro}
Letting the geodesic flow cut through the graph produces
an Eulerian graph on which a global geodesic flow exists.
\end{coro}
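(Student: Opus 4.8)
The plan is to read the corollary off the proof of Theorem~\ref{1} once the phrase ``geodesic cutting'' is made literal through a parity bookkeeping. First I would record the equivalence that converts the statement into a counting problem: a global geodesic flow exists on a 2-graph $G$ exactly when $G$ is Eulerian. At a vertex $x$ the unit sphere $S(x)$ is a cycle $C_{d(x)}$, and a ray entering $x$ through a neighbor $y$ continues ``straight'' by leaving through the antipodal neighbor of $y$ in $S(x)$; this antipode is well defined precisely when $d(x)$ is even. Hence a deterministic continuation rule exists at every vertex if and only if every degree is even, that is, if and only if $G$ is Eulerian. So it suffices to show that letting a geodesic cut through $G$ terminates in an Eulerian graph, which is exactly the content of Theorem~\ref{1}.

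The key bookkeeping step I would isolate as a lemma on parities. In a 2-graph every edge $(a,b)$ lies in exactly two triangles, so $S(a)\cap S(b)=\{u,v\}$, and an edge refinement of $(a,b)$ leaves $d(a)$ and $d(b)$ unchanged, creates the new vertex $c$ with $d(c)=4$, and raises each of $d(u),d(v)$ by one. A single refinement therefore flips the parity of exactly the two apex vertices $u,v$ and of no other vertex. Writing the set of particles (odd-degree vertices) as $P\subset V$, a refinement replaces $P$ by the symmetric difference $P\triangle\{u,v\}$. Two consequences drive the algorithm: if exactly one of $u,v$ lies in $P$, the refinement \emph{transports} that particle to its partner apex, and if both lie in $P$, the refinement \emph{annihilates} the pair. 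One also checks that refinement preserves the 2-graph property, every unit sphere remaining a cycle of length $\ge 4$, so the process stays inside the class to which Theorem~\ref{1} applies.

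With these moves the ``geodesic self-healing'' construction runs as follows. Since $|P|$ is even by the Euler handshake formula, I would pick a particle $p$, launch the antipodal geodesic from $p$, and steer it toward a second particle $q$. Refining a suitable family of edges accompanying this geodesic, in the order the flow visits them, is organized so that the cumulative effect on parities is the single transposition $P\mapsto P\triangle\{p,q\}$: the intermediate apex flips are arranged to cancel in pairs along the path, leaving only the two endpoints altered, exactly as when two billiard defects are fused by a connecting trajectory. Each such geodesic fusion lowers the potential $|P|$ by two, and since no transport move raises $|P|$, finitely many rounds bring $P$ to $\varnothing$. Because each refinement adds only a bounded number of vertices and edges, the whole procedure runs in polynomial time, and the output graph is Eulerian; by the equivalence of the first paragraph it carries a global geodesic flow, which by Euler--Hierholzer decomposes the edge set into closed geodesics.

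The step I expect to be the genuine obstacle is the steering together with the cancellation claim. One must verify that the refinements accompanying a geodesic can be choreographed so that their apex flips cancel to leave precisely the two endpoints changed, and that the geodesic emanating from $p$ can always be driven into a configuration meeting a partner $q$, rather than looping indefinitely or stranding a particle. Here the absence of boundary in Theorem~\ref{1} is essential: it removes the modulo-$3$ type obstruction of Theorem~\ref{2} and leaves no invariant that could keep two particles apart. Making ``steering'' precise and proving its termination is exactly where the work of Theorem~\ref{1} resides, and this corollary is its geometric reformulation.
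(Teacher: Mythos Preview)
Your reduction is correct: the corollary is exactly Theorem~\ref{1} together with the observation that a deterministic antipodal continuation exists at $x$ if and only if $d(x)$ is even, and your parity lemma (refining $(a,b)$ flips only the two apex vertices $u,v\in S(a)\cap S(b)$) is right and is what the paper uses implicitly.

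Where your sketch diverges from the paper is in the mechanism. You frame the algorithm as ``steer the geodesic from $p$ toward a partner $q$ and choreograph refinements so the intermediate apex flips cancel,'' and you flag steering and cancellation as the hard part. In the paper there is no steering and no choreography to arrange: when the geodesic arrives at an odd vertex $v$ from $u$, the two antipodes $a,b$ of $u$ in the odd cycle $S(v)$ are adjacent, one refines that edge $(a,b)$, and the new midpoint $w$ has $S(w)=\{a,b,v,z\}$ with $z$ the \emph{other} apex. The antipode of $v$ in $S(w)$ is $z$, so the geodesic is forced to continue $u\to v\to w\to z$. Thus the unique vertex whose parity may have been damaged by the cut is automatically the next vertex on the path; cancellation is built into the flow, not something to be organized after the fact. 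The process halts when the path revisits an edge, and the paper's termination argument is the one you are missing: if a geodesic launched from an odd vertex $x$ met only even vertices before closing up, then doing this in every direction would pair off the edges of $S(x)$, contradicting that $d(x)$ is odd. So some run must hit a second odd vertex, annihilating a pair, and $|P|$ drops by two. A small side remark: you cannot ``launch the antipodal geodesic from $p$'' as stated, since $p$ is odd; the paper's implementation starts at $p$ by cutting immediately, which fits your transport picture (the first refinement moves the particle off $p$).
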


\section{Proofs}

\paragraph{}
The proof of Theorem I is done by ``letting the geodesic flow fix itself":

\begin{proof} 
Start with a $2$-graph $G$. Take a vertex $x$ and a ``unit direction", meaning
to take an edge $(x,y)$ attached to $x$. Now, if the edge degree of the neighboring
vertex $y$ is even, we can just continue the flow with the opposite edge $(y,z)$, where 
$z$ is in the antipode of $x$ on the sphere $S(y)$. Do the same at $z$ 
if $z$ is an even degree vertex.  Continue the flow until we reach
a vertex with odd vertex degree. If that happens at a point $v$, and we
came from $u$, we have two antipodes $a,b$ to $u$ on $S(v)$. Make an
edge refinement $(a,b) \to (a,w,b)$ of the edge connecting these two 
vertices $a,b$ and continue the flow with $u,v,w$. The flow has fixed the vertex $v$.
After the refinement, the vertex $v$ has now even degree. We might have
destroyed the evenness of a perfectly even degree vertex $w$
but if that is the case, we just cut through that vertex again and fix so $w$
back. Continue like that, cutting, until the geodesic path $\gamma$ hits an edge which 
has been traversed before. Now, all the vertices along the path have even 
degree and especially, every newly added point has even degree. Start
with an other vertex not yet covered and continue the ``geodesic healing".
The only way that the procedure would go wrong is to start at a vertex $x$ with
odd degree and have the geodesic path hitting only even degree vertices until
returning back. But if this happens in every direction, we have a pairing
of the edges of $S(x)$ so that $x$ would have even degree. 
\end{proof}

\paragraph{}
When implementing on a computer, we start at an odd degree vertex $x$ and start immediately
cutting fixing $x$ and continue until we reach an other odd degree vertex $y$. Now
the pair $x,y$ has been annihilated. Now search for the next odd degree vertex and 
start cutting until reaching a second one. This pair destruction works because if there
would be an odd vertex $x$ left, which can not be connected to an other odd degree vertex,
then we have that pairing argument of the edges in the unit sphere $S(x)$ and so a contradiction
to the oddness. 
The task of rending the graph Eulerian is of polynomial complexity. 
A rough upper  bound is a quadratic complexity in the number of edges $E$
the reason being that in order to fix a vertex $x$, we might have to cut $E$
edges. Since by the Euler polyhedron formula, $|V|+|F|=|E|+\chi(G)$, a
quadratic complexity $O(|E|)^2$ suffices.

\paragraph{}
Here is a proof of Theorem II. It is still a bit clumsy. We feel there should be a
much easier argument similarly to the geodesic cutting procedure.
We are given a disc $G$ with
boundary: we want to show that a 2-ball $G$ can be rendered
Eulerian if and only if the boundary length $|\delta G|$ is divisible by 3.
The idea is to remove (pair annihilate) all except two odd degree particles, then gather them 
into a local situation and realize that the metric boundary condition  $|\delta G|=3k$
produces constraints on how the points can be located in a small disk, then check that
for a small half wheel graphs with that metric boundary condition, we can do the
cutting. 

\begin{proof}
(i) The necessity of the divisibility condition is easier to see:
start with an Eulerian disk $G$. It is 3-colorable by Kempe-Heawood.
Because by assumption, all vertex degrees at the boundary are even,
the coloring obtained from fixing the colors on a first triangle forces the colors
of adjacent triangles. The coloring is therefore 3-periodic on the boundary. 
This implies that the boundary length is a multiple of 3. \\
(ii) To show the converse, we have to verify that
if the boundary length is a multiple of 3, it is possible to refine the graph using
edge refinements of interior edges to become Eulerian. There are a two basic construction steps 
which allow us to move or merge odd degree vertices. If the graph should be too narrow at
some point for the procedure to apply, we simply pre-apply some {\bf double edge cuttings} (two
cuttings at the same vertex), which refines the graph without changing the vertex parity. \\

A) {\bf (Switching even and odd pairs)}: Given an edge $e=(a,b)$ where the vertex
$a$ has odd degree and the vertex $b$ has even degree. We can perform 
several cuts along a closed loop in a wheel graph centered at $b$, then cut from 
$a$ to $b$. This renders $a$ even and $b$ odd.
With this procedure, we can move all odd degree vertices into a common wheel graph
or half wheel graph. \\

B) {\bf (Reducing from 3 to 1)}: Given an odd degree center $c$ and two odd degree vertices $a,b \in S(c)$. 
We can make a sequence of cuts in the disc $D(c)$ which renders both $a,b$ even. 
This procedure allows us to remove two in a triplet of odd degree vertices.
An other important ingredient is that since the sum over all vertex degrees is 
$2|E|$ by the {\bf Euler handshake formula}, there is always an even number of odd degree vertices. \\

Using these two procedures A) and B), we can reduce the number of odd degree
vertices until there are none or exactly $2$. In the former case we are done. 
In the later case, we can move them all to the boundary contained in a common disk. 
Lets do that and call these odd degree vertices $P$ and $Q$.  \\

We can assume that there is an interior or boundary point $R$ such that $P,Q$ are in 
a half wheel graph centered at $R$ (allowing $P,Q$ to be at the center of the half-wheel). 

We claim that the two points $P,Q$ at the boundary can not be adjacent. If they were, we could
glue a new triangle to $G$ at the edge $PQ$ and get a graph $G^+$ with boundary length $3k+1$. 
As adding the triangle has changed the degree at $P$ and $Q$ by one, the two odd degree vertices
have disappeared leading to an Eulerian 2-ball with boundary not divisible
by $3$. In part (i) of this proof we have already shown that this is not possible. 

We need to show that we can remove the two non-adjacent $P,Q$ at the boundary. 
This is now a local case. 

(Here is a special case when $G$ has no interior points. The disc $G$ is then a shellable 
``tree polytop" obtained by attaching triangles in a tree like manner without containing any wheel graph. 
In that case one can by induction show that the divisibility condition allows a cutting. 
The induction assumption consists of two cases, the half wheel consisting of three triangles and having
boundary 6 and the graph with 4 triangles building a equilateral triangle of boundary length 6. The
induction assumption is to remove either a half wheel or such a ``triangle" ). \\

(Remark: We can assume that the boundary length is larger than 3 as otherwise, filling a triangular cell
gives a sphere. In this sphere, we can find a geodesic cut (as in the proof of Theorem I 
which connects $P$ with $Q$ and there is only one direction either $P \to Q$ or $Q \to P$ which 
cuts through the triangle. So, we can repair the two points without involving the triangle. 
We can therefore assume that the boundary length is at least $6$. ) \\

We have now reduced the situation to a graph with two odd non-adjacent boundary vertices $P,Q$ 
contained in half wheel graph $H$ centered at some vertex $R$ (which might agree with 
$P$ or $Q$ or not). \\

If the graph $G$ is is equal to $H$, then $H$ must have a boundary length which is divisible by three so 
that it is a sequence of edge refinements using edges in $H$ that renders the vertex degrees of $P,Q$ even. 
Doing that for a wheel graph or half wheel graph with boundary length $3k$ can be directly done.

If the boundary length of $H$ is $3k-1$, we add an other triangle first to it (this is possible if
$G$ is not equal to $H$) and again use the local edge refinement of a ``wheel graph or half wheel with an ear" 
to render the graph Eulerian.  If the length of $H$ should be $3k+1$, we add first two more triangles 
to get a graph with boundary length divisible by $3$, where we can make the reduction. 

So, the only cases to be checked to be reducible are therefore half wheel graphs with boundary, 
a half wheel graph with an additional triangle attached or a
half wheel graph with two more additional triangles attached. In all these 
three cases, we can give explicit local edge refinements using edges in $H$ only.
These local edge refinements also work within $G$.
\end{proof}

\paragraph{}
Also this procedure is implemented in a computer. It is in general a bit harder if the graph has
narrow parts, where the boundaries are connected. A possibility to avoid this is to
first make {\bf double edge refinements}, meaning to replace an edge $(a,b)$ inside with 
a path $(a,u,v,b)$ and connect both $u$ and $v$ to $S(a) \cap S(b)$. Unlike edge refinements,
the double edge refinements do not change the parity of the points in $S(a) \cap S(b)$. But
it can be used to ``build up some interior". By the way, also this double edge
refinement can pump curvature from the boundary to the interior and so builds up some
``dark matter" inside the disk, rendering the boundary carry more and more negative
curvature. As double edge refinement pair produces or annihilates particles, one can associate
this transfer of mass with boundary radiation as known in cosmology. 

\section{Ergodic Eulerian graphs} 

\paragraph{}
In the continuum, there are always periodic orbits of the geodesic flow in a non-trivial 
homology class of a surface. The idea is simple. Take a torus for example and pull a string
along a non-contractible path on the surface. Now shorten the string until it is no more possible.
The limiting case is a geodesic as it locally minimizes the distance. 
These periodic orbits do not necessarily exist in the
discrete. The reason is the observation in the following corollary.

\begin{coro}
There are ergodic discrete Eulerian 2-graphs $G$ having the topology of a sphere
or a torus. 
\end{coro}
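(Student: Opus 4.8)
The plan is to prove this corollary by explicit construction rather than by any general existence argument, since ergodicity of the geodesic flow is a rigid combinatorial condition that we cannot expect to hold generically. Recall that on an Eulerian 2-graph the geodesic flow is well-defined: at every even-degree vertex an incoming edge has a unique antipodal continuation. A geodesic is therefore a closed edge-path, and the flow is \emph{ergodic} precisely when a single geodesic traverses every (directed) edge exactly once, i.e. when there is a single Eulerian circuit realized by the ``go straight'' rule. So the task reduces to exhibiting, for each of the two topologies (sphere and torus), one Eulerian 2-graph whose straight-ahead flow has exactly one orbit covering all edges.

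First I would handle the sphere. I would start from a small, highly symmetric Eulerian 2-sphere --- the octahedron is the natural candidate, since all its vertex degrees are $4$, hence even, so the geodesic flow is defined. I would then trace the straight-ahead geodesics explicitly. On the octahedron the straight-ahead orbits are the three ``equatorial'' squares, so the bare octahedron is \emph{not} ergodic; I would use this as the base case and apply a sequence of edge refinements (as in Theorem~\ref{1}) chosen so as to \emph{merge} the separate orbits into one. The key mechanism is that an edge refinement inserts a new even-degree vertex along an edge and, by reconnecting it to $S(a)\cap S(b)$, locally reroutes the flow; a refinement placed at an edge shared by two distinct orbits can splice them into a single longer orbit. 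By refining greedily until all orbits have been fused, I expect to reach an ergodic Eulerian 2-sphere, and I would verify the single-orbit property directly on the resulting (still small) graph.

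For the torus I would proceed analogously, taking as base graph a flat discrete torus, for instance $C_m \times C_n$ with appropriate parameters so that all degrees are even and every unit sphere is a $C_4$ (a genuine 2-graph). On such a flat torus the straight-ahead geodesics are exactly the ``coordinate'' closed curves, and their combinatorics is governed by the slopes $p/q$ of the lattice directions; a geodesic of slope $p/q$ closes up after $\mathrm{lcm}$-many steps and there are $\gcd$-many parallel copies. I would choose $m,n$ so that the flow decomposes into a controlled number of parallel orbit families, and then introduce edge refinements exactly as in the spherical case to splice these families together into a single Eulerian circuit. The divisibility/arithmetic bookkeeping of which orbits can be merged is where the discrete torus differs interestingly from the continuum, where minimal geodesics in each homology class persist.

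The main obstacle will be the merging step: showing that the greedy splicing actually terminates in a \emph{single} orbit rather than stabilizing at two or more orbits that can no longer be fused. Refining an edge changes the orbit structure only locally, and one must check that whenever at least two distinct orbits remain there exists an edge touched by two of them at which a refinement splices them (and does not simultaneously split some orbit back apart). I expect to argue this by a parity/connectivity invariant on the ``orbit adjacency graph'' whose vertices are the current orbits and whose edges record shared graph-edges; as long as this adjacency graph is connected, some splicing move strictly decreases the number of orbits, so the process reaches one orbit. The cleanest route, and the one I would ultimately write up, is simply to display the two finished graphs together with their single Eulerian geodesics explicitly, since for a corollary asserting mere \emph{existence} one worked example per topology suffices.
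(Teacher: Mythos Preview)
Your closing sentence --- just exhibit one explicit graph per topology and verify the single orbit --- is exactly what the paper does: it finds examples by randomized trial and error (running the geodesic self-healing from various seeds, checking the number of ergodic components each time, repeating ``a few dozen'' times until a single-orbit instance appears), and then records the resulting graphs and the verification code. For a bare existence corollary that is already a complete proof.

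The difficulty is with the deterministic ``splicing'' construction you outline before that, which has two concrete gaps.

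\textbf{Edge refinement destroys the Eulerian property.} On a 2-graph, refining an edge $e=(a,b)$ inserts a new vertex $c$ and connects it to the two vertices $u,v$ of $S(a)\cap S(b)$. The degrees of $a,b,c$ stay even, but $\deg(u)$ and $\deg(v)$ each increase by $1$. So a \emph{single} refinement of an Eulerian 2-graph produces a non-Eulerian graph, and there is no longer a well-defined straight-ahead flow whose orbits you could count. Your picture of ``place one refinement on an edge shared by two orbits and splice them'' therefore does not make sense as stated; after the move there are no orbits at all. One would need parity-preserving moves (the paper uses \emph{double} edge refinements for this purpose) and then a separate argument that such moves can always strictly decrease the orbit count, which is not obvious and which the paper does not claim.

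\textbf{The torus base case is impossible.} You propose a flat discrete torus in which ``every unit sphere is a $C_4$'', i.e.\ every vertex has degree $4$. But discrete Gauss--Bonnet on a 2-graph gives $\sum_x\bigl(1-\tfrac{d(x)}{6}\bigr)=\chi(G)=0$ for a torus, so the average degree is $6$ and a 2-graph torus with all degrees $4$ does not exist. The natural Eulerian base is the degree-$6$ triangulated torus, whose straight-ahead orbits are three families of parallel closed geodesics; merging those is again the unresolved step above.

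So the honest version of the argument is the paper's: randomized search plus an explicit witness, not a guaranteed splicing procedure.
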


\paragraph{}
These examples are constructed by 
trial and error. We typically need to construct a few dozen examples until we reach an ergodic 
one. We have included the code for determining the number of ergodic components of an 
Eulerian graph. 

\paragraph{}
If there are particles in the form of odd degree vertices present in a graph, then
the geodesic flow or for a billiard needs to be {\bf stopped} at those particles as there is
no natural way to continue. We still can used this dynamical system to define a 
{\bf metric} between two points. It is the length of the minimal geodesics
between them. The distance is assumed to be infinite if there is no connection, meaning
that the graph is disconnected. A consequence of the inability of removing particles is:

\begin{coro}
A 2-ball with a boundary of length not divisible by 3 is always disconnected
in the billiard metric. 
\end{coro}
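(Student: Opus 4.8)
The plan is to prove the contrapositive: if a $2$-ball $G$ is connected in the billiard metric, then its boundary length is divisible by $3$. The corollary follows at once by negation.

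First I would dispose of the trivial case. If $G$ has no particles at all, then every vertex is even, so $G$ is already Eulerian, and the necessity half of Theorem~\ref{2} (part (i) of its proof applies verbatim to an Eulerian disk) forces the boundary length to be a multiple of $3$. So I may assume $G$ carries at least one particle; by the Euler handshake formula the number of particles is even.

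The heart of the argument is to convert the connectivity hypothesis into an annihilation procedure. Pick a particle $P$. By hypothesis $d(P,x)<\infty$ for every vertex $x$, so in particular the billiard trajectory issued from $P$ reaches other vertices after finitely many steps. I would run the \emph{geodesic self-healing} of Theorem~\ref{1} along this trajectory: at each even interior vertex the flow continues antipodally, at the boundary it reflects, and whenever it meets an odd vertex we perform the edge refinement that repairs that vertex and continue. Exactly as in the closed case, the only way the trajectory could fail to terminate at a second particle is if the edges of $S(P)$ paired up, contradicting the oddness of $P$; connectivity guarantees it reaches a particle $Q$ in finite length. This annihilates the pair $\{P,Q\}$ using interior edge refinements only. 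Iterating, I reduce the particle count by two at each stage until none remain, producing an Eulerian $2$-ball obtained from $G$ by interior edge refinements. Applying Theorem~\ref{2} to this refined graph gives that the boundary length---unchanged by interior refinement---is divisible by $3$.

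The main obstacle is the boundary bookkeeping in the healing step, not the global logic, which is forced by Theorems~\ref{1} and~\ref{2}. Two points need care. First, I must check that the cuts triggered near the boundary use only interior edges and never subdivide a boundary edge; at a boundary endpoint the local repair is not the generic antipodal cut but one of the explicit half-wheel reductions already established in the proof of Theorem~\ref{2}, and at an odd boundary vertex the trajectory reverses rather than stops, so the reflection must be threaded through these local moves correctly. Second, I must ensure that after each pair annihilation the remaining particles stay mutually reachable so that the iteration continues; since edge refinement only subdivides edges and inserts even-degree vertices, it does not destroy existing billiard connections, and the self-healing of a fresh pair can be carried out inside the already-connected component. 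Making these two points precise is where the real work lies.
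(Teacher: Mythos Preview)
Your contrapositive strategy matches the paper's own reasoning: the corollary is presented there simply as ``a consequence of the inability of removing particles'' (Theorem~\ref{2}), with the remark that after refining down to two particles $x,y$ one cannot find a geodesic between them. But your attempt to make the annihilation step precise contains a genuine error. You write that ``edge refinement only subdivides edges and inserts even-degree vertices, so it does not destroy existing billiard connections.'' This is false: refining an edge $(a,b)$ connects the new vertex $c$ to \emph{both} points of $S(a)\cap S(b)$, raising the degree of each of those two existing vertices by one and flipping their parity. A single refinement can therefore create two new particles and sever a billiard geodesic that previously passed through one of them. Your iteration (``remaining particles stay mutually reachable'') is thus not justified.

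There is also a mismatch between hypothesis and mechanism. Billiard-connectivity of the \emph{original} $G$ tells you that from a particle $P$ some billiard ray in $G$ reaches another particle $Q$; along that ray every intermediate vertex is already even, so following it performs \emph{no} refinement at all and leaves $P$ and $Q$ odd --- nothing is annihilated. The self-healing path of Theorem~\ref{1}, by contrast, begins by cutting at $P$ and then chases the \emph{collateral} odd vertex created by each cut; this path bears no relation to any billiard trajectory of $G$, and the connectivity hypothesis gives you no control over where it goes or whether it is forced to cut a boundary edge. So as written the argument does not close: you would need a separate lemma linking billiard-connectivity of $G$ to the success of interior self-healing, and neither you nor the paper supplies one.
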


\paragraph{}
If we have edge-refined the graph to have only 2 particles $x,y$ left (which is the
best we can do), then we can not find a geodesic from  $x$ to $y$. They live in 
two ``parallel universes". 

\paragraph{}
Let us mention that if $G$ is any 2-graph, then its 
Barycentric refinement $G_1$ is always Eulerian (the coloring is provided by 
the dimension functional). The vertices of $G_1$ are the simplices in $G$ 
and two are connected if one is contained in the other. 
The Barycentric refinement of a 2-graph is always Eulerian. Can they become
ergodic? The answer is yes: 

\begin{propo}
There are ergodic Barycentric refined 2-spheres,
Barycentric refined 2-tori or projective planes.
\end{propo}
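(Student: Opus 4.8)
The plan is to realize the geodesic flow as a concrete permutation and then to produce, for each of the three topologies, an explicit base complex whose Barycentric refinement carries a single geodesic. First I would fix the phase space: for an Eulerian $2$-graph $H$, a \emph{state} is a directed edge $(x,y)$, read as ``arriving at $y$ from $x$.'' Since $\deg(y)$ is even, $S(y)$ is an even cycle $C_{2m}$ and $x$ has a unique antipode $z \in S(y)$; set $T(x,y)=(y,z)$. This $T$ is a bijection of the $2|E(H)|$ states, and the orientation-reversal $R(x,y)=(y,x)$ satisfies $T R = R T^{-1}$, so the cycle decomposition of $T$ is invariant under $R$. The \emph{ergodic components} are the $T$-orbits taken up to $R$ (a geodesic and its time-reversal being the same undirected path), and ergodicity means there is exactly one such component, i.e. a single geodesic covering every edge. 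Applying this to $H=G_1$, which is Eulerian by the dimension $3$-coloring recalled above, the proposition reduces to exhibiting, in each topological class, a $G$ for which $T$ on $G_1$ has a single orbit up to reversal.

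Next I would proceed constructively, exactly as for the earlier corollary on ergodic Eulerian $2$-graphs. For seeds I would take the octahedron for the sphere (all spheres $C_4$), the $7$-vertex M\"obius/Cs\'asz\'ar triangulation for the torus (all spheres $C_6$), and the $6$-vertex minimal triangulation for $\R P^2$ (all spheres $C_5$); each is a legitimate $2$-graph, so each Barycentric refinement is Eulerian and admits $T$. I would then compute the cycle structure of $T$ using the author's component-counting code. Because a single unmodified seed rarely lands on one component, the search ranges over a family: either larger base triangulations in the same class, or small symmetry-breaking modifications (bistellar flips or edge refinements of $G$) that alter $G_1$ while preserving the topology. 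Since by the $2$-graph hypothesis every such refinement keeps all spheres cyclic, each candidate again refines to an Eulerian graph and the count of $T$-orbits can be read off; one continues until a single-orbit example is certified in each of the three classes. This is a finite verification.

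The main obstacle is that there is no structural theorem forcing a single orbit, and symmetry works against us: the flow $T$ commutes with every graph automorphism, so a vertex-transitive or highly symmetric seed necessarily decomposes $T$ into several orbit classes, and the Barycentric refinement of a symmetric seed inherits this splitting. Hence the ergodic examples live among the \emph{less} symmetric complexes, which is why the honest route is trial and error rather than a closed formula — I would not expect a uniform construction to drop out. If one wanted to avoid brute search, the hardest missing ingredient would be a \emph{merging lemma}: a local modification of $G$ (or a controlled double/edge refinement) that provably fuses two $T$-orbits of $G_1$ into one without creating new components, so that the number of ergodic components could be driven down to $1$ in finitely many controlled steps. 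Establishing such a lemma — tracking how the antipodal continuation rule reroutes across the surgery — is the part I expect to be genuinely difficult; absent it, the proposition is proved by presenting the three explicit certified examples.
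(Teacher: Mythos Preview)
Your proposal is essentially what the paper does: the proposition is stated without a written proof, and the surrounding text makes clear that such examples are ``constructed by trial and error'' with the included component-counting code, exactly the computational certification you describe. Your formalization of the flow as a permutation $T$ on directed edges with the reversal involution $R$ is a clean way to frame what the code is computing, and your search strategy over perturbed or enlarged base complexes before passing to $G_1$ is the right one.

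One small correction: your claim that a vertex-transitive or highly symmetric seed \emph{necessarily} splits $T$ into several orbits is not literally true. An automorphism commuting with $T$ permutes the $T$-orbits, so high symmetry forces all orbits to have equal length (hence the orbit count divides $2|E|$), but it does not by itself rule out a single orbit. Treat this as a heuristic for why asymmetric seeds are better search targets, not as a theorem; since your actual argument falls back on explicit computation anyway, this does not damage the plan.
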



\section{Billiards}

\paragraph{}
If no odd degree vertices exist in the interior of a 2-graph
with boundary, a billiard dynamical system
is defined. While on a 2-graph without boundary or Eulerian graph with 
boundary, all geodesic orbits are closed and
no edge is traversed twice, there is an other possibility for boundary edges
with odd degree. A geodesic path starting perpendicular to the boundary (meaning
that the ray $(x,y)$ hits the boundary point has the property that $x$ in $S(x)$
has equal distance to the boundary points of $S(x)$) will have
to hit the boundary perpendicularly a second time at an other boundary of the graph
and so go cover each edge twice forwards and backwards. We call such a 
billiard path {\bf undirected} as its path can not be assigned an orientation. 
Billiards can be defined therefore if {\bf interior vertices} have even degree. 

\begin{coro}
In a 2-ball with boundary for which every interior point has even degree and a boundary 
length is not divisible by 3, there is at least one undirected (self backtracking) billiard orbit. 
Such a billiard table can not be ergodic. 
\end{coro}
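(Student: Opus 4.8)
The plan is to reduce both assertions to properties of the billiard map viewed as a bijection $T$ on the set of \emph{directed} edges (ordered pairs $(u,v)$ with $\{u,v\}$ an edge), together with the time--reversal involution $\iota(u,v)=(v,u)$. A key preliminary I would record is that a path reversal (a perpendicular hit) occurs at a vertex $v$ precisely when $S(v)=P_n$ is a path with $n$ odd and the incoming ray enters through the unique middle vertex of $S(v)$; in particular the only vertices at which the billiard can reverse are the odd--degree boundary vertices, and there the perpendicular incoming direction $e=(a,b)$ satisfies $Te=(b,a)=\iota e$.

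First I would show the hypotheses force such a vertex to exist. If every boundary vertex also had even degree then, since by assumption every interior vertex has even degree, the whole graph $G$ would be Eulerian; but part (i) of the proof of Theorem~\ref{2} shows that an Eulerian $2$-ball has boundary length divisible by $3$, contradicting the standing assumption. By the Euler handshake formula the number of odd--degree vertices is even, so in fact at least two boundary vertices have odd degree.

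Next I would produce the undirected orbit. I would verify the reversibility identity $T\iota T=\iota$ case by case (the antipodal map on an even cycle $C_n$ and the mirror map on a path $P_n$ are both involutions), which rewrites as $T\iota=\iota T^{-1}$. Launching the orbit inward from an odd--degree boundary vertex $b$ as $e_0=\iota e_{-1}$, where $e_{-1}=(a,b)$ is the perpendicular incoming direction, induction gives the palindromic symmetry $e_k=\iota e_{-1-k}$ for all $k$. Since $G$ is finite and $T$ is a bijection the orbit is periodic of some period $p$, and a reversal at step $k$ (that is, $e_{k+1}=\iota e_k$) is equivalent to $2(k+1)\equiv 0 \pmod p$. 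The crux of the argument—and the step I expect to be the main obstacle—is the parity dichotomy: if $p$ were odd the reflection $k\mapsto -1-k$ of $\Z/p\Z$ would have a fixed index $k_0$, forcing $e_{k_0}=\iota e_{k_0}$, impossible for a genuine directed edge; hence $p$ is even and there are exactly two reversal steps, at $k=-1$ and $k=p/2-1$. The orbit therefore runs from $b$ out to a second odd--degree boundary vertex and then retraces itself, covering each of its edges once in each direction: it is an undirected, self--backtracking orbit.

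For the final sentence I would argue by contradiction. If the table were ergodic, a single orbit would be an Eulerian path traversing every edge exactly once and thereby orienting every edge consistently. Since a self--backtracking orbit exists and ergodicity leaves only one orbit, that unique orbit would have to be the self--backtracking one; but a self--backtracking orbit assigns no orientation to its edges and traverses each of them twice, so it cannot be an Eulerian path. This contradiction shows the billiard cannot be ergodic.
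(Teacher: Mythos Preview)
Your argument is correct and follows the same skeleton the paper uses implicitly: from Theorem~\ref{2}(i) deduce an odd--degree boundary vertex, launch the ray perpendicularly there, observe it must terminate perpendicularly at another such vertex, and conclude non--ergodicity. The paper, however, merely \emph{asserts} the key step (``will have to hit the boundary perpendicularly a second time at another boundary point'') and records the consequence that the number of undirected paths equals half the number of odd boundary vertices, without proof. Your formalization via the billiard map $T$ on directed edges, the reversibility identity $T\iota T=\iota$, and the parity dichotomy for the period is a genuine addition: it turns the paper's claim into a proof and simultaneously yields the ``exactly two reversal points'' statement that underlies the paper's counting remark. One small point of phrasing: ``ergodicity leaves only one orbit'' should be read in the undirected sense (one billiard component aside from the boundary cycle); since distinct $T$--orbits are disjoint on directed edges, any Eulerian--path component would have to share directed edges with the $\iota$--invariant backtracking orbit, which is impossible---this is the clean way to finish the last paragraph.
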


\paragraph{}
More precisely, the number of undirected billiard paths is equal to half of the 
number of odd degree vertices at the boundary. But we will see that there
are discrete analogues of Bunimovich stadium \cite{ChernovMarkarian,Tab95} in 
the sense that the billiard is ergodic. Of course, there is no hyperbolicity as 
the orbit of this dynamical system is just a cyclic path. 

\begin{coro}
There are ergodic Bunimovich type discrete billiards in the shape of 2-balls.
\end{coro}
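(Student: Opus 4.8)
The plan is to reduce the statement to exhibiting a single explicit example, since the claim is purely an existence assertion. First I would observe that an ergodic billiard cannot carry any undirected (self-backtracking) path: by the preceding corollary such a path is a separate orbit that covers its edges twice, so its mere presence prevents a single orbit from sweeping every edge. Since the number of undirected paths equals half the number of odd-degree boundary vertices, an ergodic billiard must have all boundary vertices of even degree, and together with the standing hypothesis that all interior vertices are even this forces the 2-ball to be fully Eulerian. By part (i) of Theorem II this in turn forces the boundary length to be divisible by $3$. Thus the search is confined to Eulerian 2-balls with boundary length $3k$, and the problem becomes: produce one such graph whose directed geodesic flow (equivalently the billiard with clean reflections, since every boundary vertex of even degree reflects every incoming ray to a distinct outgoing ray) consists of one single closed orbit.

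Next I would set up the flow combinatorially. On an Eulerian 2-ball every vertex has even degree, so at an interior vertex the incoming and outgoing directions are paired by antipodal continuation, and at a boundary vertex the path graph $P_{2m}$ has an even number of vertices, whose reflection symmetry is fixed-point free; hence the billiard defines a permutation of the set of directed edges whose cycles are the billiard orbits. Ergodicity is exactly the statement that this permutation is a single cycle. I would then take the discrete analogue of the Bunimovich stadium: a flat strip capped by two focusing arcs, triangulated so that all interior vertices are even and the boundary length is a multiple of $3$, using Theorem II to repair any parity defects by interior edge refinements.

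The verification of ergodicity is the heart of the matter and I do not expect a closed-form proof of it. Instead I would compute the orbit decomposition directly: being finite, the succession rule can be iterated from a chosen directed edge until it recurs, and one checks whether the resulting cycle exhausts all directed edges. If the first candidate stadium decomposes into several orbits, I would merge them by surgery: a \textbf{double edge refinement} at an interior vertex preserves the Eulerian (even-degree) property while locally re-routing the antipodal pairing, and a suitably placed double refinement at a vertex shared by two distinct orbits splices them into one. Iterating this merging reduces the number of orbit components, and when it reaches one the billiard is ergodic.

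The main obstacle is precisely that there is no general theorem guaranteeing that the merging terminates at a single orbit, nor that a given stadium is ergodic on the first try: the orbit structure is a global combinatorial invariant that can fragment unpredictably after each refinement. I would therefore present the result as the construction of an explicit ergodic example (found, as in the earlier ergodic-Eulerian corollaries, by trial and error assisted by the orbit-counting code), together with the double-refinement merging move as the constructive mechanism that, in practice, drives any Eulerian stadium toward a single covering orbit.
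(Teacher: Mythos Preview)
Your proposal is correct and matches the paper's approach: the corollary is a bare existence claim, and the paper settles it by exhibiting one explicit 18-vertex Eulerian 2-ball (given both as a figure and as Mathematica code) whose billiard orbit decomposition is verified computationally to have a single non-boundary component. Your preliminary reasoning that ergodicity forces full Eulerianness and hence boundary length divisible by $3$ is correct and implicit in the paper's surrounding discussion; your double-refinement merging move is a plausible heuristic but goes beyond what the paper does, which is simply trial and error until an ergodic example appears.
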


\paragraph{}
We can not use a geodesic cutting procedure (where billiards replace the geodesic flow)
for proving Theorem II. The reason is that we might
reach the boundary in an edge which would force us to cut the boundary
which is not allowed. We have not yet got conditions which assure that we
can render a general 2-graph with boundary Eulerian. If we glue
together 2n triangles along a linear cycle
leading to an annulus with two boundary curves, both with length $n$, we always
get Eulerian situations as there are no interior points and all boundary vertices
have degree $4$.  The mod 3 condition obviously does not matter in such an annulus. 
The reason is that unlike the disk, the annulus is not simply 
connected any more, so that minimal coloring question are no more decided 
with local conditions. 

\section{The setup}

\paragraph{}
We make the definitions a bit more general as our eventual goal is to understand things 
in higher dimensions too and especially in dimension $d=3$, where we hope that a relatively
simple argument leads to a {\bf constructive 4-coloring procedure} for planar graphs and more generally
will establish that any $d$-sphere has chromatic number $d+1$ or $d+2$ for which no proof does
exist yet. 

\paragraph{}
{\bf Definition.} A {\bf $d$-sphere} $G$ is a finite simple graph which is either the empty
graph $0$ (the case for $d=-1$) or which has the property that for which every unit
sphere $S(x)$ is a $(d-1)$-sphere and such that there exists a vertex $x$ such that 
the graph $G-x$ without vertex $x$ is contractible. A graph is {\bf contractible} if it is either 
the $1$-point graph $1=K_1$ or if there exists $x$ such that $G-x$ and $S(x)$ are both contractible. 
A graph is a {\bf $d$-ball} if it is $G-x$ where $G$ is some $d$-sphere. The {\bf boundary} of a 
ball is generated the vertices $x$ for which $S(x)$ is a $(d-1)$-ball and the interior 
is generated by the vertices $x$ for which $S(x)$ is a $(d-1)$-sphere. The boundary
of a $d$-ball by definition is a $(d-1)$-sphere.  \\

\paragraph{}
Because the 4-color theorem is equivalent to the statement that 2-spheres
have chromatic number $3$ or $4$ (the class of 2-spheres is the class of 4-connected maximally 
planar graphs and coloring the later allows to color every planar graph), 
it is natural to look at the higher dimensional case \cite{KnillEulerian} and to explore 
the conjecture: 

\begin{conj}[Sphere coloring conjecture]
Every $d$-sphere has chromatic number $d+1$ or $d+2$. 
\end{conj}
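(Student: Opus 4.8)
The plan is to separate the two bounds and to organize the upper bound as an induction on the dimension $d$. The lower bound $\chi(G) \ge d+1$ is immediate: every facet of a $d$-sphere is a $d$-simplex, hence a copy of $K_{d+1}$, which already forces $d+1$ colors. So the entire content is the upper bound $\chi(G) \le d+2$. I would dispose of the base cases $d=-1,0,1$ by hand (the empty graph; $\overline{K_2}$; and the cycles $C_n$, which need $2$ colors when $n$ is even and $3$ when odd), and then note honestly that the case $d=2$ is precisely the four color theorem in the form ``every $2$-sphere is $4$-colorable.'' That case must be imported as the hard input: it is the genuine obstacle, and no coloring-free proof of it is known.

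Next I would isolate exactly when the smaller value $d+1$ is attained, since this is what produces the ``$d+1$ or $d+2$'' dichotomy in the language of the present paper. The criterion to establish is that a $d$-sphere $G$ is $(d+1)$-colorable if and only if every codimension-two face has even degree, where the degree is the length of the cycle of facets surrounding it. For $d=2$ this is exactly the Eulerian condition that $d(x)$ be even, and for $d=3$ it is the condition on edges $(a,b)$ that $|S(a)\cap S(b)|$ be even, as in the remark above. The forward direction is clean: a $(d+1)$-coloring uses $d-1$ colors on a codimension-two face and leaves exactly two colors for the link cycle of that face, forcing the cycle to alternate and hence to have even length. The reverse direction is a cohomology-vanishing argument with $\Z_{d+1}$ coefficients, unobstructed on the simply connected sphere. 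With this criterion in hand, the dichotomy reads: either all codimension-two faces are even and $\chi = d+1$, or an odd-degree ``particle'' is present and one must show that a single extra color still suffices.

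For the upper bound I would attempt a dimensional induction through links. Assuming every $(d-1)$-sphere satisfies $\chi \le d+1$, one colors the link $S(x)$ of a vertex $x$ — itself a $(d-1)$-sphere — with at most $d+1$ colors and tries to extend across $x$ with a fresh color, peeling vertices off the contractible ball $G-x$ in the order supplied by the definition of contractibility. The obstruction is the familiar one: greedy extension is blocked when all $d+2$ colors already appear among the neighbors of the vertex being added, exactly as greedy $4$-coloring of planar graphs fails. The remedy I would pursue is a higher-dimensional Kempe-chain recoloring: swap colors along a two-colored subcomplex to free a color at the blocked vertex, and argue that the relevant two-colored pieces cannot simultaneously separate in the way that would defeat the swap. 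Here the sphere topology together with the even/odd particle bookkeeping should do the work, since a genuinely blocking configuration ought to force an odd codimension-two face that a local edge refinement can in principle absorb.

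The main obstacle is honest to state. This strategy reduces the induction step to a higher-dimensional analogue of the Kempe-chain argument, and already at $d=2$ that argument is the four color theorem, whose only known proofs are computer-assisted discharging proofs. I therefore do not expect an unconditional elementary proof. The realistic target is threefold: (i) prove the even-degree dichotomy criterion and the lower bound unconditionally; (ii) prove the implication ``$\chi \le d+1$ for all $(d-1)$-spheres implies $\chi \le d+2$ for all $d$-spheres,'' so that the whole conjecture rests on the single base case $d=2$; and (iii) try, via the particle and $\Z_{d+1}$-cohomology picture together with edge refinement, to replace the computer-assisted base by a structural argument. Step (iii) is where I expect the real difficulty to sit.
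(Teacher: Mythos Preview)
The statement you are attempting is listed in the paper as a \emph{conjecture}, not a theorem; the paper offers no proof. It records that $d=1$ is trivial and $d=2$ is the four color theorem, and leaves $d\ge 3$ open. So there is no proof to compare against, only a proof \emph{strategy} the paper sketches.

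That strategy is genuinely different from yours. The paper proposes to go \emph{up} in dimension: given a $d$-sphere $G$, take a $(d+1)$-ball $H$ with boundary $G$ (a cone over $G$), then edge-refine $H$ using only edges outside $G$ until every $(d-1)$-simplex of $H$ has even degree. Such an Eulerian $H$ is $(d+2)$-colorable by the Kempe--Heawood generalization proved in the paper, and restricting to the boundary colors $G$. The open sub-problem is whether such an interior edge refinement always exists; Theorems~I and~II are exactly the $2$-dimensional pieces of this refinement program. Your approach instead goes \emph{down} via links: assume the conjecture for $(d-1)$-spheres, color one link $S(x)$ with $d+1$ colors, give $x$ a fresh color, and try to extend across the contractible ball $G-x$ by peeling, repairing blockages with Kempe chains.

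Your proposal is candid about its gaps, but step~(ii) --- the induction step --- is where it ceases to be a proof and becomes a wish. You do not prove that ``all $(d-1)$-spheres have $\chi\le d+1$'' implies ``all $d$-spheres have $\chi\le d+2$''; you outline a greedy-plus-Kempe attack and then correctly observe that at $d=2$ this implication \emph{is} the four color theorem, so the induction step is already as hard as its own base case. There is no reason to expect the higher-dimensional Kempe-chain argument to be easier than the $d=2$ one (where naive Kempe chains famously fail), and you offer no mechanism beyond the hope that ``sphere topology together with even/odd particle bookkeeping should do the work.'' Your dichotomy criterion (Eulerian $\Leftrightarrow$ $(d+1)$-colorable) is correct; the paper proves the backward direction as its Kempe--Heawood proposition, and your forward-direction argument via the two leftover colors on the dual circle is fine. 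But the core upper bound remains unproven in both your outline and the paper. The difference is that the paper's embedding-and-refinement route is constructive and reduces the conjecture to a concrete combinatorial statement about edge refinements of balls (partially settled here in dimension~$2$), whereas your route reduces it to a higher-dimensional Kempe-chain existence statement for which no argument is supplied.
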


\paragraph{}
The statement is obvious for $d=1$. It is the {\bf $4$-color theorem} for $d=2$. 
In order to produce a new proof of the $4$-color theorem, one has to embed a given 2-sphere $G$ into a 
3-sphere $H$ and make edge refinements in $H$ without 
refining edges in $G$, until $H$ is Eulerian. Then $H$ is $4$-colorable and the coloring induces
a coloring of the sub graph $G$. The idea to do that is very simple: 
start with a suspension $H$ of $G$ and then edge refine the 3-sphere without using 
edges in $G$ until $H$ is Eulerian. 
As we only have to color one side of $H$, it is enough to start with
a cone extension $H=G+x$ which is a 3-ball, then edge refine until we have an Eulerian 3-ball.
This can be achieved by cleaning out unit sphere after unit sphere inside the ball. 
For now we just state that it is in this part, where we need the edge refining result proven here. 

\paragraph{}
Despite the fact that the coloring of a $1$-sphere (a cyclic graph) with $3$ colors is trivial, 
the case $d=1$ already illustrates the proof strategy in general: 
let the 1-sphere be the boundary of a 2-ball $G$. Now edge refine this
disk $G$ until the interior is Eulerian. (We don't need to have the boundary Eulerian too). 
A coloring of a single triangle then defines the color everywhere so that we have a coloring
of the entire graph including the boundary. It is already here, where an other question 
appears: under which conditions it is possible to color the disc so that also the vertices 
at the boundary have even degree? 

\paragraph{}
We saw that the divisibility by 3 matters. Analogue questions can be asked in higher dimension. 
We should stress however that in order to get the $4$-color theorem, one only has to be able to edge 
refine the 3-ball $H$ using edges from the interior.  
This will then $4$-color the boundary, the $2$-sphere. In order to edge refine $H$ we need to be sure
that we can edge refine 2-spheres and that is what we do here. 

\section{Dual spheres}

\paragraph{}
The following definition of duality appeared already in \cite{KnillEulerian}. 
We present it here as a duality between spheres, identifying a simplex $x$ with its boundary 
sphere $\hat{x}$. This is justified as it is possible to
associate a complete graph with its boundary sphere which is a simplicial complex 
(even so it is no more the Whitney complex of a graph). 

\paragraph{}
{\bf Definition} A complete subgraph $K_{k+1}$ of the graph $G$ is a 
{\bf $k$-simplex} of $G$. It defines a {\bf simplicial complex}, the $(k-1)$-skeleton 
complex which has as a Barycentric refinement a $(k-1)$ sphere. 
Given a d-graph $G$ and a $k \leq d$ simplex $x=(x_0, \dots, x_{k+1})$ in $G$, the
{\bf dual sphere of $x$} is the $(d+1-k)$-sphere $\hat{x} = S(x_0) \cap S(x_1) \cap \cdots \cap S(x_k)$.
We think of $x,\hat{x}$ as a pair of {\bf dual spheres} because if $y_1, \dots, y_m$ are the vertices
of $\hat{x}$, then $x = S(y_0) \cap S(y_1) \cap \cdots \cap S(y_k)$.  \\

\paragraph{}
{\bf Remark.} For a general sub-graph $A$ of a graph $G$, the 
{\bf dual graph} $\hat{A}=\bigcap_{v \in V(A)} S(v)$ is usually
empty, but for the $(k-1)$-complex defined by a $k$-simplex we have {\bf duality} $\hat{\hat{x}}=x$. 
If $G$ is a simplicial complex, the set 
$\{ \hat{x} \; | \; x \in G\}$ is again a set of sets. It is a {\bf co-complex} however,
a set of sets closed under the operation of taking {\bf sup-sets} different from $V$. 
The collection of all $(d-1)$-spheres $x+\hat{x}$ (obtained by joining $x + \hat{x}$)
encodes the simplicial complex, as we can get back the simplices $x$ from the prime factorization 
in the Zykov monoid of spheres. When seen from this angle, the geometry of simplicial complexes is a 
``geometry of spheres". \\

\paragraph{}
{\bf Examples:} The dual sphere of a vertex (=$0$-simplex) $x$ is the 
{\bf unit sphere} $S(x)$ of $x$. The dual sphere of a $(d-1)$ simplex $x$ in a 
$d$-graph is a $0$-sphere; the two points represent then the orthogonal
complement of $x$. The dual sphere of a $d$-simplex $x$ is the empty graph, the $(-1)$-sphere. 
The dual sphere of a $(d-2)$-simplex is a circular graph. It is this situation 
which is the most important for us. 

\section{Eulerian spheres}

\paragraph{}
{\bf Definition.} The {\bf degree} of a $(d-2)$-simplex $x=(x_1, \dots x_{d-1})$ in a $d$-graph $G$ is
the length of the circular graph $\hat{x}$. For $d=2$, it is the 
{\bf vertex degree} of $x$ and for $d=3$, it is the length of the circle $S(a) \cap S(b)$ if $x=(a,b)$.
In the case $d=4$, it is the length the circle $S(a) \cap S(b) \cap S(c)$ if $x=(a,b,c)$. \\

\paragraph{}
{\bf Definition.} A $d$-sphere is called {\bf Eulerian}, if all its $(d-2)$-degrees are even.  
The Euler-Hierholzer theorem assures that for $d=2$, this is equivalent to the existence 
of an Eulerian path as the $(d-2)$-degree is then the vertex degree. 
A generalization of a result of {\bf Kempe-Heawood} in the case $d=2$ is: 

\begin{propo}[Kempe-Heawood generalization]
An Eulerian $d$-sphere has chromatic number $d+1$. 
\end{propo}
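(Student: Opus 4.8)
The plan is to prove the two inequalities $\chi(G)\ge d+1$ and $\chi(G)\le d+1$ separately, the first being immediate and the second carrying all the content. Every $d$-sphere contains a facet, a complete graph $K_{d+1}$, so any proper coloring uses at least $d+1$ colors and $\chi(G)\ge d+1$. For the upper bound I would argue by induction on $d$, the base case $d=1$ being the classical fact that an Eulerian $1$-sphere is an even cycle and hence $2$-colorable (here the Eulerian condition literally says that the length of $\hat{\emptyset}=G$ is even). Before the inductive step I would record the closure property that if $G$ is an Eulerian $d$-sphere then every unit sphere $S(x)$ is an Eulerian $(d-1)$-sphere: for a $(d-3)$-simplex $u$ of $S(x)$ one has $\hat{u}=\widehat{u\cup\{x\}}$ when the dual is computed in $G$, so the $(d-3)$-degree of $u$ inside $S(x)$ equals the $(d-2)$-degree of $u\cup\{x\}$ in $G$, which is even by hypothesis.

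The heart of the argument is an obstruction, or \emph{monodromy}, computation. I would fix $d+1$ colors and propagate a coloring across facets. Given a coloring of one facet $F$ and a neighbouring facet $F'=(F\setminus\{a\})\cup\{b\}$ sharing the ridge $R=F\cap F'$, the coloring is forced: the $d$ vertices of $R$ keep their colors and $b$ must receive the unique color previously carried by $a$. This defines a parallel transport of colorings along the dual graph whose vertices are the facets and whose edges join facets sharing a ridge. A global $(d+1)$-coloring exists precisely when this transport is path-independent, i.e. when the monodromy around every closed dual loop is trivial in the symmetric group on the $d+1$ colors.

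Next I would compute the monodromy around the elementary loop attached to a $(d-2)$-simplex $w$. The facets through $w$ are exactly the $w\cup\{y_i,y_{i+1}\}$ for consecutive vertices $y_i$ of the cycle $\hat{w}$, whose length is by definition the $(d-2)$-degree of $w$. Along this loop the $d-1$ colors used on $w$ stay fixed, while the two remaining colors $\alpha,\beta$ are forced to alternate on the $y_i$; the loop closes up consistently if and only if $\hat{w}$ has even length. Thus the elementary monodromy is the transposition $(\alpha\,\beta)$ when the $(d-2)$-degree is odd and the identity when it is even, and since $G$ is Eulerian every elementary monodromy is trivial.

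Finally I would pass from local to global. For $d\ge 2$ a $d$-sphere is simply connected, and the elementary loops around $(d-2)$-simplices generate the fundamental group of the dual adjacency structure, being the boundaries of the $2$-cells dual to the codimension-$2$ faces. Consequently the monodromy representation of the dual graph is trivial, the transport is path-independent, and the forced coloring is globally well-defined; every edge lies in some facet and colors are distinct on each facet, so the coloring is proper, which together with the lower bound gives $\chi(G)=d+1$. I expect the main obstacle to be exactly this local-to-global step: one must justify carefully that the elementary loops around the $(d-2)$-simplices really do generate all dual cycles, that is, translate the topological simple connectivity of the $d$-sphere into the combinatorial statement about the dual graph. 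The unit-sphere closure property recorded above is what keeps the Eulerian hypothesis available at every codimension-$2$ face, and it also feeds an inductive verification of simple connectivity should one prefer not to invoke the homeomorphism type of $G$ directly.
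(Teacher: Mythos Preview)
Your proposal is correct and follows essentially the same route as the paper: fix a coloring of one facet, propagate it across ridges to adjacent facets, observe that the monodromy around a $(d-2)$-simplex is a transposition raised to the degree and hence trivial under the Eulerian hypothesis, and then invoke simple connectivity of $S^d$ for $d\ge 2$ to pass from local to global. The paper's proof is terser and does not separate out the lower bound or the unit-sphere closure property, but the argument is the same; your identification of the local-to-global step as the point needing the most care is also exactly where the paper waves its hands.
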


\begin{proof} 
For $d=1$, this is clear so that we can assume $d>1$.
Start with coloring one $d$-simplex with $(d+1)$ colors. This defines the colors of the vertices
of every simplex adjacent to it. Continue like this. The degree condition assures that
the monodromy of the coloring has no constraints when closing a loop in the dual graph, 
meaning to build a closed chain of simplices hinging at a common $(d-2)$-simplex. 
We can continue coloring larger and larger neighborhoods. 
As $S^d$ is simply connected for $d>1$, we never have a compatibility 
problem and can color all $d$-simplices and so all graphs. 
\end{proof}

\paragraph{}
The proof demonstrates why simply connectedness is important: if there is a closed loop
we can form a chain of $d$-simplices winding around the graph which can not be collapsed. 
In order to be able to have compatibility, we need the number of simplices to be even. 
as we then can alternate the $(d+1)$'th color along the circle. 
This result therefore generalizes to simply connected $d$-graphs, graphs in which all unit spheres
are $(d-1)$ spheres and which are simply connected in the obvious way (defined in the discrete
but equivalent to the Euclidean realization being simply connected). 

\paragraph{}
{\bf Definition.} Given a $d$-graph $G$ and an edge $e=(a,b)$, an {\bf edge refinement} is the 
finite simple graph $H$ in which the vertex set is increased by an additional vertex $c$ inside $(a,b)$ and 
where the edge set is augmented by the additional edges $\{ (c,x) \; | \;  x \in S(a) \cap S(b)  \}$. 
Formally $T_{(a,b)}( (V,E) ) =  (V \cup \{c\}, (E \setminus \{(a,b)\}) \cup \{ (a,c),c,a) \} 
  \cup S(a) \cap S(b) \})$. 
This is again a $d$-graph as $S(c)$ is the suspension of the sphere $S(a) \cap S(b)$ 
and inductively, $S_H(y)$ is the edge refinement of $S_G(y)$ if $e$ is in $S_G(y)$.  \\

\paragraph{}
Edge refinement can also be realized as a sequence of homotopy deformation step. But unlike a homotopy,
it preserves also the dimension of $d$-graphs and is therefore a topological notion. A differential
geometric aspect comes in as we use edge refinements to fix space up allowing a geodesic flow. 
Edge refinement preserves the class of $d$-graphs, graphs for which every unit sphere is a $(d-1)$-sphere. The 
inverse is an edge collapse. Edge collapses in general do not preserve the class of $d$-graphs.
But we can call two $d$-graphs {\bf e-homotopic} if they have a common edge refinement. 
A simple but powerful observation used by Fisk which will be used to color a sphere is:

\begin{lemma}
If a sphere $G$ is embedded in a larger dimensional sphere $H$ of chromatic
number $c$, then the chromatic number of $G$ is smaller or equal than $c$.
\end{lemma}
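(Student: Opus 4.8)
The plan is to reduce the statement to the elementary monotonicity of the chromatic number under passage to subgraphs, since that is the whole content here. The key observation is that an embedding of the $d$-sphere $G$ into the higher-dimensional sphere $H$ exhibits $G$ as a subgraph of $H$: the vertex set $V(G)$ is a subset of $V(H)$ and every edge of $G$ is also an edge of $H$. Once this is recognized, the coloring bound is automatic.

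First I would fix a proper vertex coloring $\phi : V(H) \to \{1, \dots, c\}$ realizing $\chi(H) = c$; such a coloring exists by the very definition of the chromatic number. Using that the embedding identifies $G$ with a subgraph of $H$ (the Whitney complex of $G$ sits inside that of $H$, so in particular adjacency is preserved), I would then restrict $\phi$ to the vertices of $G$, obtaining the map $\psi = \phi|_{V(G)}$. No new colors are introduced, so $\psi$ takes at most $c$ values.

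The verification that $\psi$ is a proper coloring is the only computation, and it is immediate: if $(x,y)$ is an edge of $G$, then $(x,y)$ is an edge of $H$, hence $\phi(x) \neq \phi(y)$ because $\phi$ is proper on $H$, so $\psi(x) \neq \psi(y)$. Thus $\psi$ properly colors $G$ with at most $c$ colors, giving $\chi(G) \leq c$.

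The only genuine subtlety, and the point where I would spend care, is pinning down the meaning of \emph{embedded} so that it really yields a subgraph rather than merely a topological inclusion of geometric realizations. In the simplicial setting used throughout the paper, an embedding is an injection of the underlying complexes, so adjacency is preserved and the subgraph claim holds verbatim; no further argument is needed. I expect no real obstacle beyond this bookkeeping, which is precisely why the lemma is described as \textbf{simple but powerful}.
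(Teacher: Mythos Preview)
Your proposal is correct and follows exactly the same route as the paper: the paper's own proof consists of the single sentence that a subgraph has chromatic number at most that of the ambient graph, and you have simply spelled out that argument (restrict a proper coloring and check adjacency). There is no meaningful difference in approach.
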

\begin{proof}
If $G_1$ and $G_2$ are two graphs and
if $G_1$ is a sub-graph of $G_2$, then the chromatic number of $G_1$ is smaller
or equal than the chromatic number of $G_2$. 
\end{proof} 

\paragraph{}
The sphere coloring conjecture would follow from:

\begin{conj}
For a $d \geq 2$-sphere $G$ which is a sub-graph of a $(d+1)$-sphere $H$, 
it is possible to edge refine $H$ with edges different from $G$ 
such that the modified $H$ becomes Eulerian.
\end{conj}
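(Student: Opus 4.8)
The plan is to promote the ``geodesic self-healing'' of Theorem~\ref{1} to the codimension-two setting by first pinning down the combinatorial shape of the obstruction. Call a $(d-1)$-simplex $\sigma$ of $H$ a \emph{defect} if its degree (the length of its dual circle $\hat\sigma$) is odd, and set $D=\sum_{\sigma\ \text{defect}}\sigma$, a $(d-1)$-chain over $\Z_2$. The first step is to show $D$ is a cycle. Fix a $(d-2)$-simplex $\tau$; its dual $\hat\tau$ is a $2$-sphere, the $(d-1)$-simplices containing $\tau$ are exactly the $\tau\cup\{v\}$ with $v$ a vertex of $\hat\tau$, and the degree of $\tau\cup\{v\}$ equals the vertex degree of $v$ inside $\hat\tau$. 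The handshake formula on the $2$-sphere $\hat\tau$ forces an even number of odd such $v$, so the coefficient of $\tau$ in $\partial D$ vanishes and $\partial D=0$. This is the exact higher analogue of ``odd-degree vertices come in pairs'': for $d=2$ it recovers the fact that the odd-degree edges of a $3$-sphere form closed loops.

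The second step is topological. Since $H\cong S^{d+1}$ and $1\le d-1\le d<d+1$ whenever $d\ge 2$, we have $H_{d-1}(H;\Z_2)=0$, so the defect cycle bounds: $D=\partial C$ for a $d$-chain $C$. (Reassuringly this vanishing is exactly what fails at $d=1$, matching the hypothesis $d\ge 2$.) Next I would compute the effect of a single refinement $T_e$ at $e=(a,b)$, where $c$ is joined to all of $\hat e=S(a)\cap S(b)$. A short check shows that a $(d-1)$-simplex through $a$ (or $b$) does not change parity, because the new vertex $c$ enters its dual circle while $b$ (resp.\ $a$) leaves it once the edge $(a,b)$ is deleted, and these cancel; only the top faces $\sigma\subseteq\hat e$ flip. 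Thus $T_e$ alters $D$ precisely by the fundamental class $[\hat e]$ of the little $(d-1)$-sphere $\hat e$, and $[\hat e]=\partial(a\ast\hat e)$ is itself a cone boundary. The conjecture therefore reduces to realizing the filling $C$ by such elementary moves: to write $D$ as a $\Z_2$-sum $\sum_i[\hat{e_i}]$ with every $e_i\notin G$, letting the refinement create and then spend the new small spheres as it goes, exactly as the self-healing geodesic of Theorem~\ref{1} sweeps a pair of odd vertices together and annihilates them.

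Before this can be executed, all of $C$ and every refinement must be kept off the forbidden sphere $G$. Since $G$ is a $d$-sphere it separates $H$ into two $(d+1)$-balls, so I would first \emph{move} all defects into the interior of one of these balls using higher-dimensional analogues of the switching and merging moves A) and B) of Theorem~\ref{2}, pre-applying double edge refinements (which change no parity) to ``build interior'' wherever $H$ is too thin against $G$. The filling would then be carried out inside that single ball, so that no $[\hat{e_i}]$ ever uses an edge of $G$.

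I expect the main obstacle to be twofold. First, in dimension $d\ge 3$ there is no one-dimensional geodesic to follow: the chain $C$ is at least two-dimensional, the moves no longer align along a single path, and one must control how each $T_e$ spawns auxiliary defects without triggering an unbounded cascade. The crisp quadratic termination bound available in Theorem~\ref{1} is exactly the point that becomes delicate here, since the set of available spheres $[\hat e]$ changes as the complex evolves, making this a reachability problem rather than fixed linear algebra over $\Z_2$. Second, and more seriously, the interface with $G$ may carry a genuine local obstruction of the type Theorem~\ref{2} exhibits: the mod-$3$ periodicity forced on a boundary circle of a $2$-ball suggests that a $\Z_{d+2}$-valued coloring constraint could obstruct annihilating the last defects sitting on or adjacent to $G$. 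Showing that an \emph{embedded} sphere $G$, unlike a fixed boundary, never forces such a residual defect pair that cannot be pushed into the interior is where the real work lies, and is presumably why the statement is still only a conjecture.
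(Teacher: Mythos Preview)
The statement is a \emph{conjecture} in the paper and is not proved there; you recognize this yourself in your closing sentence. So there is no proof to compare against, only the surrounding material: the lemma that defect varieties are closed, and the heuristic ``4-color Strategy'' section that sketches an attack on the case $d=2$.

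Your first step---showing the defect chain $D$ is a $\Z_2$-cycle via handshake on the dual $2$-sphere $\hat\tau$---is precisely the paper's ``Defect varieties are closed'' lemma, rephrased homologically. After that your route and the paper's diverge. You argue globally: bound $D$ using $H_{d-1}(S^{d+1};\Z_2)=0$, compute the effect of a single refinement on $D$, and try to realize the filling as a word in elementary moves. The paper's sketch is local and inductive: take a cone ball over $G$, then sweep outward from the center, applying Theorem~I to each unit sphere and Theorem~II to each half-cleaned disk, pre-refining to keep the relevant boundary lengths divisible by~$3$. Your framework is cleaner and dimension-uniform; the paper's is algorithmic and leans directly on the two theorems it actually proves.

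Two technical cautions on your sketch. First, the parenthetical that $H_{d-1}(S^{d+1};\Z_2)=0$ ``is exactly what fails at $d=1$'' is misleading: for $d=1$ the defect is a $0$-chain of even support, and in a connected space that still bounds. The genuine $d=1$ obstruction is the mod-$3$ phenomenon of Theorem~II, not $\Z_2$-homology (you say as much later, but the earlier remark suggests the wrong mechanism). Second, your computation of how $T_e$ changes $D$ tracks the old $(d-1)$-simplices through $a$ or through $b$, but ignores the $(d-1)$-simplices through \emph{both} $a$ and $b$ (these are destroyed, since the edge $(a,b)$ disappears) and all the new $(d-1)$-simplices through $c$; the parities of the latter depend on the internal degree structure of $\hat e$ and need not be even. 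So ``$T_e$ alters $D$ precisely by $[\hat e]$'' is not yet justified, and the cascade you warn about in the next paragraph in fact starts one move earlier than you indicate.
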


\paragraph{}
Because the Eulerian host graph $H$ has chromatic number $d+1$ or $d+2$<
also $G$ has then chromatic number $d+1$ or $d+2$. Because $G$ contains $d$-simplices
which have $d+1$ vertices all connected to each other, the chromatic number can not be smaller 
than $d+1$. \\

\paragraph{}
By the discrete Brouwer-Schoenfliess theorem, the $d$-sphere $G$ in the $d+1$ sphere $H$
divides the later into two 2-balls $H_1$ and $H_2$. We only need to color a
ball $B$ having $G$ as a boundary with $d+2$ colors. This coloring with $d+2$ colors is 
possible if the graph is Eulerian, meaning that if the degree of every $(d-2)$-simplex in 
$B \setminus G$ is even. 

\paragraph{}
Given a $d$-sphere $G$ containing a subgraph $K$ so that $G-K$ is a ball
and the boundary length of this ball is divisible by 3 in the case $d=2$,
we hope then to be able to edge refinements in $G$ without edges in $K$ 
such that every degree of in the interior of $G-K$ becomes even. 

\paragraph{}
In dimension $1$, there is nothing to show as the degree of an interior point is always 2.
In dimension $2$, we have to do edge refinements so that all vertex degrees are even. In this
case, the 3-divisibility condition is needed. 
In dimension $3$, we will have to get all the edge degrees even. We currently do not see
any constraint there but that has to be investigated still. 
Here is a basic simple principle which relates the even index condition of an higher dimensional 
simple with a lower dimensional one: 

\begin{lemma}
Let $x$ be a $(d-2)$-simplex containing the vertex $v$ and $y = x \cap S(v)$. 
Then ${\rm deg}(x) = {\rm deg}(w)$. 
\end{lemma}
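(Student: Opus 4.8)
The plan is to treat this as a definitional identity: I would expand both degrees as lengths of dual circles and show the two circles are literally the same graph. (I read the conclusion as $\mathrm{deg}(x) = \mathrm{deg}(y)$, taking the $w$ in the statement to mean $y = x \cap S(v)$.)

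First I would fix notation. Write the $(d-2)$-simplex as $x = \{v, x_2, \dots, x_{d-1}\}$, so that $x$ has $d-1$ vertices and $v$ is one of them. Because $x$ is a complete subgraph containing $v$, each $x_i$ with $i \ge 2$ lies in $S(v)$, whereas $v \notin S(v)$; hence $y = x \cap S(v) = \{x_2, \dots, x_{d-1}\}$ is exactly $x$ with $v$ deleted. This makes $y$ a $(d-3)$-simplex of the graph $S(v)$, which is a $(d-1)$-sphere since $G$ is a $d$-graph. Viewed inside $S(v)$, $y$ is a $((d-1)-2)$-simplex, so its degree---the length of its dual circle computed inside $S(v)$---is well defined and the statement at least typechecks.

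Next I would write out both dual circles. The degree of $x$ in $G$ is by definition the length of
\[
\hat{x} = S(v) \cap \bigcap_{i=2}^{d-1} S(x_i),
\]
with every unit sphere formed in $G$. For $y$ the spheres are instead formed inside the subgraph $S(v)$, and the one step where all the content lives is the induced-sphere identity: for a vertex $w \in S(v)$, the unit sphere of $w$ taken within $S(v)$ equals $S(w) \cap S(v)$, since a neighbour of $w$ that lies in $S(v)$ is the same thing as a common neighbour of $w$ and $v$. I would state and use this carefully, as it is the only genuine point; everything else is bookkeeping about which ambient graph each sphere is taken in.

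Granting the identity, the dual circle of $y$ inside $S(v)$ becomes
\[
\hat{y} = \bigcap_{i=2}^{d-1} \bigl( S(x_i) \cap S(v) \bigr) = S(v) \cap \bigcap_{i=2}^{d-1} S(x_i) = \hat{x},
\]
the last equality being just that the unit sphere of $v$ in $G$ is $S(v)$ itself. Thus $\hat{x}$ and $\hat{y}$ are one and the same circular graph, so they have equal length and $\mathrm{deg}(x) = \mathrm{deg}(y)$. I expect no real obstacle here: the result is immediate once the induced-sphere identity is in place, and the apparent subtlety is only in keeping straight that $\hat{x}$ is computed in $G$ while $\hat{y}$ is computed in $S(v)$.
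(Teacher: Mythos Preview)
Your proof is correct and follows the same idea as the paper, which simply notes that ``the proof is the definition as we just intersect with a sphere'' after illustrating the $d=3$ case. You have merely written out in full what the paper leaves implicit: the induced-sphere identity $S_{S(v)}(w) = S(w) \cap S(v)$ and the resulting equality $\hat{x} = \hat{y}$ of dual circles.
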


\paragraph{}
For example, in the case $d=3$, if $x=(v,w)$ is an edge, build the 2-sphere $S(v)$.
The {\bf edge degree} of $x$ is the {\bf vertex degree} of $w$ in $S(v)$. 
The proof is the definition as we just intersect with a sphere.

\section{Edge refining disks}

\paragraph{}
We have seen in Theorem I that for 2-sphere $S$ there is a sequence of edge 
refinements which renders $S$ Eulerian.
We hope that we can bootstrap from this a similar result in  higher dimensions. 
Given a $3$-sphere $S$, we have seen that we can push the odd degree 
edges away to an embedded 2-sphere. There, we reduce one dimension more. 

\begin{conj}
$d$-spheres have edge refinements which are Eulerian.
\end{conj}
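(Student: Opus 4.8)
The plan is to argue by induction on the dimension $d$, taking Theorem I as the base case $d=2$ and using the final Lemma above as the bridge from dimension $d-1$ to dimension $d$. That Lemma gives, for every vertex $v$ and every $(d-2)$-simplex $x \ni v$, the identity ${\rm deg}_G(x) = {\rm deg}_{S(v)}(x \setminus \{v\})$, where $x \setminus \{v\}$ is a $(d-3)$-simplex of the $(d-1)$-sphere $S(v)$. Since every $(d-2)$-simplex of $G$ contains a vertex, and every $(d-3)$-simplex of a unit sphere $S(v)$ extends through $v$ to a $(d-2)$-simplex of $G$, this yields, for $d \geq 3$, the clean reduction
\[
G \text{ Eulerian} \iff S(v) \text{ Eulerian for every vertex } v .
\]
Thus the $d$-dimensional Eulerian condition is literally the conjunction, over all vertices, of $(d-1)$-dimensional Eulerian conditions, and the induction hypothesis asserts that each single $S(v)$ can be made Eulerian by edge refinements performed inside $S(v)$. (For $d=3$ the unit spheres are $2$-spheres, so the hypothesis needed at the bottom is exactly Theorem I.)

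First I would record the compatibility between refinements of $G$ and of its unit spheres: by the definition of edge refinement, refining an edge $e$ of $G$ induces an edge refinement of precisely those unit spheres $S(y)$ that contain $e$, and leaves all others untouched. Hence any move the induction hypothesis prescribes inside a single $S(v)$ is realizable by a genuine refinement of $G$. Next I would describe the \emph{defect locus}, the set of odd-degree $(d-2)$-simplices. Applying the handshake identity inside each $S(v)$ and translating by the Lemma shows this locus is a $\Z_2$-cycle of dimension $d-2$; for $d=3$ it is literally a disjoint union of closed edge-loops, since each vertex meets an even number of odd-degree edges. Because $H^{d-2}(S^d;\Z_2)=0$ for $d\geq 3$, this cycle is always a coboundary. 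That vanishing is the topological reason no obstruction analogous to the mod-$3$ condition of Theorem II can survive on a closed sphere, and it is what makes the conjecture plausible in full generality.

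The construction I would attempt is a \textbf{sweep} generalizing the geodesic self-healing of Theorem I: order the vertices and clean their unit spheres one at a time, applying the $(d-1)$-dimensional algorithm inside $S(v)$ to render all $(d-2)$-degrees through $v$ even; equivalently, following the sketch in the text, one pushes the $(d-2)$-defect cycle onto a single embedded $(d-1)$-sphere and then recurses there one codimension lower. The engine is the elementary computation that refining an edge $e=(a,b)$ flips the degree-parity of exactly the facets of the dual sphere $\hat e = S(a)\cap S(b)$, and of no other $(d-2)$-simplex: the new vertex $c$ is inserted on the arc $a\,b$ of each circle $\hat f$ with $f$ a facet of $\hat e$, raising its length by one. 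For $d=2$ these facets are the two points of the $0$-sphere $\hat e$, recovering the antipodal step of Theorem I; in general, since the facets of the $(d-2)$-sphere $\hat e$ are adjacent in its face structure, such moves transport defects along chains of $(d-2)$-simplices.

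The hard part will be termination together with the coupling across unit spheres. In dimension two a single refinement flips exactly two nearby vertices, so a defect can be marched to a partner and annihilated; but for $d\geq 3$ a single refinement flips the \emph{entire} facet set of $\hat e$ (for $d=3$, every edge of a whole circle at once), so healing $S(v)$ can re-break an already-cleaned $S(w)$ sharing the refined edge, and the naive ``move one defect'' picture breaks down. To control this I would seek a monovariant: a weighted count of odd-degree $(d-2)$-simplices localized near the sweep front, to be shown strictly decreasing under each healing step while only perturbing spheres ``ahead'' of the front in the chosen order. The existence of a suitable order is plausible because a $d$-sphere is shellable and the defect class is null-homologous, so the available flips ought to span the defect cycle. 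Converting the vanishing of $H^{d-2}(S^d;\Z_2)$ into an explicit, parity-respecting, strictly decreasing refinement schedule is exactly the step left open here, and I expect it to be the genuine obstacle.
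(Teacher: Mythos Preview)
The statement you are addressing is labelled a \emph{Conjecture} in the paper, not a theorem; the paper does not prove it and explicitly says ``We hope that we can bootstrap from this a similar result in higher dimensions.'' There is therefore no paper proof to compare your proposal against. What the paper does offer is a heuristic in the same spirit as yours: push the odd-degree $(d-2)$-simplices onto an embedded $(d-1)$-sphere and recurse (the paragraph before the conjecture), and the ``4-color Strategy'' section, which cleans unit spheres $S(v)$ one by one along a sweep. Your induction-on-$d$ via the degree lemma, together with the observation that refining $e=(a,b)$ flips exactly the facets of $\hat e$, is precisely the mechanism the paper is gesturing at, and your reformulation ``$G$ Eulerian $\iff$ every $S(v)$ Eulerian'' is correct and is the intended use of that lemma.

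Your self-identified gap is the real one, and it is the same gap the paper leaves open. The coupling problem you flag --- that fixing $S(v)$ by refining an edge $e\subset S(v)$ simultaneously perturbs every $S(w)$ with $e\subset S(w)$, so the sweep can re-break cleaned spheres --- is exactly why the paper states this as a conjecture rather than a theorem. Two cautions on the details: first, your cohomological remark that $H^{d-2}(S^d;\Z_2)=0$ correctly explains why there is no global obstruction to killing the defect cycle by \emph{some} $\Z_2$-chain, but it does not by itself guarantee that such a chain can be realised as a finite sequence of edge refinements, since each refinement flips a whole facet-set of a $(d-2)$-sphere rather than a single simplex; you would still need to show these ``packets'' span the relevant chain group. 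Second, new $(d-2)$-simplices through the inserted vertex $c$ are created at every step, and your monovariant must also control their parities; the paper's $d=2$ argument handles this because the new vertex always has even degree along the healed path, but in higher $d$ this needs to be checked. In short: your proposal is a faithful and well-articulated version of the paper's own intended strategy, with the honest admission that the terminating schedule is missing --- which is the state of the problem.
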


\paragraph{}
{\bf Definition} Given a $d$-sphere $G$, let $H$ be the collection of $(d-2)$-simplices for which the 
degree is odd. We call it the {\bf defect variety} in $G$. In a $2$-sphere this is a discrete set of points.
In a $3$-sphere it forms a one dimensional pure graph without end points after normalization. Let us call a d-dimensional pure
graph {\bf closed} if it has no boundary. A boundary point of $H$ is a point for which $S(x) \cap G$ is contractible.

\begin{lemma}[Defect varieties]
Defect varieties are closed. 
\end{lemma}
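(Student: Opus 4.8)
The plan is to reduce the geometric assertion that $H$ has no boundary to a single mod-$2$ counting statement, and then to settle that statement by a handshake argument carried out inside a $2$-sphere. A boundary point of the normalized defect variety is a $(d-3)$-simplex $y$ of $G$ whose link in $H$ is contractible rather than spherical, which happens exactly when $y$ is a face of an \emph{odd} number of defect $(d-2)$-simplices. So saying that $H$ is closed amounts to the local parity fact: for every $(d-3)$-simplex $y$ of $G$, the number of $(d-2)$-simplices $x \supset y$ with ${\rm deg}(x)$ odd is even. I would first state this reduction and then prove the parity fact.

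The key tool is duality. The $(d-2)$-simplices containing $y$ are precisely the simplices $y \cup \{w\}$ with $w$ a vertex of the dual sphere $\hat y = \bigcap_{v \in y} S(v)$. Since $y$ is a $(d-3)$-simplex in a $d$-sphere, $\hat y$ is a $2$-sphere; this is where the hypothesis that $G$ is a genuine $d$-sphere (so that all dual spheres have the expected dimension) enters. I would then identify the relevant degree by iterating the preceding degree-restriction lemma: because $\widehat{y\cup\{w\}} = \hat y \cap S(w) = S_{\hat y}(w)$ is exactly the unit circle of $w$ inside $\hat y$, one gets
\[
{\rm deg}_G(y \cup \{w\}) = {\rm deg}_{\hat y}(w),
\]
the vertex degree of $w$ in the $2$-sphere $\hat y$.

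With this identification, the defect $(d-2)$-simplices through $y$ correspond bijectively to the odd-degree vertices of the $2$-sphere $\hat y$. The Euler handshake formula $\sum_w {\rm deg}_{\hat y}(w) = 2\,|E(\hat y)|$ forces the number of odd-degree vertices of $\hat y$ to be even, which is exactly the parity fact sought. The base case $d=2$ is absorbed automatically: there $y$ is the empty simplex, $\hat y = G$ itself, and the statement is just the ordinary handshake lemma telling us that a $2$-sphere has an even number of odd-degree vertices, so that the $0$-dimensional defect variety is a mod-$2$ cycle.

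The parity count is thus the easy half. I expect the main obstacle to be purely definitional bookkeeping: pinning down precisely what \textbf{boundary point} and \textbf{normalization} mean for the pure $(d-2)$-dimensional complex $H$, and verifying that the even-coface condition is genuinely equivalent to the absence of boundary points after normalization. In particular, branch points where a $(d-3)$-face meets four or more defect $(d-2)$-simplices must be resolved during normalization so that $H$ becomes a closed $(d-2)$-manifold and not merely a mod-$2$ cycle. Checking that $\hat y$ is in every case an honest $2$-sphere and that the iterated restriction lemma applies verbatim is routine, but should be recorded carefully.
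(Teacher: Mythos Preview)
Your argument is correct and follows the same core idea as the paper: pass to a $2$-sphere associated with a codimension-$3$ face and invoke the Euler handshake formula there to force the number of odd-degree $(d-2)$-simplices through that face to be even. The paper only writes out the case $d=3$ (taking $y=b$ a vertex and working in $S(b)$), whereas you phrase it uniformly via the dual sphere $\hat y$ and the degree-restriction lemma, and you are also more explicit than the paper about the reduction from ``no boundary point'' to the mod-$2$ coface count; but the substance of the two proofs is identical.
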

\begin{proof}
Let us look at the case $d=3$: we have a collection of edges. Let $x=(a,b)$ be an edge and 
$S(a) \cap S(b)$ the circular graph $\hat{x}$. Look at the sphere $S(b)$. As it has one odd
degree vertex, there must exist an other odd degree vertex as well as a few other pairs. 
\end{proof} 

\section{Remarks}

\paragraph{}
The reverse operation of an {\bf edge refinement}
$R_{(a,b)}:   (V,E) \to (V \cup \{c\},(E \setminus \{(a,b)\}) \cup \{(a,c),(b,c) \} \cup 
     \{ (c,d) \; | \;  d \in S(a) \cap S(b) \})$
is an {\bf edge contraction} $C_{(b,c)}$
in which the edge $(b,c)$ is removed, vertices $b,c$ are merged and edges $(x,b),(x,c)$ are identified.
In the category of simple graphs, any multiple edges are identified so that both edge refinement and
edge contraction preserve the class of finite simple graphs. While we can reverse an
edge refinement with an edge contraction, we can not always reverse an edge contraction
with an edge refinement as edge contraction can change dimension: an edge
contraction of $K_n$ is $K_{n-1}$ but $K_1$ has no edges any more so that $K_2 \to K_1$
can not be reversed. Edge contraction also can change the topology and dimension. A
contraction of the one dimensional $C_4$ gives $K_3$ which is two dimensional.
On the other hand, edge refinement never changes the maximal dimension of the simplicial
complex, nor the nature of the topology.

\paragraph{}
Both edge refinement as well as edge
contraction can be realized as discrete homotopy deformations.
Unlike homotopy deformations,
edge refinements $R$ are also ``continuous" deformations as they preserve dimension.
The graphs $G, R(G)$ are homeomorphic
in the sense of \cite{KnillTopology}, which generalizes the notion of homeomorphism
when seeing graphs as one-dimensional simplicial complexes. We always see a graph equipped
with the Whitney complex, the set of vertex sets of complete subgraphs rather than the
$1$-dimensional skeleton complex $V \cup E$. So, we establish here that the
Eulerian property is no topological obstacle, at least not in two dimensions. In three
and higher dimensions, it might well be different and one reason why graph coloring of
2 spheres is so much harder than graph coloring of 1-spheres.

\paragraph{}
There is not only a homotopy or homeomorphism picture, there is also
a differential geometric aspect as the Eulerian property implies 
the existence of a geodesic flow on the graph. A natural question is under
what conditions we have a flow in higher dimensions. 
Again, in three and higher dimensions, we
need a bit more structure to establish that. It is only in two dimensions, that an
evenness condition establishes the existence of an antipodal map on the unit spheres.
In three dimensions already, we need the unit spheres $S(x)$ on which there is an involutive
graph automorphism without fixed points, allowing a geodesic to propagate through $x$.

\section{Questions}

\begin{itemize}
\item Can we always edge refine a given 2-graph to become Eulerian and additionally 
achieve that the geodesic flow is {\bf ergodic}? The same question can be asked for
billiards in a 2-disk with boundary length divisible by 3. The geodesic flow on an Eulerian 
graph is called {\bf ergodic} if it covers every edge at least ones. This is an Eulerian path.
Apropos: we know that 2-graphs are always Hamiltonian already. But how frequent is 
the set of ergodic Eulerian 2-graphs? 

\item Given an Eulerian 2-graph and 2 vertices $a,b$, the shortest connection between
$a$ to $b$ might not be a geodesic flow in the sense discussed here. In other words, 
there is no {\bf Hopf-Rynov theorem} for finite graphs. The simple reason is that 
there are more vertices in the graph than in the unit sphere. But we can ask:
is it possible for every pair $a,b$ in the graph to make further edge refinements so that
there is a shortest geodesic between a and b? If yes, then we can define from the Eulerian
graph $G$ a new graph $H$ such that every pair of vertices in $G$ has a geodesic in $H$. 
Now, since $G$ is a subgraph of $H$, we get so a sequence $G_n$ of graphs and so a 
pro-finite limit $\overline{G}$, which is the {\bf geodesic completion} of $G$. This is no
more a finite graph but it is a model for a 2-dimensional space in which Hopf-Rynov
works. We believe that the pro-finite limit of the Barycentric refinement sequence $G_n$
also has this Hopf-Rynov property. 

\item We currently believe that the result has higher dimensional versions:
all d-graphs, combinatorial discrete manifolds (finite simple graphs
for which all unit spheres are $(d-1)$ spheres) can be edge refined to become Eulerian. 
We know already that these manifolds are always Hamiltonian. As there are tori with chromatic
number 5, we know also that we can in general not edge refine a discrete 3-manifold with boundary 
using edge refinements in the interior. For {\bf simply connected} 3-manifolds with boundary it
could however to be possible. 

\item Not every graph can be the image of a refinement. The icosahedron is an example,
for which the odd particle density is 1. Assume, we are given a graph and want the opposite 
of Eulerian, have as many odd degree vertices as possible.  What is the 
maximal {\bf odd vertex degree density} we can achieve through edge refinement? 
Can we always get to density 1? 

\item For coloring the boundary, we don't need the even degree condition at the boundary. 
What is the analogue of divisibility by 3 in higher dimensions? Is there a condition at
all in dimensions $d \geq 3$ which corresponds to the divisibility condition 
in two dimensions? Or are there no conditions necessary any more to render
a 3-ball Eulerian. 

\item This result relates to an
observation of Jendrol and Jucovic about impossible triangulations of the 2-torus.
But there is not always a connection. There are Eulerian graphs like 
4-8 pairs which show the phenomenon. For us here, the 4-8 pair is not a particle pair
We are interested in odd degree vertices. 

\item 
It is known that the chromatic number of a $2$-torus can be $5$. The torus can be
seen as part of the solid torus. As there are 2-tori with chromatic number 5, 
the coloring arguments do not go through. One reason is that Kempe-Heawood needs the
space to be simply connected. The Eulerian property of a solid torus does not imply that 
it can be colored with 4 colors. We would like also to understand the
defect structure, which now be a homotopically non-trivial closed curve
\end{itemize}

\section{Illustrations}

\begin{figure}
\scalebox{0.17}{\includegraphics{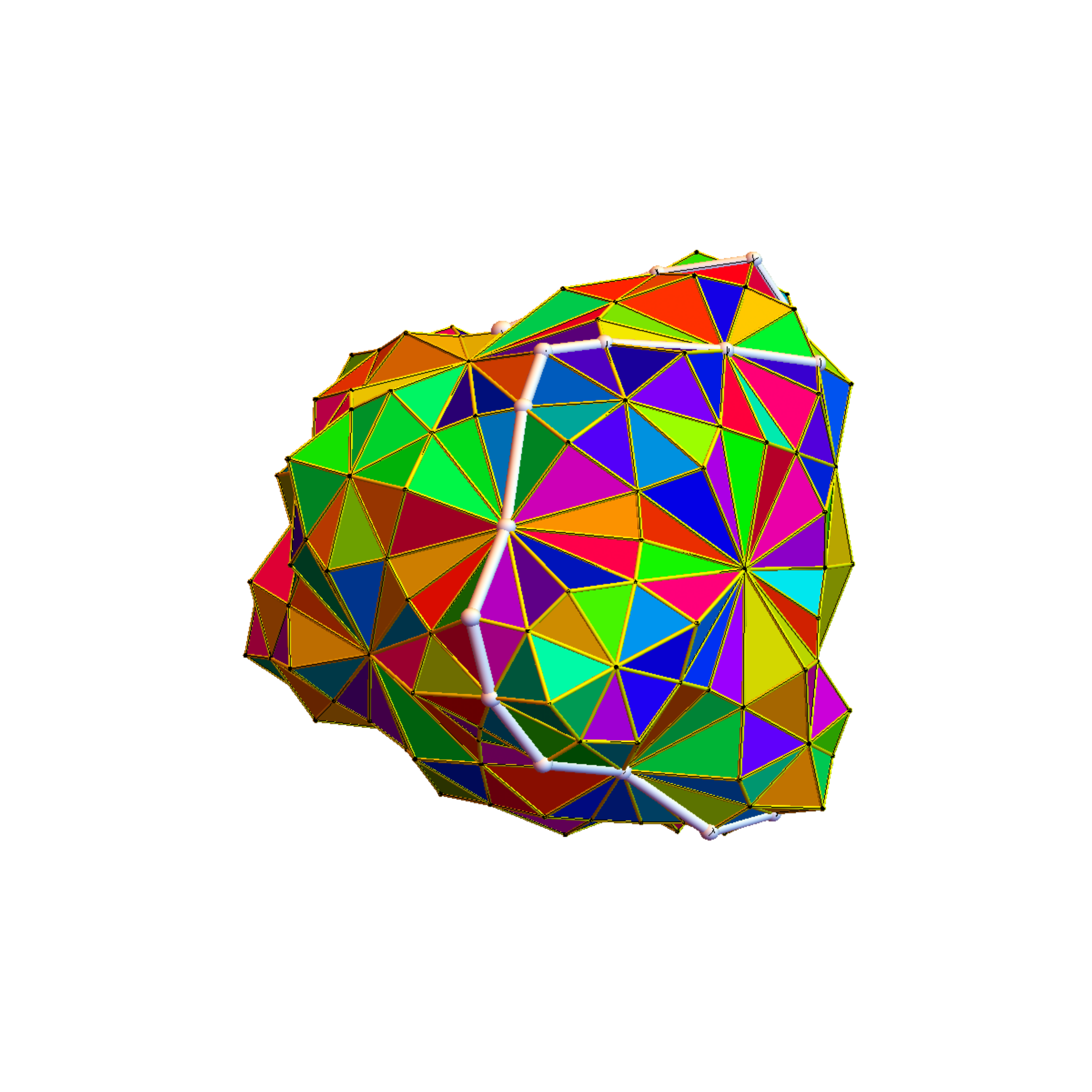}}
\caption{
\label{Barycentric}
For an Eulerian 2-graph the geodesic flow
is defined. This figure shows an ergodic sphere a 2-sphere
for which the flow has has only one component.
The geodesic is then an Eulerian path: a closed path which visits
every edge exactly once.
}
\end{figure}

\paragraph{}
Ergodic spheres can already be obtained by starting with an icosahedron,
running a flow to make the graph Eulerian. We provide in an appendix
code which allows to get Eulerian refined graphs from any 2-graph. 

\begin{figure}
\scalebox{0.17}{\includegraphics{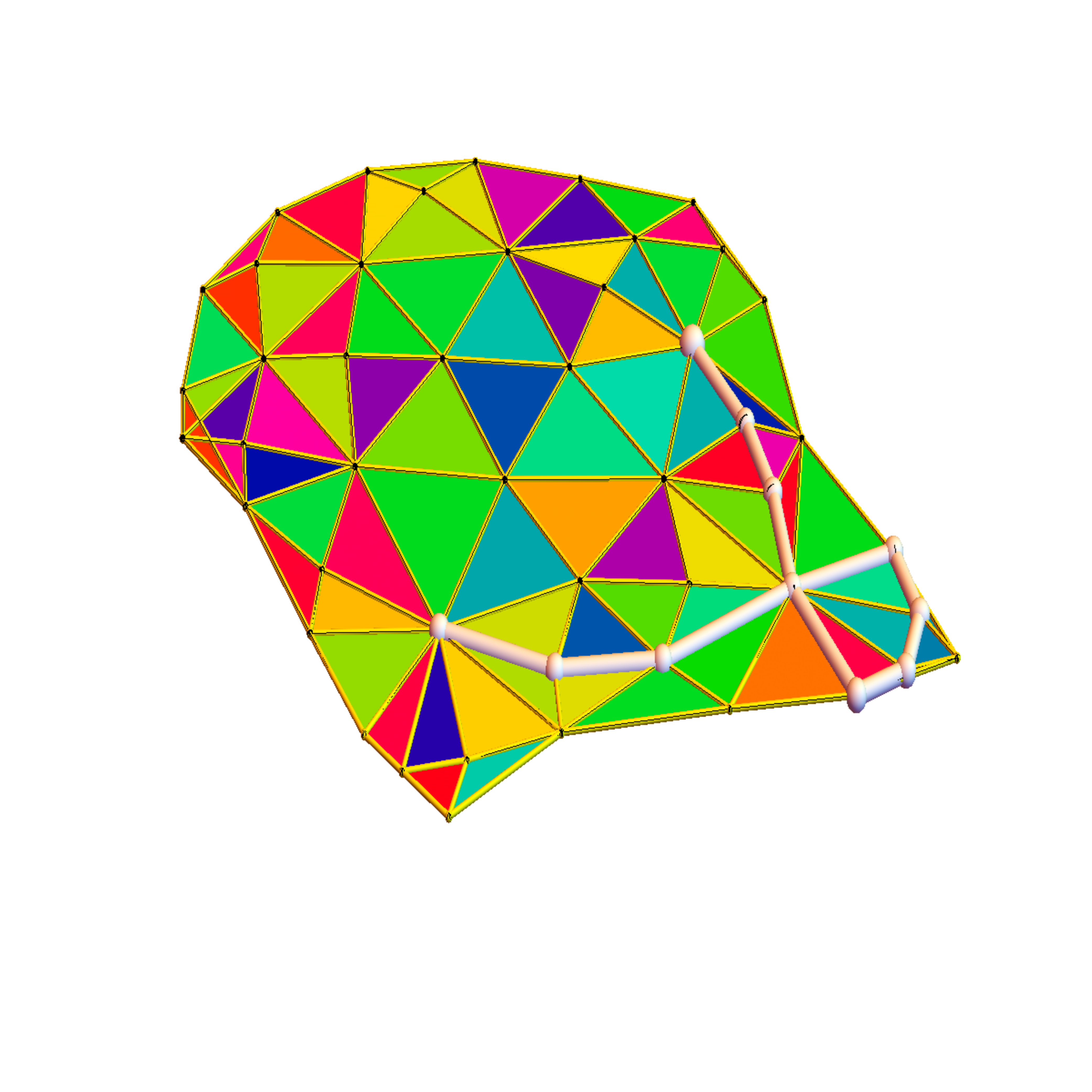}}
\caption{
\label{Billiard}
Part of a billiard trajectory in an Eulerian graph. 
The graph is a 2-ball. It had been refined to become Eulerian. 
}
\end{figure}

\begin{figure}
\scalebox{0.12}{\includegraphics{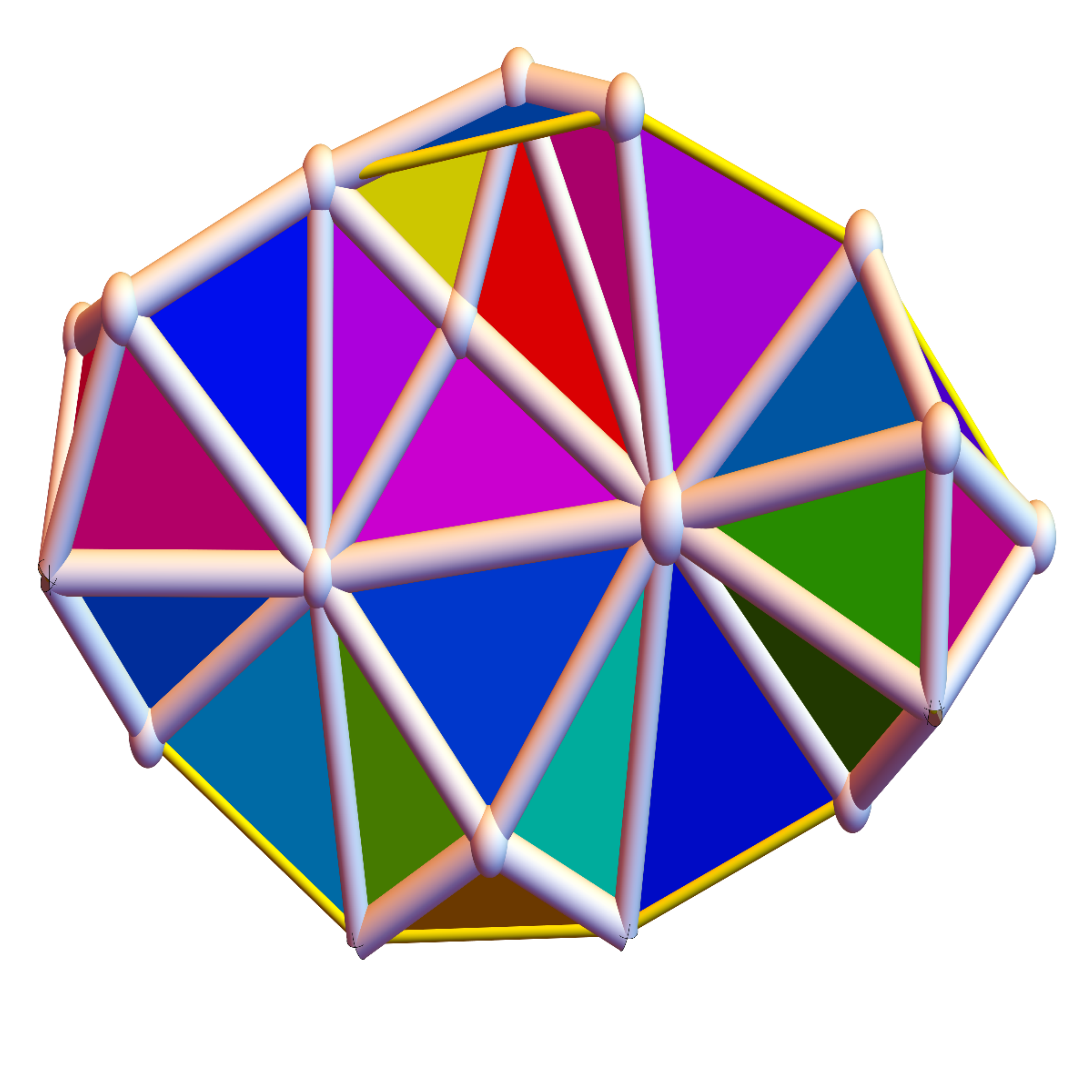}}
\caption{
\label{Bunimovich}
An example of a Bunimovich type discrete billiard. It is ergodic. 
There is also the boundary path but this is not counted as
an ergodic component when looking at discrete billiards.
As we have an ergodic billiard, we have Hopf-Rynov: there is 
a geodesic connection between two points.
}
\end{figure}

\begin{figure}
\scalebox{0.17}{\includegraphics{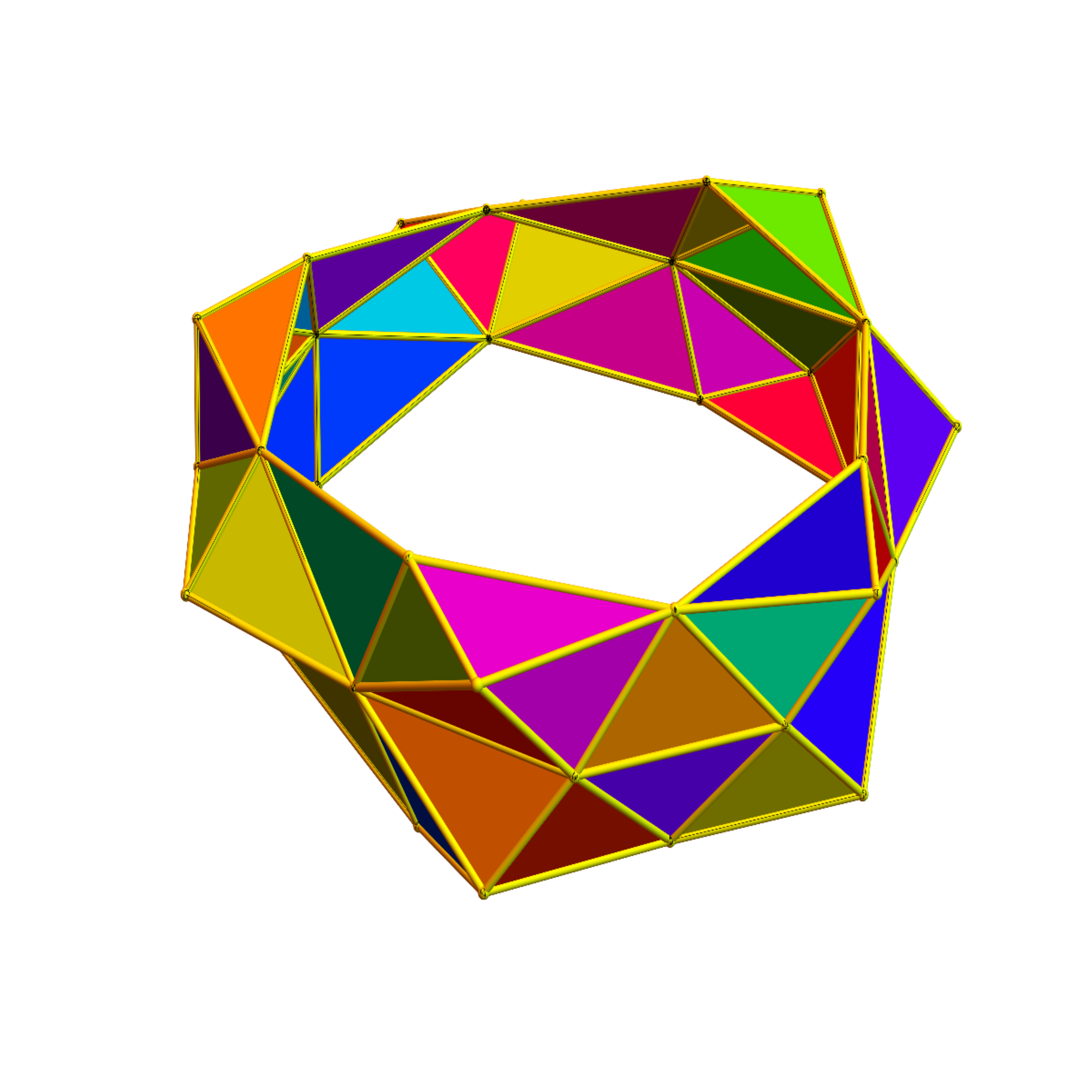}}
\caption{
\label{Cylinder}
A cylinder which can was edge refined, even so both 
boundaries had length $8$. We see that the mod 3 property 
does no more hold here. 
}
\end{figure}

\begin{figure}
\scalebox{0.07}{\includegraphics{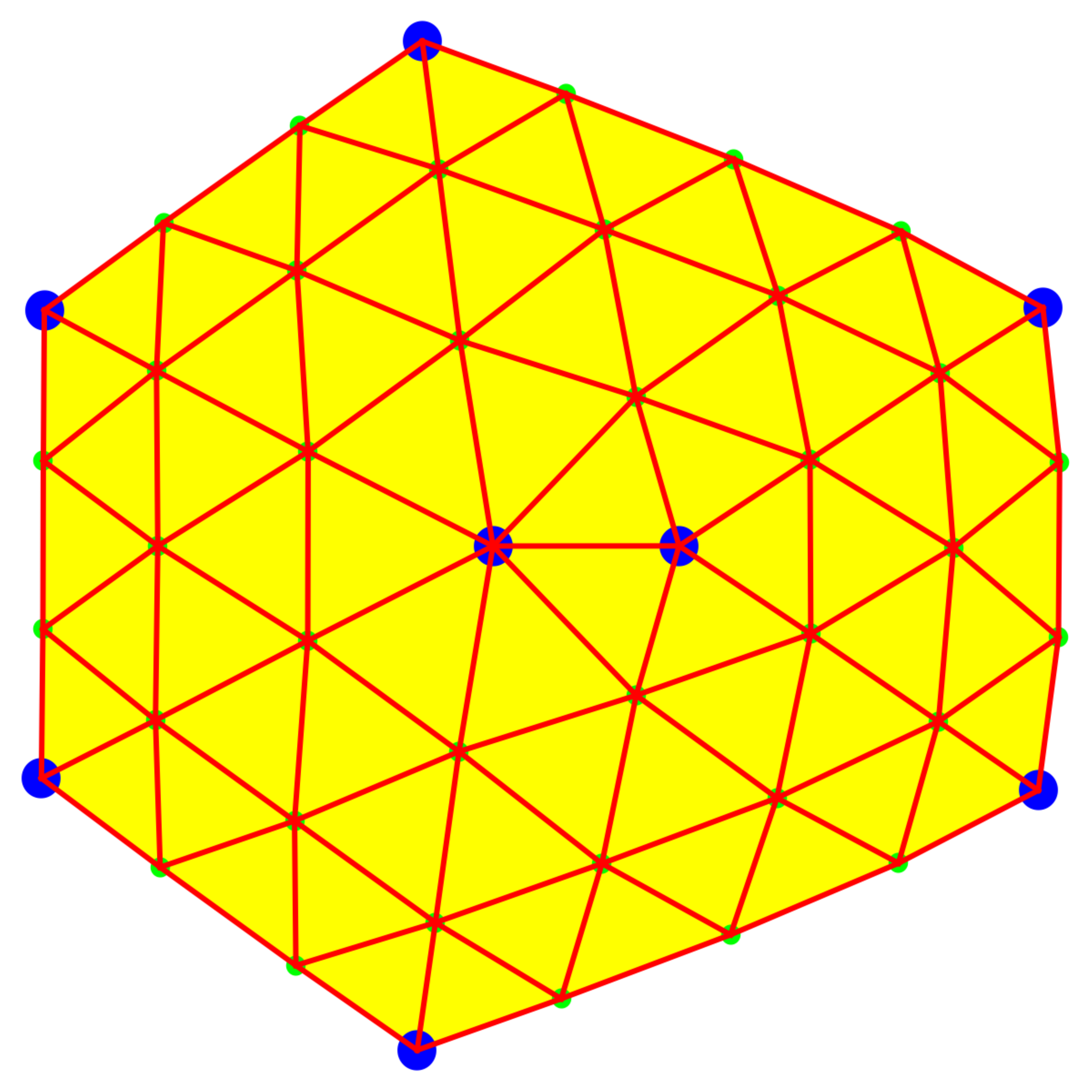}}
\scalebox{0.07}{\includegraphics{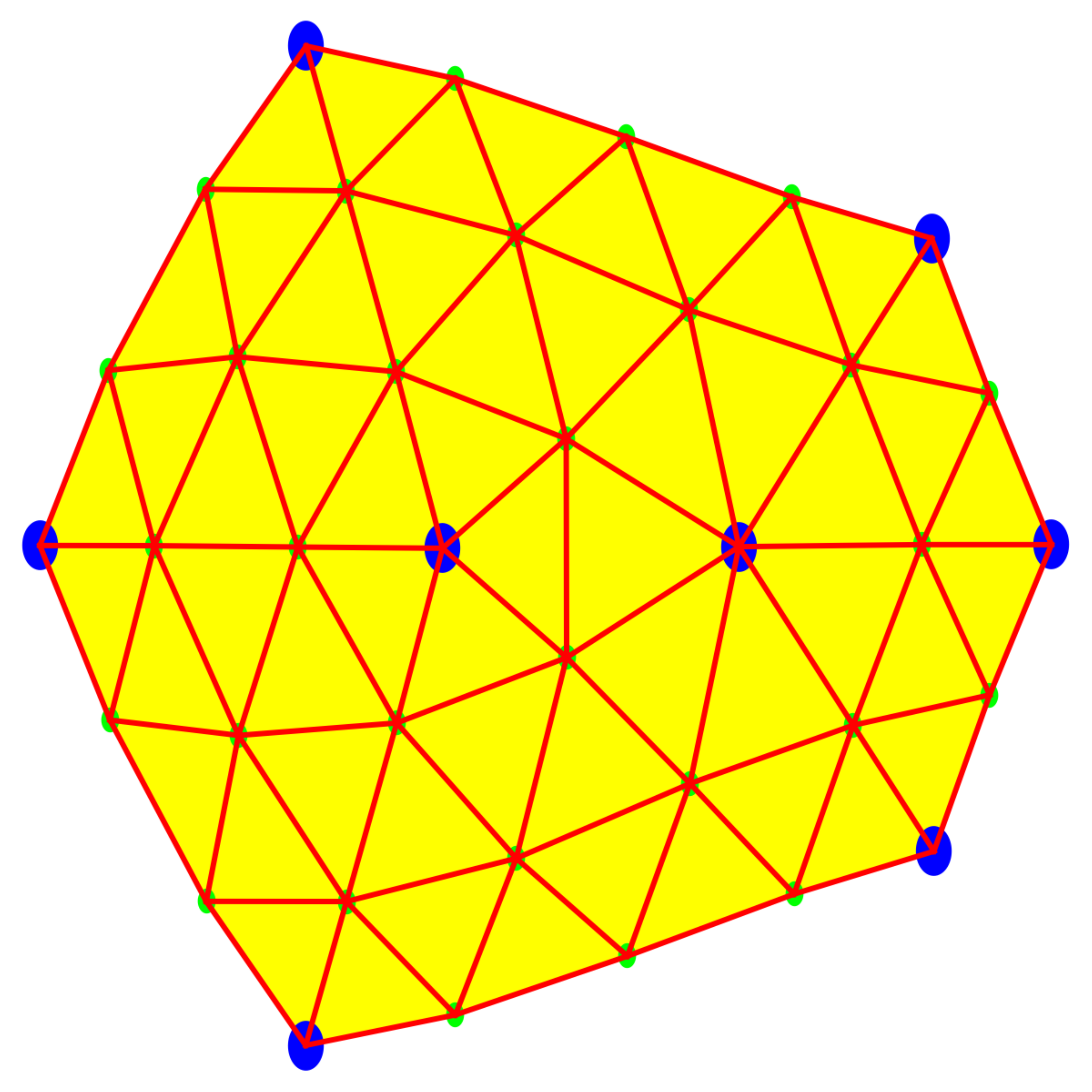}}
\caption{
\label{Refine}
We see two discs containing a 5-7 particle pair.
The situation to the left can not be edge refined within the
interior to become Eulerian. The second one can be
refined to become Eulerian because the boundary length is divisible by 3.
}
\end{figure}

\begin{figure}
\scalebox{0.62}{\includegraphics{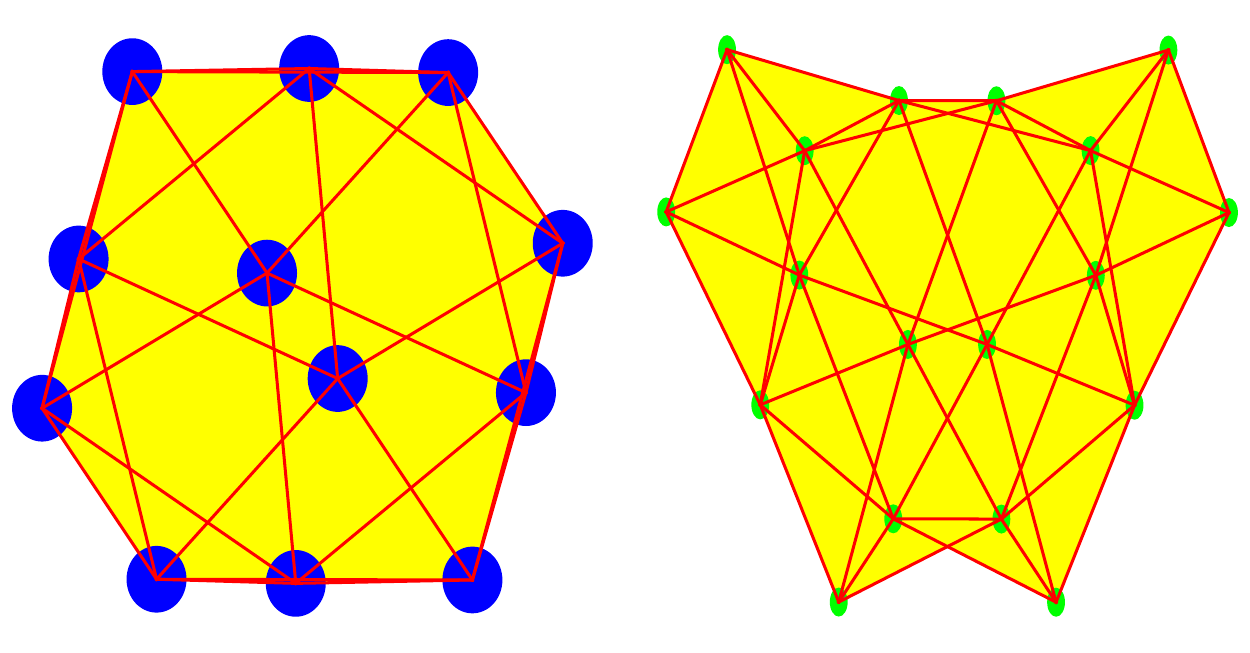}}
\scalebox{0.62}{\includegraphics{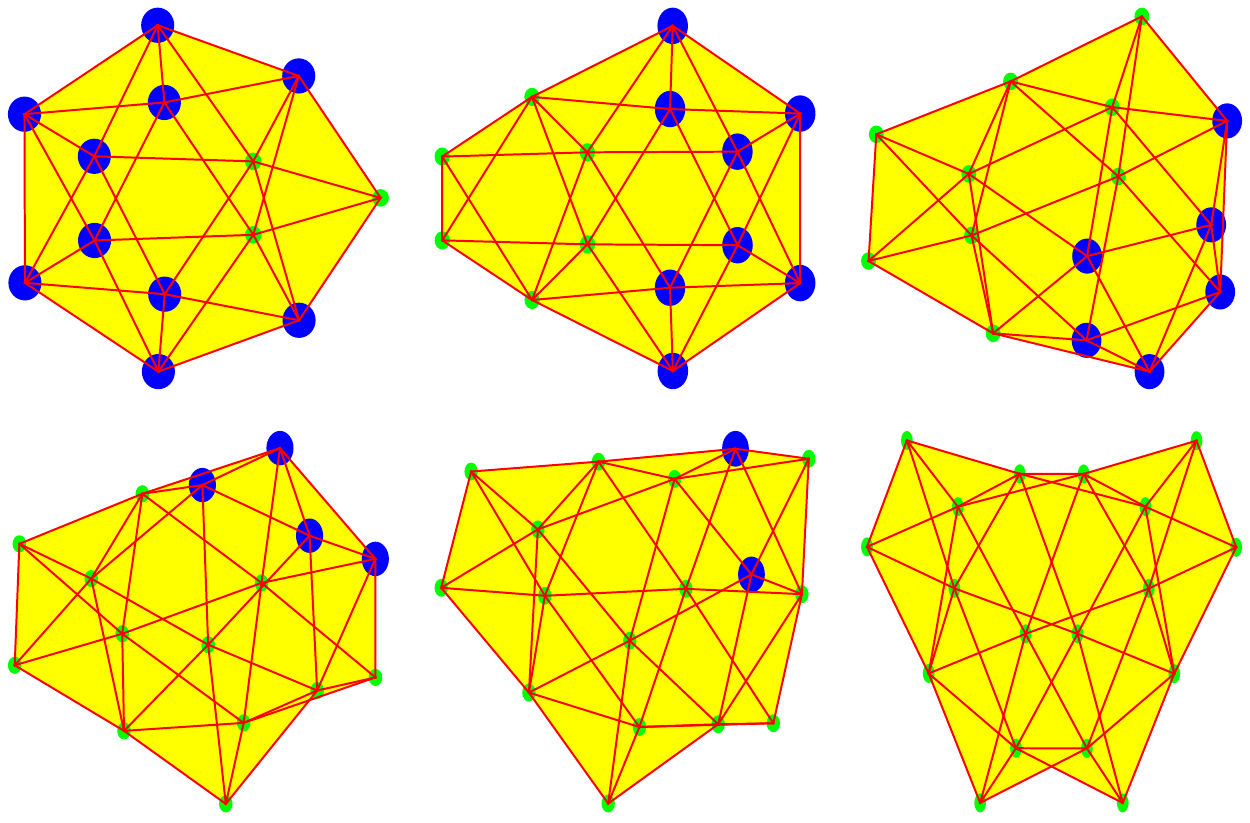}}
\caption{
\label{Refine}
Refining a s-sphere, an icosahedron graph. Initially, all vertices have
odd degree. After edge refinement, all degrees are even. 
}
\end{figure}

\begin{figure}
\scalebox{0.62}{\includegraphics{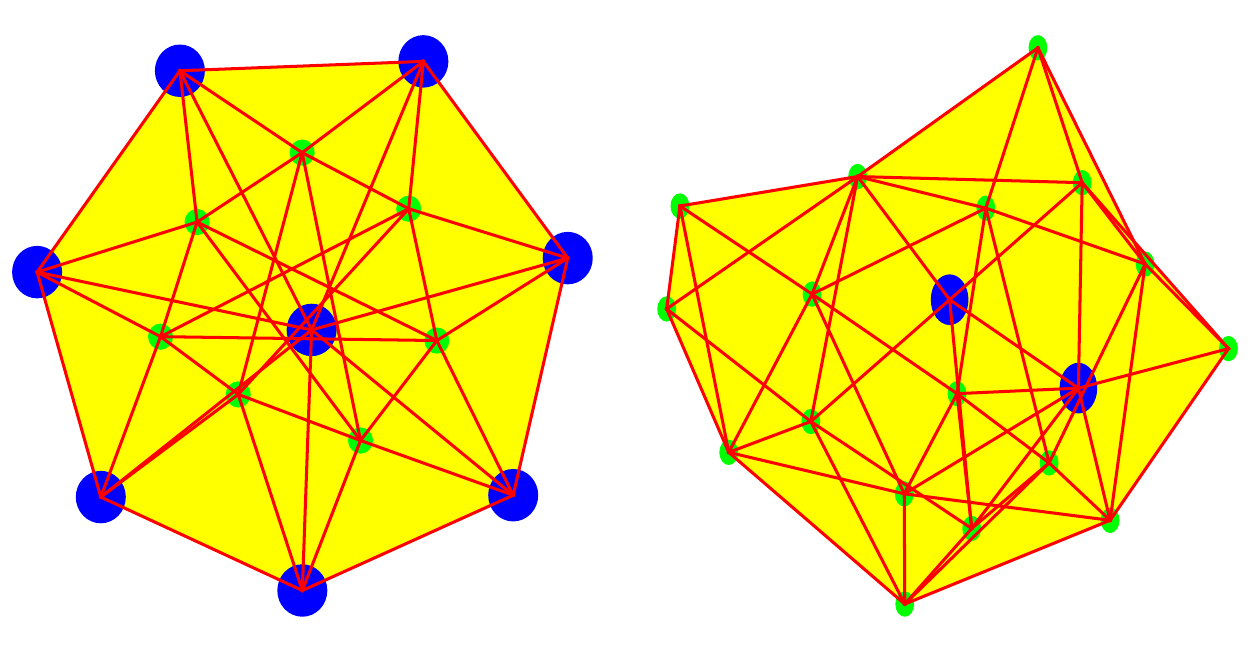}}
\scalebox{0.62}{\includegraphics{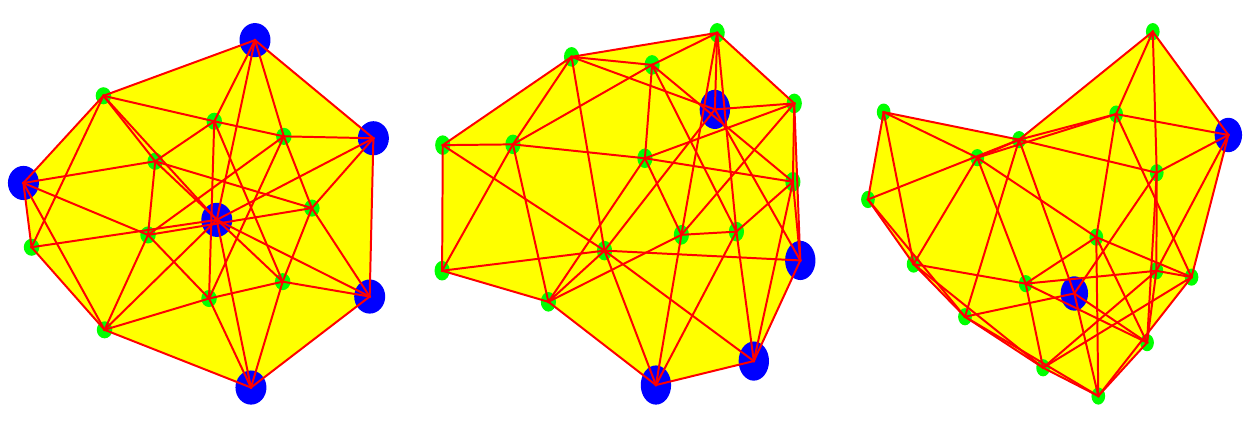}}
\caption{
\label{Refine}
Refining a small projective plane. In this example, we can not
get rid of the two last particles at first directly.
}
\end{figure}

\begin{figure}
\scalebox{0.62}{\includegraphics{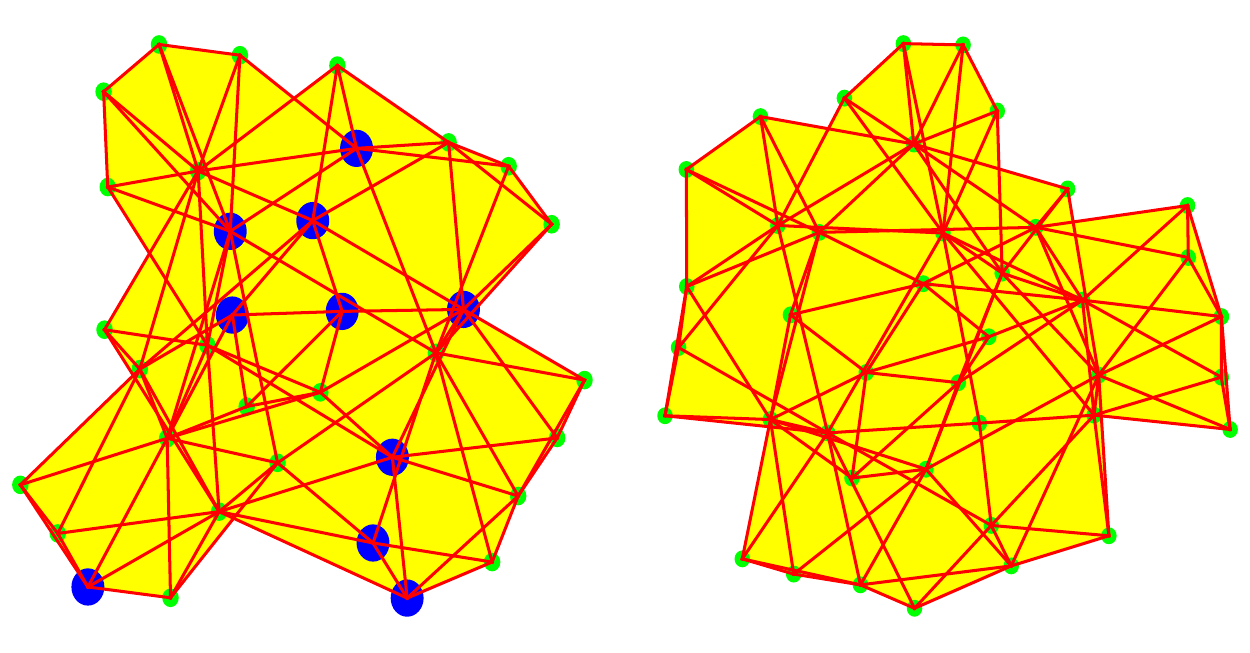}}
\scalebox{0.62}{\includegraphics{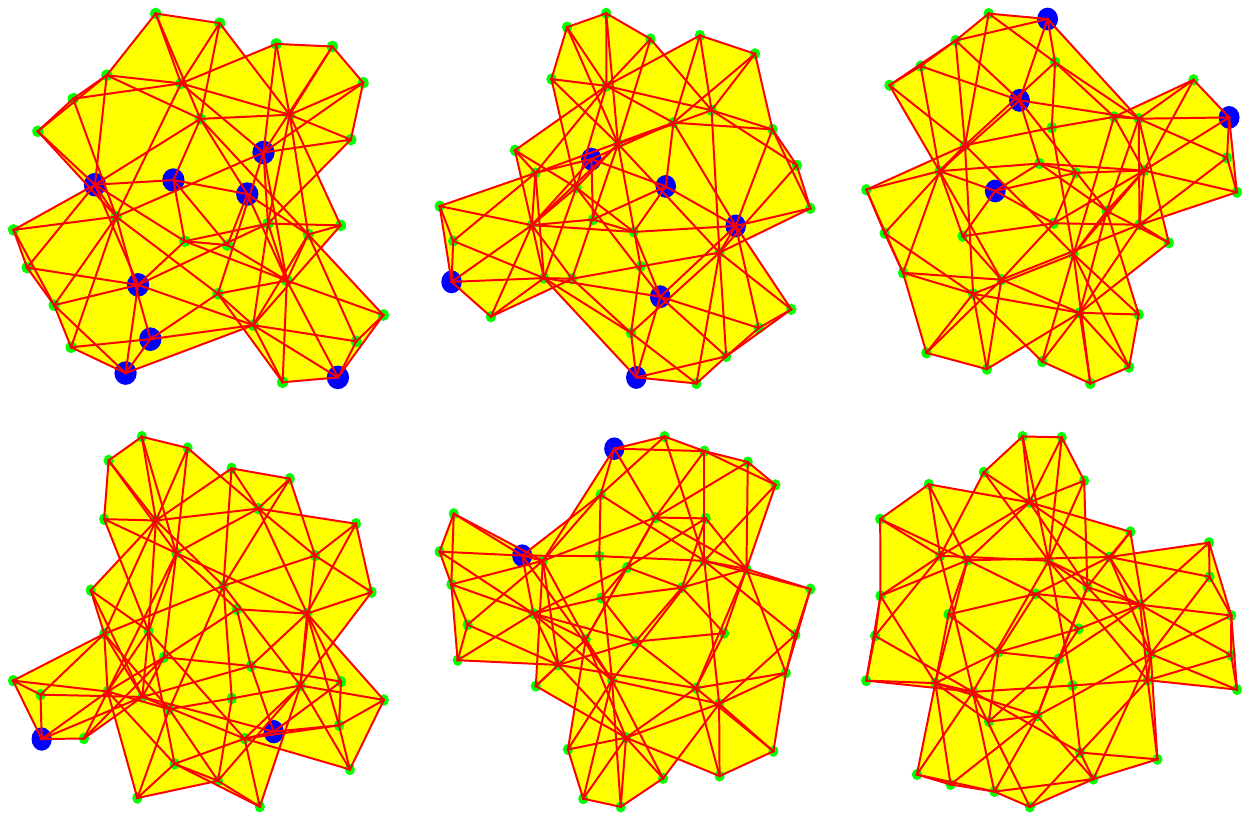}}
\caption{
\label{Refine}
Refining a small projective plane. We took the same example but first
made 20 random refinements to allow for more room. Now, we could get
rid of all the particles. 
}
\end{figure}

\begin{figure}
\scalebox{0.62}{\includegraphics{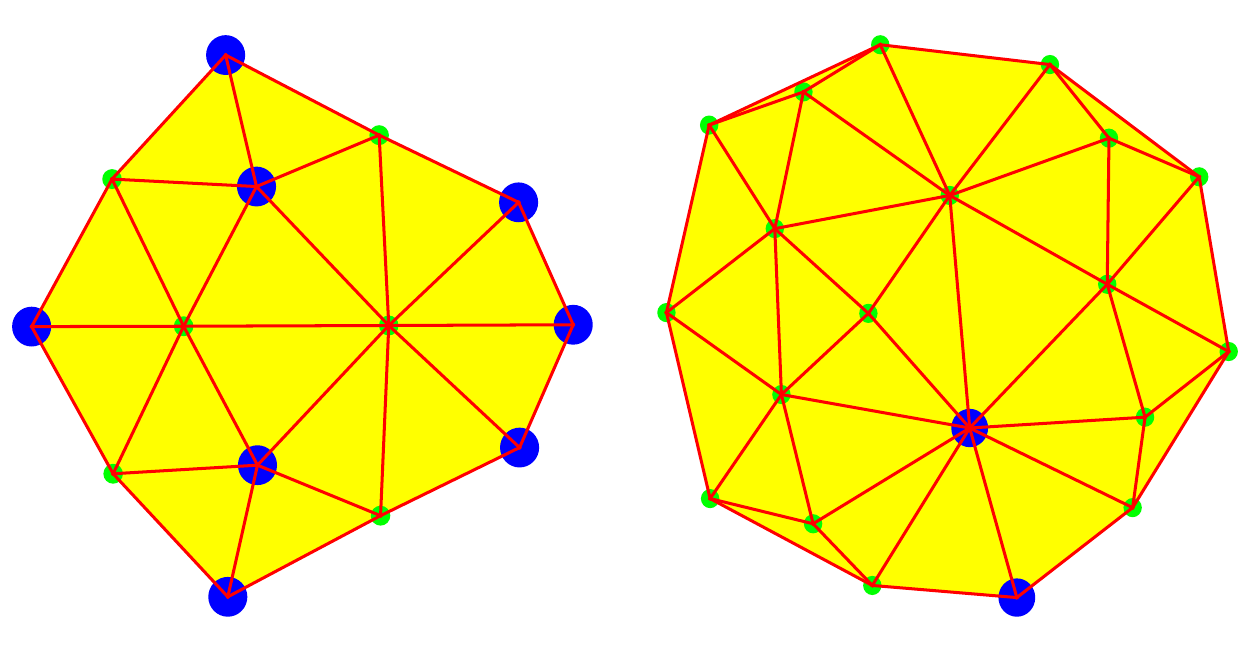}}
\scalebox{0.62}{\includegraphics{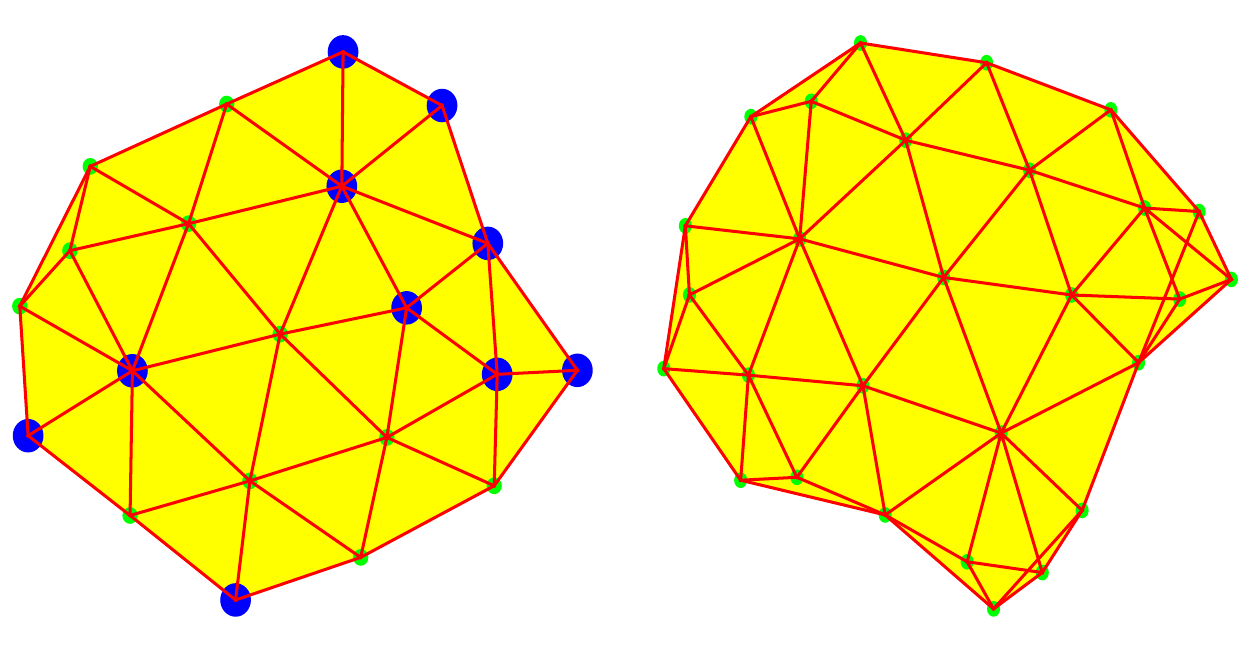}}
\caption{
\label{Refine}
Above: For a 2-ball with boundary of length 11, we can not get rid of all
odd vertices. 
Below: For a 2-ball with boundary of length 12, we can get rid of all
odd vertices.
}
\end{figure}

\begin{figure}
\scalebox{0.12}{\includegraphics{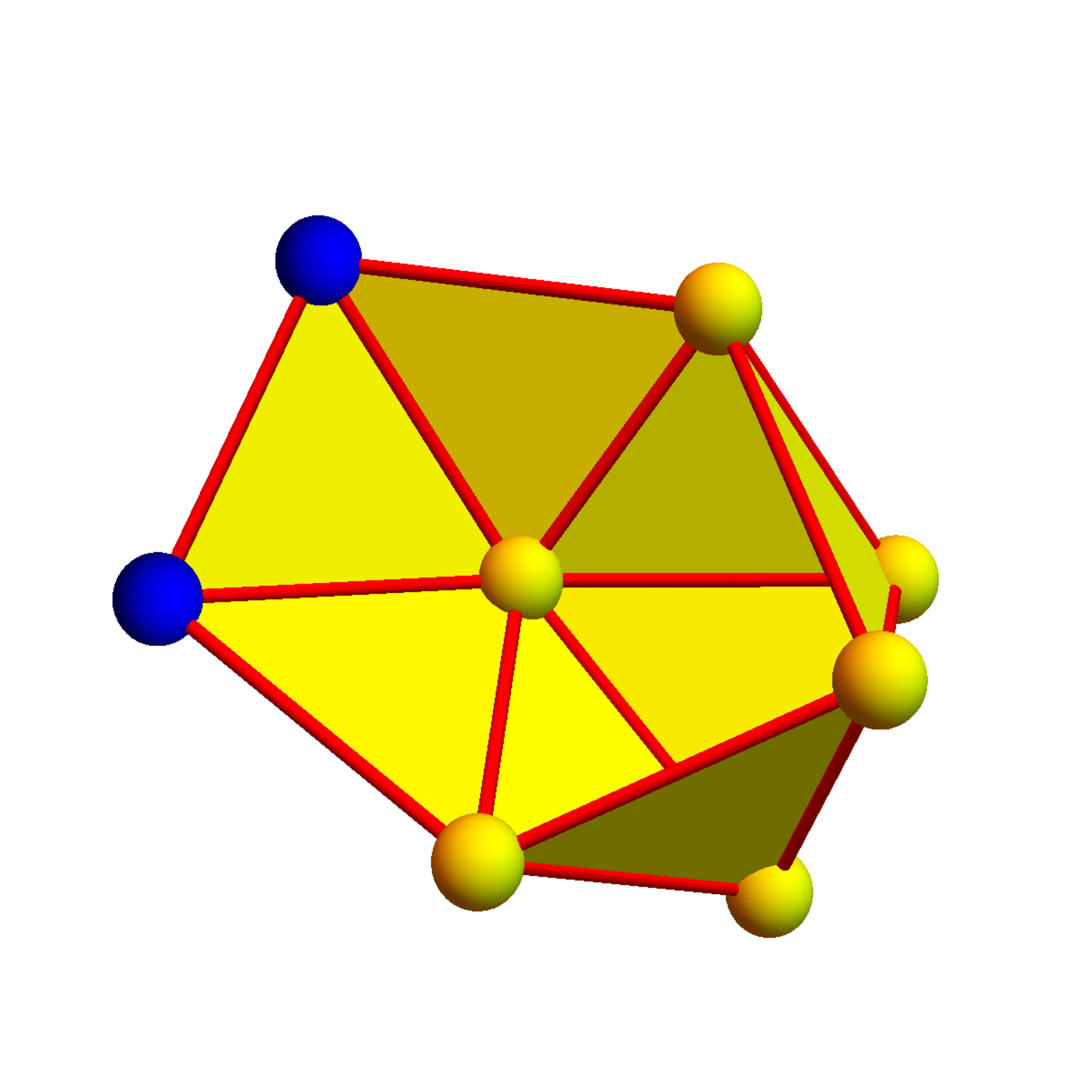}}
\caption{
\label{wheel5}
We see a disk of length 5 which can not be refined. There 
are two odd degree vertices present. Taking away the triangle
containing the two vertices (or adding an other triangle)
produces a graph with boundary
length $6$ which is Eulerian. We see from this picture that
in a disk with length divisible by 3, it is not possible to 
have two adjacent isolated odd degree vertices. 
}
\end{figure}

\begin{figure}
\scalebox{0.08}{\includegraphics{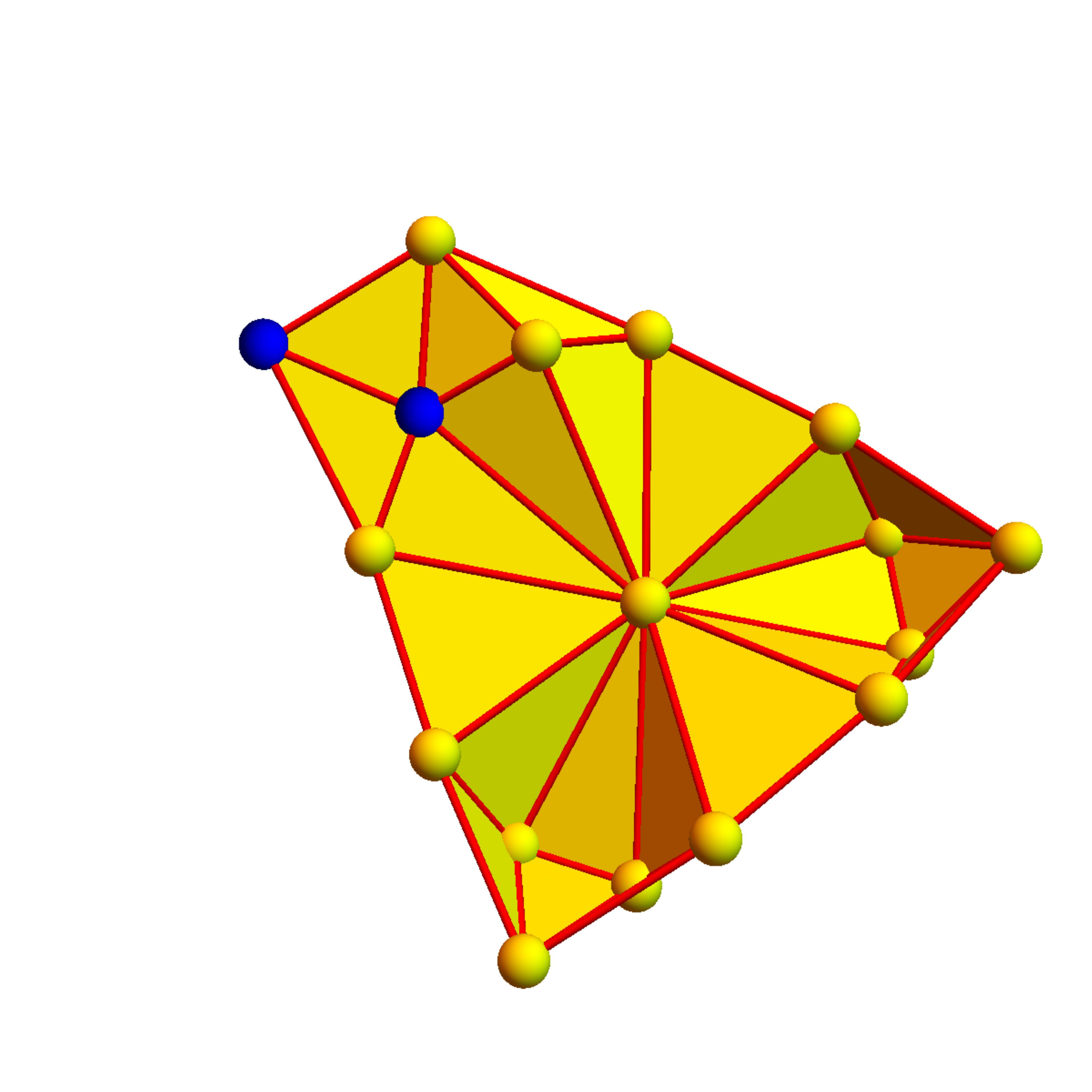}}
\scalebox{0.08}{\includegraphics{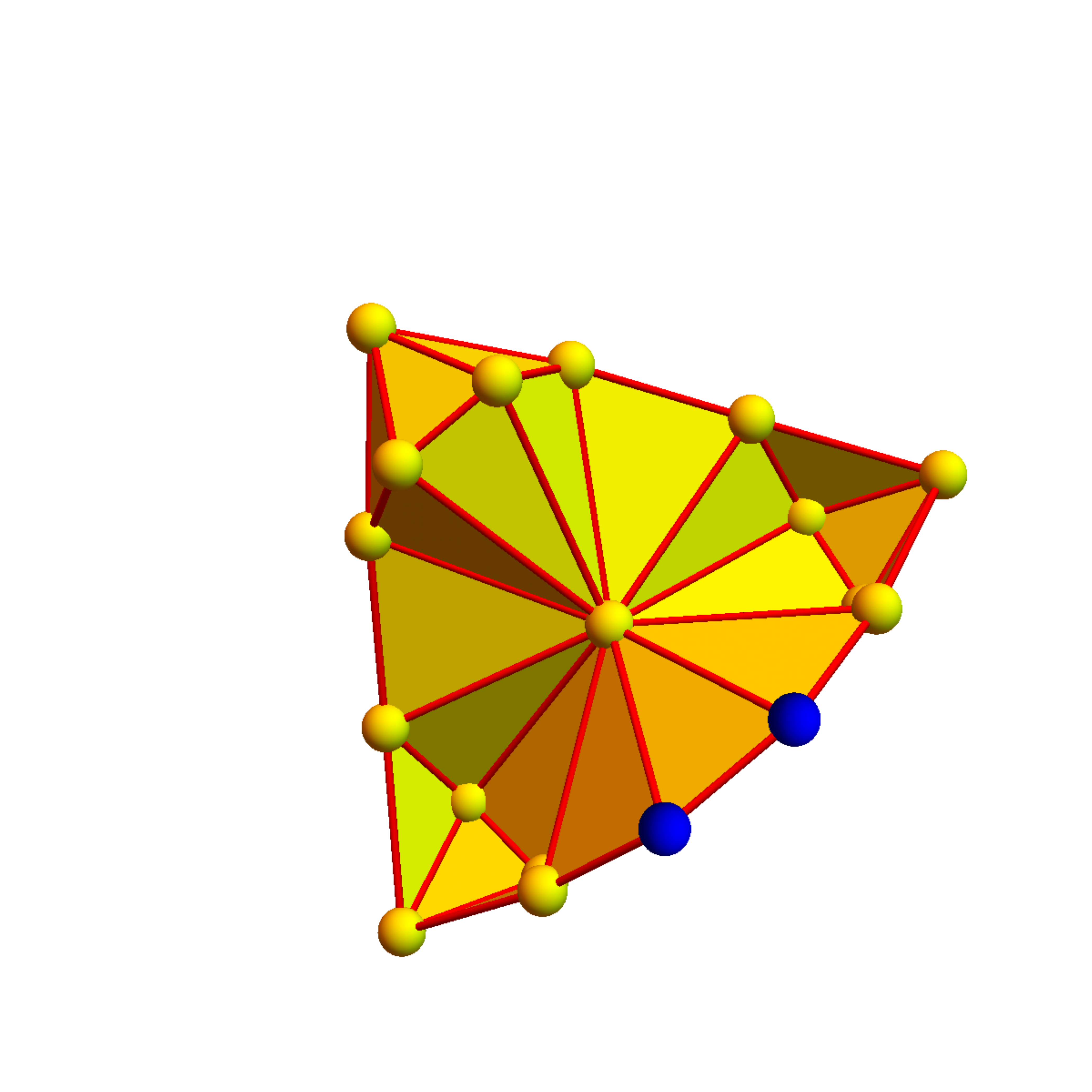}}
\scalebox{0.08}{\includegraphics{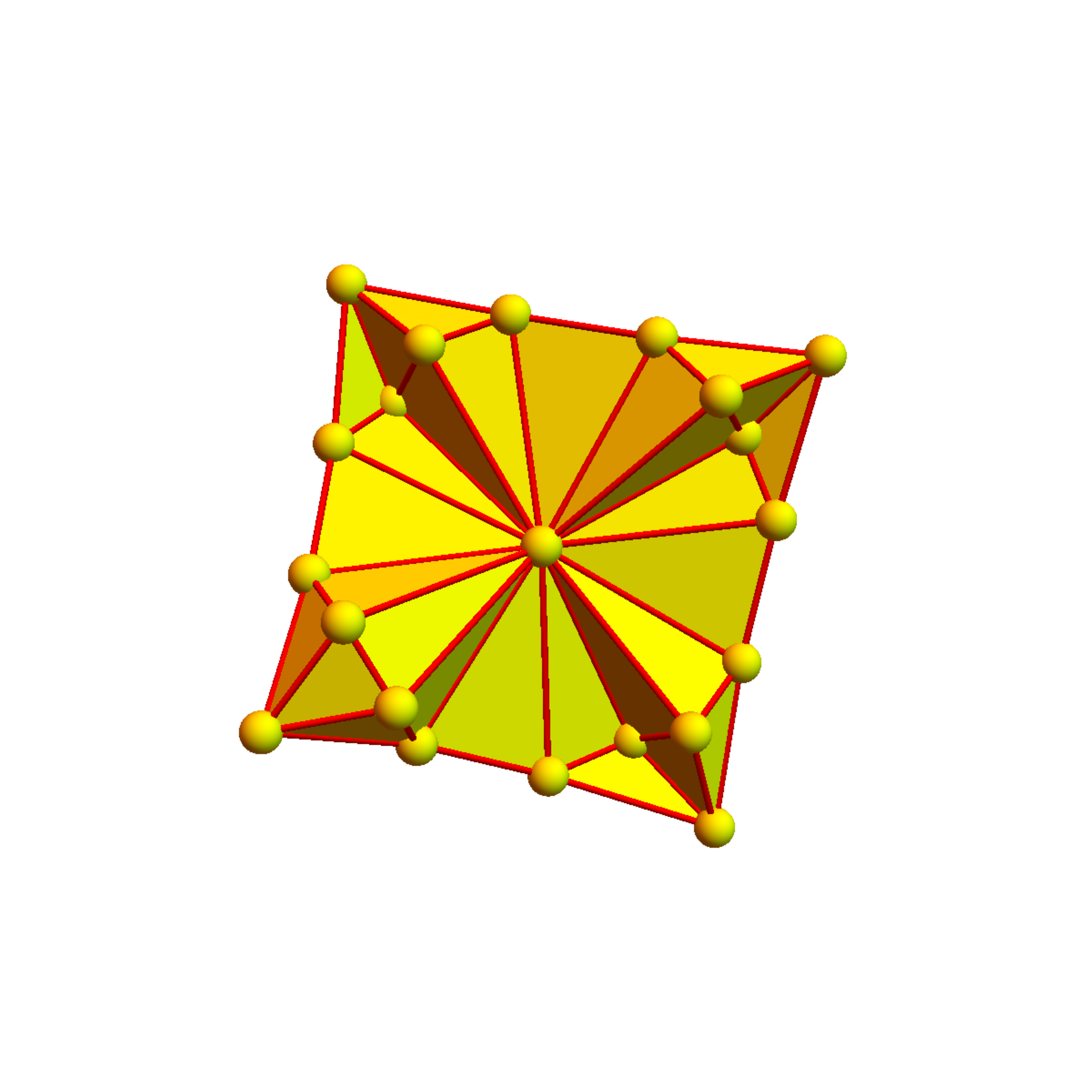}}
\caption{
\label{Refine}
Refined wheel graphs with boundary length $10,11,12$.
In the first two cases, we can not get rid of all 
odd degree vertices. 
}
\end{figure}

\begin{figure}
\scalebox{0.18}{\includegraphics{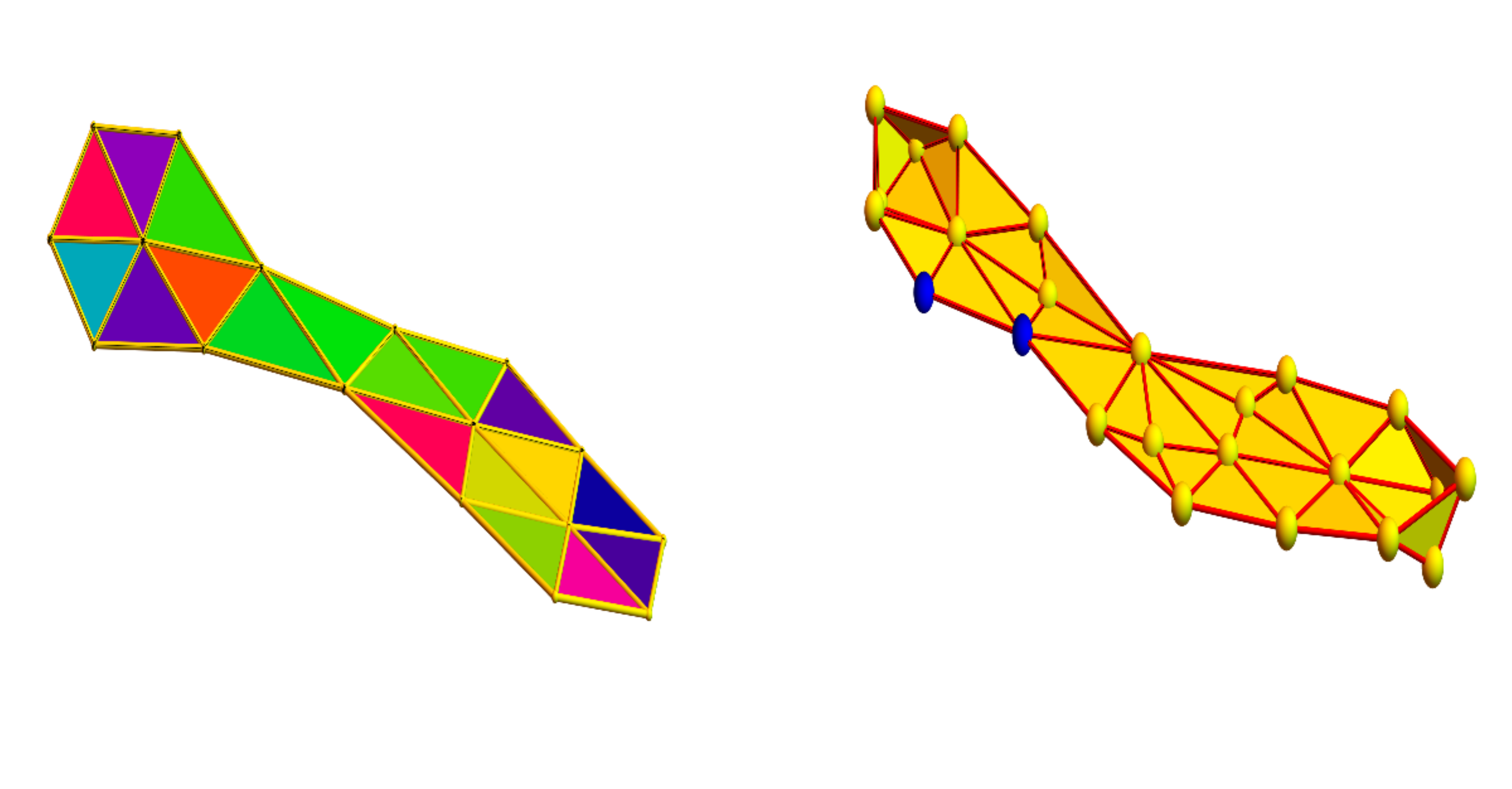}}
\scalebox{0.18}{\includegraphics{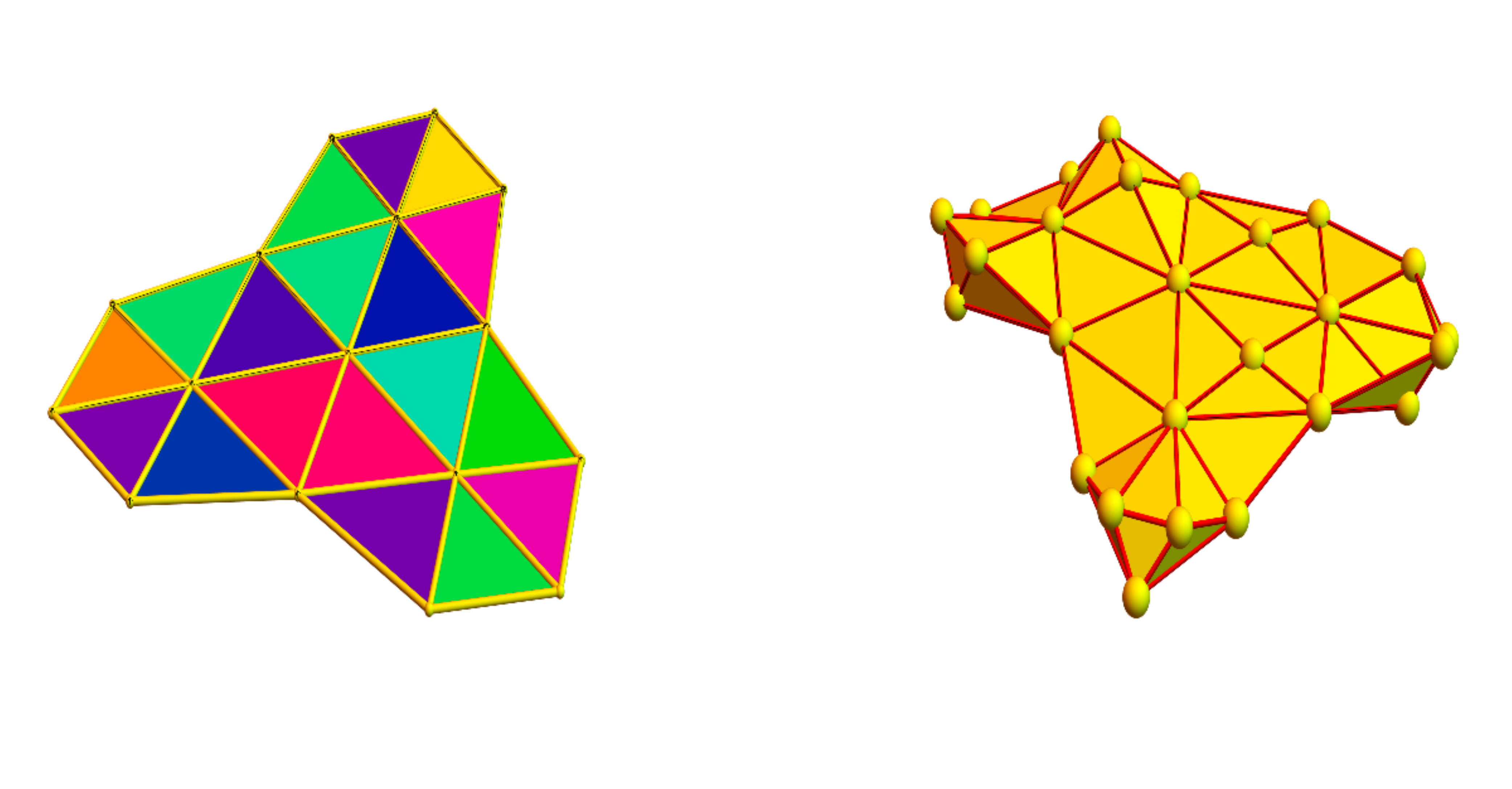}}
\caption{
\label{Thin}
After edging away the boundary, a disc can become thin and
fall apart. Here we see two thin disks and their edge
refinement. In the first one the boundary length was 14, 
in the second, it was 12. The second one could be refined to 
become Eulerian, the second one not. The billiards in the 
second case is far from ergodic. There are 7 ergodic components.
}
\end{figure}

\section*{4-color Strategy}

\paragraph{}
Let us sketch a strategy to prove the sphere conjecture in the case $d=2$, where
it is a known theorem, the 4-color theorem. While not fleshed out yet, it motivates 
and uses both theorems proven here. In the following, we assume that $G$ is a fixed $d$-sphere.  \\

{\bf Step 1:} There exists a ball $B$ with boundary $G$ and central vertex $v$
for which $S(v)$ is contained in the interior of $B$. Proof. First start with a cone
extension $B_1=x+G$, then do edge refinements until no edge containing $v$ can reach
$G$ any more. The unit sphere $S(x)$ is now completely in the interior of the ball $B$. \\

{\bf Step 2:} Use edge refinements so that for every interior point, every unit sphere
$S(v)$ intersects $G$ only in a $1,2,\cdots (d-1)$-simplex (or the empty graph). \\

{\bf Step 3:} Use edge refinements with edges in $S(x)$ to render the unit sphere 
$S(v)$ Eulerian. This is possible by Theorem I. Now, the edge degrees of all edges containing $x$
have even degree. The edges containing $x$ will no more be touched from now on. 
The already clean out part of $H$ is called $C$. 

{\bf Step 4:} We additionally can assure that there is a
vertex $v$ for which the vertex degree is divisible by $3$. 
The reason is that if we cut along a closed geodesic, 
then the vertex degree of each vertex changes by $2$.  \\

{\bf Step 5:} Take this vertex $y \in S(x)$ and look at the sphere $U=S(y)$. 
It contains a $(d-1)$-ball $V = B(x) \cap U$ and $W=B(x) \cap G$. 
render $U$ Eulerian, away from $V \cup W$ without cutting edges in $V$.
This means that all edges in the interior
of $B(y)$ have even degree. From now on, we will no more touch any edges containing
$y$. Call $C=B(x) \cup B(y)$. This is a ball. We will no more touch this 
cleaned out set $C$ except refinements which assure that the boundary of the 
disks to be cleaned out has length divisible by 3.  \\

{\bf Step 6:} Take a vertex $z \in \delta C$ and look at the sphere $U=S(z)$.
The graph $B= S(z) \cap C$ is a ball in $S(z)$. Also look at $K=S(z) \cap G$
which is a simplex. We can render $U$ Eulerian away from
$V \cup W$. Again rename $C=B(x) \cup B(y) \cup B(z)$ which is the new clean set.  \\

{\bf Step 7:} Continue like that. In each case, we have a ball $C$ with boundary $\delta C$.
We make sure to chose a new point $z$ such that $C \cap B(z)$ remains a ball and 
perform additional refinements if the length of the boundary is not divisible by 3.  \\

{\bf Step 8:} Once there are no vertices in $S(x)$ left, all the interior degrees are even
and the graph $H$ is Eulerian. It can be colored by 4 colors. This colors the boundary $G$. 

\section{Code}

\paragraph{}
Here is Mathematica code which computes the ergodic components of a geodesic or for a billiard.
The code assumes that feed in a graph for which all vertex degrees are even. As a test we
try out the Octahedron graph, which has 3 ergodic components. You can copy paste the code from 
the ArXiv version of this paper. 

\begin{tiny} \lstset{language=Mathematica} \lstset{frameround=fttt} 
\begin{lstlisting}[frame=single]
UnitSphere[s_,a_]:=VertexDelete[NeighborhoodGraph[s,a],a];
ErgodicComponents[s_]:=Module[{EE=EdgeList[s],e0,e,E1,EEE={}},
CircleQ[t_]:=2 Max[GraphDistanceMatrix[t]]==Floor[Length[VertexList[t]]];
Antipodal[t_,x_]:= Module[{dist=GraphDistance[t, x], max, k, vv},
  vv = VertexList[t]; max = Max[dist]; k=Flatten[Position[dist, max]];
  H=Table[vv[[k[[j]]]], {j,Length[k]}];H];
Billiard[t_,x_]:=Module[{U=FindHamiltonianPath[t],V=VertexList[t]},
  {U[[(1+Length[U]-Position[U,x][[1,1]])]]}];
GeodesicStep[xy_]:=Module[{t,z}, t = UnitSphere[s,xy[[2]]];
  If[CircleQ[t],z = Antipodal[t, xy[[1]]], z=Billiard[t,xy[[1]]]]; 
  {xy[[2]],First[z]}];
EE=Table[Sort[{EE[[k,1]],EE[[k,2]]}],{k,Length[EE]}];
While[Length[EE]>0,e0=First[EE];e0={e0[[1]],e0[[2]]};e=GeodesicStep[e0];
  E1={e0,e}; While[Not[e==e0],e=GeodesicStep[e]; E1=Append[E1,Sort[e]]];
  EE=Complement[EE,E1]; EEE=Append[EEE,E1]; ]; EEE ];
OctahedronGraph=UndirectedGraph[Graph[
{1->2,1->3,1->4,1->5,2->3,2->4,2->6,3->5,3->6,4->5,4->6,5->6}]];
ErgodicComponents[OctahedronGraph]
\end{lstlisting} 
\end{tiny}

\paragraph{}
And here is an example of an ergodic billiard. Two ergodic components are reported, but one
is the table, the boundary of the graph. 

\begin{tiny} 
\lstset{language=Mathematica} \lstset{frameround=fttt}
\begin{lstlisting}[frame=single]
Bunimovich=UndirectedGraph[Graph[
{1->2,1->3,1->4,1->5,2->4,2->6,2->7,3->5,3->8,3->9,3->10,
3->11,4->5,4->7,4->8,4->10,4->12,4->13,4->14,4->15,5->8,
6->7,6->12,6->15,7->15,8->10,9->11,9->16,9->17,10->11,
10->13,10->16,10->17,10->18,11->17,12->14,12->15,13->14,
13->18,14->18,16->17,16->18}]];
ErgodicComponents[Bunimovich]
\end{lstlisting} 
\end{tiny}

\paragraph{}
Here is an example of an ergodic 2 torus. There is only one ergodic component. 

\begin{tiny} 
\lstset{language=Mathematica} \lstset{frameround=fttt}
\begin{lstlisting}[frame=single]
ErgodicTorus=UndirectedGraph[Graph[
{1->2,1->6,1->13,1->16,1->34,1->62,1->88,1->120,
2->6,2->13,2->14,2->26,2->66,2->110,2->134,6->34,
6->66,6->32,6->38,6->56,6->78,6->96,6->118,6->130,
6->156,13->16,13->134,13->28,13->9,13->82,13->42,
13->50,13->136,16->62,16->15,16->12,16->50,16->146,
16->90,34->120,34->5,34->10,34->130,62->88,62->52,
62->122,62->90,88->120,88->4,88->5,88->122,120->5,
14->26,14->110,14->3,14->15,14->80,14->10,14->132,
14->112,14->148,26->134,26->9,26->82,26->132,
66->110,66->3,66->80,66->126,66->78,66->158,110->80,
134->82,3->15,3->54,3->64,3->80,3->124,3->126,3->142,
3->160,15->124,15->52,15->11,15->92,15->138,15->18,
15->22,15->44,15->146,15->148,15->30,15->90,15->140,
54->162,54->142,54->160,54->8,54->46, 54->7,54->144,
64->124,64->142,64->46,64->52,64->140,124->140,126->160,
126->7,126->108,126->158,142->46, 160->7,4->162,4->5,
4->8,4->20,4->46,4->52,4->122,5->20,5->10,5->72,5->102,
8->162,8->20,8->12,8->144,8->24,8->28,8->40,8->48,
20->72,20->40,20->9,20->84,20->100,46->162,46->52,
52->122,52->90,52->140,10->102,10->96,10->118,10->130,
10->9,10->98,10->132,10->36,10->58,10->60,10->68,
10->74,10->86,10->112,10->150,72->102,72->60,72->86,
72->100,102->86,32->38,32->56,32->70,32->94,32->106,
38->96,38->58,38->94,38->116,56->156,56->7,56->11,
56->108,56->154,56->106,56->76,56->128,78->156,78->108,
78->128,78->158,96->118,96->68,96->116,118->130,
156->128,7->11,7->12,7->108,7->144,11->12,11->70,
11->92,11->138,11->154,12->144,12->24,12->138,12->18,
12->22,12->42,12->44,12->50,12->136,12->146,108->128,
108->158,24->48,24->42,28->40,28->48,28->9,28->42,
40->9,48->42,9->82,9->84,9->98,9->132,84->98,84->36,
84->60,84->100,98->36,36->60,58->68,58->74,58->70,
58->94,58->116,58->152,60->86,60->100,68->116,74->150,
74->104,74->152,74->114,112->150,112->148,112->30,
112->104,150->104,70->92,70->154,70->30,70->104,
70->94,70->106,70->76,70->152,70->114,92->30,138->44,
154->76,18->22,18->44,22->146,42->136,50->136,148->30,
30->104,104->114,106->76,152->114}]];
ErgodicComponents[ErgodicTorus]
\end{lstlisting} \end{tiny}

\paragraph{}
And here is the code which refines a graph without boundary 
so that it becomes  Eulerian. The procedure is done by geodesic cutting.
There is some randomness built in: after choosing an odd degree
vertex, we chose the direction randomly. We did that originally to 
test whether the number of ergodic components depends on the cutting. 
The code shows that it does and it also allows to try again and again
until an ergodic one is reached. 

\begin{tiny} \lstset{language=Mathematica} 
\lstset{frameround=fttt} \begin{lstlisting}[frame=single]
OddVertices[s_] := Module[{V={},v=VertexList[s],d=VertexDegree[s]},
   Do[If[OddQ[d[[k]]],V=Append[V,v[[k]]]],{k,Length[v]}]; V];
UnitSphere[s_,a_]:=VertexDelete[NeighborhoodGraph[s,a],a];
Antipodal[t_, x_] := Module[{dist = GraphDistance[t, x], max,k,vv},
   vv = VertexList[t]; max=Max[dist];k=Flatten[Position[dist, max]];
   Table[vv[[k[[j]]]],{j,Length[k]}]];
GraphSubdivide[s_,{a_,b_}]:=Module[{e1,v,e,vv,i,j,n,pp,t1,t2,sss,c,ee},
    v=VertexList[s]; e=EdgeRules[s]; n = Max[v]+1; vv=Append[v,n+1];
    t1=UnitSphere[s,a];   t2=UnitSphere[s,b];
    c=Intersection[VertexList[t1],VertexList[t2]];
    ee=Union[e,{(n+1)->a,(n+1)->b},Table[(n+1)->c[[k]],{k,Length[c]}]];
    ee=Complement[ee,{a->b,b->a}]; UndirectedGraph[Graph[ee]]]
MakeEulerian[s_]:=Module[{t=s}, SelfHeal:=Module[{},
   T[{x_,y_}]:= Module[{z,h,v,w,xx,yy},h=UnitSphere[t,y];
   z = Antipodal[h, x]; v = VertexList[t];
   If[Length[z]==1, xx=y; yy=z[[1]]; w=v,
    xx=y; t=GraphSubdivide[t,z]; w=v; v=VertexList[t];
    yy=First[Complement[v,w]]]; {xx,yy}]; vv=OddVertices[t];
   If[Length[vv]>0, y0=RandomChoice[vv];
     x0 = RandomChoice[VertexList[UnitSphere[t,y0]]];
     {x, y} = T[{x0,y0}]; X = {{x0,y0}, {x,y}};
     While[Not[MemberQ[vv,y]],{x,y}=T[{x, y}]; X=Append[X,{x,y}]]]];
   While[Length[OddVertices[t]] > 0,SelfHeal];
t];
IcosaGraph=UndirectedGraph[Graph[
{1->2,1->3,1->4,1->5,1->6,2->5,2->6,2->9,2->10,3->4,3->5,3->8,
3->11,4->6,4->8,4->12,5->9,5->11,6->10,6->12,7->8,7->9,7->10,
7->11,7->12,8->11,8->12,9->10,9->11,10->12}]];
Do[Print[Length[ErgodicComponents[MakeEulerian[IcosaGraph]]]],{9}]
\end{lstlisting} \end{tiny}

\bibliographystyle{plain}

\end{document}